\newcommand{\HOLTokenObsCongr}{$\approx^c$}
\newcommand{\HOLTokenEPS}{$\overset{\epsilon}{\Rightarrow}$}
\begin{document}

\title{Further Formalization of the Process Algebra CCS in HOL4}
\author{Chun Tian}
\institute{Scuola di Scienze, Universit\`{a} di Bologna\\
\email{chun.tian@studio.unibo.it}\\
Numero di matricola: 0000735539}
\maketitle

\begin{abstract}
In this project, we have extended previous work on the formalization
of the process algebra CCS in HOL4. We have added full supports on weak
bisimulation equivalence and observation congruence (rooted weak equivalence), with
related definitions, theorems and algebraic laws.
Some deep lemmas were also formally proved in this
project, including Deng Lemma, Hennessy Lemma and several versions of the
``Coarsest congruence contained in weak equivalence''. For the last
theorem, we have proved the full version (without any assumption) based on ordinals.
\end{abstract}

\section{Introduction}

The current project is a further extension of a previous project
\cite{Tian:2017vwba} on the formalization of the process algebra CCS
in HOL Theorem Prover (HOL4). In the previous work, we have successfully
covered the (strong) transitions of CCS processes, strong bisimulation
equivalence and have formally proved all the strong algebraic laws, including the expansion
law. But this is not a complete work for the formalization of CCS, as
in most model
checking cases, the specification and implementation of the same model
has only (rooted) weak equivalence. Thus a further extension to
previous work seems meaningful.

This project is still based on old Hol88 proof scripts written by Monica
Nesi, but it's not simply a porting of the remain old scripts
without new creations. Instead, we have essentially modified many
fundamental definitions and have proved many new theorems. And
with the changed definitions based on HOL4's rich theory library, previous
unprovable theorems now are provable. There're totally 200 theorems
and definitions in the project, now about 100 of them were newly
defined and proved by the author.

Below is a brief summary of changes and new features comparing with the old work:
\begin{enumerate}
\item We have extended the datatype definitions of CCS processes and
  transition actions, replacing all strings into general type
  variables. \footnote{This is not new invention, Prof. Nesi has done
    a similar change in her formalization of Value-passing CCS
     in 1999 \cite{Nesi:2017wo}. But work seems also done in Hol88,
    and related code is not available on Internet.} As a result, now it's
  possible to do reasoning on processes with limited number of actions
  and constants. In academic, the notation $CCS(a, b)$ represents the CCS
  subcalculus which can use at most $h$ constants 
and $k$ actions, and some important results hold for only certain CCS
subcalculus (e.g. \cite{gorriericcs}). With the new CCS datatype, now our
formalization has the ability to reason on 
this kind of CCS subclasulus. For almost all theorems and algebraic
laws, such a change has no affects, the only
exception is the ``coarsest congruence contained in $\approx$`` (Theorem
4.5 in \cite{Gorrieri:2015jt} or Proposition 3 in Chapter 7 of
\cite{Milner:2017tw}), in which the assumption is not automatically
true if the set of labels were finite. (We present two formal proofs of
this theorem with different assumptions in details in this project.)

\item We have completely turned to use HOL4's built-in supports of
  coinductive relations (\texttt{Hol_coreln}) for defining strong and weak bisimulation
  equivalences. As a result, many intermediate definitions and theorems towards the
  proof of \emph{Property (*) of strong and weak equivalence}
  \footnote{HOL theorem names: \texttt{PROPERTY_STAR} and \texttt{OBS_PROPERTY_STAR}} are
  not needed thus removed from the project.

\item We have extensively used HOL4's existing relationTheory and the
  supports of RTC (reflexive transitive closure) for defining the weak
  transitions of CCS processes. As a result, a large amount of cases
  and induction theorems were automatically available. It will show
  that, without these extra theorems (especially the right induction
  theorem) it's impossible to prove the transitivity of observation
  congruence.

\item In addition, we also formally proved the Hennessy Lemma and Deng Lemma (the weak
  equivalence version), which shows deep relations between weak equivalence and observation
  congruence. We used Deng Lemma to prove the hard part of Hennessy
  Lemma, therefore minimized the related proof scripts. These results
  have demonstrated that the author now has ability to convert
most informal proofs in CCS into formal proofs.

\item Some important theorems were not proved in the old Hol88 work,
  notable ones include:
 1) the transitivity of observation congruence
 (\texttt{OBS_CONGR_TRANS}) and 2) the
  coarsest congruence containing in weak equivalence (\texttt{PROP3}). In this project,
  we have finished these proofs with new related lemmas stated and
  proved.
 As for the `` coarsest congruence containing in weak
  equivalence'' theorem, we have successfully proved a stronger
  version (without any assumption) for finite-state CCS based on a new
  proof \cite{vanGlabbeek:2005ur} published by J. R. van Glabbeek in
  2005. Then we use HOL's ordinal theory and an axiomatized support of
  infinite sums of CCS to prove the full version for general CCS processes.

\item A rather complete theory of congruence on CCS has been newly
  built in this project. We have used
  this theory to explain the meaning of `` coarsest congruence containing in weak
  equivalence'' theorem.
\end{enumerate}

With these new additions, now the whole project has established the
comprehensive theory for pure CCS with all major results included. The
project can either be used as the theoretical basis for discovering new
results about CCS, or further developed into a model checking software
running in theorem prover.

This project is now part of official HOL4
repository\footnote{\url{https://github.com/HOL-Theorem-Prover/HOL}},
  under the directory \texttt{examples/CCS}.

\section{Background}

Our work in this and previous project were based on an old CCS formalization
\cite{Nesi:1992ve} built in Hol88 theorem prover (ancestry of HOL4), by
Monica Nesi during 1992-1995. As noted already in Background section
of the report of our previous project, The related proof scripts
mentioned in the publications of Prof. Nesi is not available on
Internet, but on June 7, 2016, Professor Nesi sent some old proof
scripts to the author in private, soon after the author asked for
these scripts in HOL mailing list. But these scripts did not include
any formalization for weak equivalence, rooted weak equivalence and
other things (e.g. HML) mentioned in her paper. At the beginning the
author thought that the rest scripts must have been lost, but it
turned out that this is not true.

On May 15, 2017, almost immediately after the author announced the
finish of the previous project to all related people, Prof. Nesi
replied the mail with the following contents:

\begin{quote}
``Dear Chun Tian,

Thanks a lot for your message, I am happy you were successful in your work! 
I will try and read your report as soon as I can, but in the meantime I would like 
to point out that my files on weak bisimulation, weak equivalence, observation 
congruence, modal logic, etc., are not lost.
In a mail to you (dated Jun 7th, 2016) I just sent you the first bunch of  files to 
start with. You said you were still learning HOL, so I thought it better not to 
"flood" you with all my files.
I don't know what your plans are now, but I would be glad to send you
other files on CCS in HOL if you fancy going on with this work.

Best regards,
Monica''
\end{quote}

Then it became obvious that, another further project on this topic should
be done in scope of the ``tirocinio'' (training) project \footnote{It is an obligatory part in
the author's study plan of Master degree in Computer Science} under the
supervision of Prof. Roberto Gorrieri in University of Bologna. And
instead of creating everything from
scratch, the author has another bunch of old scripts to start with. This is
a great advantage for doing another successful project.. After
having expressed such willings to Prof. Nesi, finally on
June 6, 2017, the author has received all the rest old proof script on
the formalization of pure CCS, covering weak bisimulation, weak
equivalence, observation congruence and HML. There're totally about
4000 lines of Classic ML code.

The old formalization of Hennessy-Milner Logic (for CCS) is not ported
into HOL4 in this project, because our focus in current project is mainly at the theorem proving
aspects, i.e. the proof of some deep theorems related to weak bisimulation equivalence
and observation congruence (rooted weak bisimulation
equivalence).  Actually, we have put aside one of the initial project
goals, i.e. creating a new model checking tool running in HOL theorem
prover. Instead, we have focused on pure theorem-proving staff in this
project, and deeply researched the current proofs for several important
theorems and the precise requirements to make these theorems hold.

\section{Extended CCS datatypes}

The type of CCS processes has been extended with two type variables:
$\alpha$ and $\beta$. $\alpha$ denotes the type of constants, and
$\beta$ denotes the type of labels. In HOL, such a higher order type is represented
as ``\HOLinline{(\ensuremath{\alpha}, \ensuremath{\beta}) \HOLTyOp{CCS}}''.  Whenever both type variables were
instantiated as \HOLinline{\HOLTyOp{string}}, the resulting type ``\HOLinline{(\HOLTyOp{string}, \HOLTyOp{string}) \HOLTyOp{CCS}}'' is equivalent with the \texttt{CCS} datatype in previous project.
Within the new settings, to represent CCS subcalculus like
$CCS(25,12)$, custom datatypes with limited number of
instances can be defined by users\footnote{In HOL, there's already a
  single-instance type \HOLinline{\HOLTyOp{unit}}, and a two-valued type
  \HOLinline{\HOLTyOp{bool}}, further custom datatypes can be defined by
  \texttt{Define} command.}. However we didn't go further in this
direction.

The type of transition labels were extended by type variable $\beta$,
the resulting new type is ``\HOLinline{\ensuremath{\beta} \HOLTyOp{Label}}'' in HOL. It's important
to notice that, for each possible value \HOLinline{\HOLFreeVar{l}} of the type
\HOLinline{\ensuremath{\beta}}, both ``\HOLinline{\HOLConst{name} \HOLFreeVar{l}}'' and ``\HOLinline{\HOLConst{coname} \HOLFreeVar{l}}'' are valid
labels, therefore the totally available number of labels are doubled
with the cardinality of the set of all possible values of type
\HOLinline{\ensuremath{\beta}}. Also noticed that, the
invisible action $\tau$ is part of the type ``\HOLinline{\ensuremath{\beta} \HOLTyOp{Action}}'', which
contains both $\tau$ and ``\HOLinline{\ensuremath{\beta} \HOLTyOp{Label}}'' values (wrapped by
constructor \HOLinline{\HOLConst{label}}. Thus if we count the number of all
possible \emph{actions} of the type ``\HOLinline{(\ensuremath{\alpha}, \ensuremath{\beta}) \HOLTyOp{CCS}}'', it
should be the doubled cardinality of type \HOLinline{\ensuremath{\alpha}} plus one.

Finally, it should be noticed that, in HOL, each valid type must
contain at least one value, thus in the minimal setting, there're
still three valid actions: $\tau$, the singleton input action and
output action. Whenever a CCS related theorem requires that ``there's
at least one non-$tau$ action'', such a requirement
can be omited from the assumptions of the theorem, because it's
automatically satisfied.

\section{Weak transitions and the EPS relation}

In previous project \cite{Tian:2017vwba}, we have discussed the advantage to use
\HOLinline{\HOLConst{EPS}} and \HOLinline{\HOLConst{WEAK_TRANS}} (instead of \HOLinline{\HOLConst{WEAK_TRACE}} used
in \cite{Gorrieri:2015jt}) for defining weak bisimulation, weak
bisimulation equivalence and observation congruence.  But we didn't
prove any theorem about \HOLinline{\HOLConst{EPS}} and \HOLinline{\HOLConst{WEAK_TRANS}} in previous
project. In this project, we have slightly changed the definition of
\HOLinline{\HOLConst{EPS}} with the helper definition \texttt{EPS0}
removed\footnote{After the author has learnt to use
  $\lambda$-expressions to express relations}:
\begin{definition}{(EPS)}
For any two CCS processes $E, E' \in Q$, define relation $EPS \subseteq
Q\times Q$ as the reflexive transitive closure (RTC) of single-$\tau$
transition between $E$ and $E'$ ($E \overset{\tau}{\longrightarrow} E'$):\footnote{In HOL4's
  \texttt{relationTheory}, the relation types is
  curried: instead of having the same type ``\HOLinline{\ensuremath{\alpha} \HOLTyOp{reln}}'' as the math definition, it has the type ``\HOLinline{\ensuremath{\alpha} -> \ensuremath{\alpha} -> \HOLTyOp{bool}}''. And the star(*) notation is for defining RTCs.}
\begin{alltt}
\HOLTokenTurnstile{} \HOLConst{EPS} \HOLSymConst{=} (\HOLTokenLambda{}\HOLBoundVar{E} \HOLBoundVar{E\sp{\prime}}. \HOLBoundVar{E} --\HOLSymConst{\ensuremath{\tau}}-> \HOLBoundVar{E\sp{\prime}})\HOLSymConst{\HOLTokenSupStar{}}
\end{alltt}
Intuitively speaking, \HOLinline{\HOLFreeVar{E} \HOLSymConst{\HOLTokenEPS} \HOLFreeVar{E\sp{\prime}}} (Math notion: $E
\overset{\epsilon}{\Longrightarrow} E'$) means there're zero or more $tau$-transitions from $p$ to $q$.
\end{definition}

Sometimes it's necessary to consider different transition cases when
\HOLinline{\HOLFreeVar{p} \HOLSymConst{\HOLTokenEPS} \HOLFreeVar{q}} holds, or induct on the number of $tau$ transitions
between $p$ and $q$. With such a
definition, beside the obvious reflexive and transitive
properties, a large amount of ``cases'' and induction theorem
already proved in HOL's \texttt{relationTheory} are immediately
available to us:
\begin{proposition}{(The ``cases'' theorem of the \HOLinline{\HOLConst{EPS}} relation)}
\begin{alltt}
\HOLTokenTurnstile{} \HOLFreeVar{x} \HOLSymConst{\HOLTokenEPS} \HOLFreeVar{y} \HOLSymConst{\HOLTokenEquiv{}} \HOLFreeVar{x} \HOLSymConst{=} \HOLFreeVar{y} \HOLSymConst{\HOLTokenDisj{}} \HOLSymConst{\HOLTokenExists{}}\HOLBoundVar{u}. \HOLFreeVar{x} --\HOLSymConst{\ensuremath{\tau}}-> \HOLBoundVar{u} \HOLSymConst{\HOLTokenConj{}} \HOLBoundVar{u} \HOLSymConst{\HOLTokenEPS} \HOLFreeVar{y} \hfill[EPS_cases1]
\HOLTokenTurnstile{} \HOLFreeVar{x} \HOLSymConst{\HOLTokenEPS} \HOLFreeVar{y} \HOLSymConst{\HOLTokenEquiv{}} \HOLFreeVar{x} \HOLSymConst{=} \HOLFreeVar{y} \HOLSymConst{\HOLTokenDisj{}} \HOLSymConst{\HOLTokenExists{}}\HOLBoundVar{u}. \HOLFreeVar{x} \HOLSymConst{\HOLTokenEPS} \HOLBoundVar{u} \HOLSymConst{\HOLTokenConj{}} \HOLBoundVar{u} --\HOLSymConst{\ensuremath{\tau}}-> \HOLFreeVar{y} \hfill[EPS_cases2]
\HOLTokenTurnstile{} \HOLFreeVar{E} \HOLSymConst{\HOLTokenEPS} \HOLFreeVar{E\sp{\prime}} \HOLSymConst{\HOLTokenEquiv{}} \HOLFreeVar{E} --\HOLSymConst{\ensuremath{\tau}}-> \HOLFreeVar{E\sp{\prime}} \HOLSymConst{\HOLTokenDisj{}} \HOLFreeVar{E} \HOLSymConst{=} \HOLFreeVar{E\sp{\prime}} \HOLSymConst{\HOLTokenDisj{}} \HOLSymConst{\HOLTokenExists{}}\HOLBoundVar{E\sb{\mathrm{1}}}. \HOLFreeVar{E} \HOLSymConst{\HOLTokenEPS} \HOLBoundVar{E\sb{\mathrm{1}}} \HOLSymConst{\HOLTokenConj{}} \HOLBoundVar{E\sb{\mathrm{1}}} \HOLSymConst{\HOLTokenEPS} \HOLFreeVar{E\sp{\prime}} \hfill[EPS_cases]
\end{alltt}
\end{proposition}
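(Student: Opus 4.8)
The plan is to derive all three statements directly from the generic reflexive–transitive–closure machinery already present in HOL4's \texttt{relationTheory}, since \HOLinline{\HOLConst{EPS}} is \emph{defined} as the RTC of the single-$\tau$ transition relation. Write $R$ for the relation $\lambda E\, E'.\ E \mathrel{-\!\tau\!\to} E'$, so that by definition \HOLinline{\HOLConst{EPS}} is $R^*$. Each of the three theorems is then an instance of a preexisting case-analysis lemma about $R^*$, with the abbreviation $R$ unfolded back into the concrete $\tau$-transition. In practice this means the proofs should be short: rewrite with the definition of \HOLinline{\HOLConst{EPS}} to expose the RTC, apply the relevant library theorem, and rewrite back.

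First, for \texttt{EPS\_cases1}, I would invoke the left-recursive characterization of RTC, typically named \texttt{RTC\_CASES1} in \texttt{relationTheory}, which states $R^*\, x\, y \iff x = y \lor \exists u.\ R\, x\, u \land R^*\, u\, y$. Instantiating $R$ and folding $R^*$ back to \HOLinline{\HOLConst{EPS}} yields exactly the stated equivalence. Symmetrically, \texttt{EPS\_cases2} follows from the right-recursive version, \texttt{RTC\_CASES2}, which gives $R^*\, x\, y \iff x = y \lor \exists u.\ R^*\, x\, u \land R\, u\, y$; again instantiate and fold. These two are essentially immediate once the correct library lemma is identified, and the only "work" is bookkeeping: matching the curried relation type and performing the rewrites in the right order so that the single-$\tau$ step $R$ and the closure $R^*$ each display in their intended CCS notation.

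The third theorem, \texttt{EPS\_cases}, is a hybrid: it splits $E \overset{\epsilon}{\Rightarrow} E'$ into a single $\tau$-step, equality, or a decomposition into two \HOLinline{\HOLConst{EPS}} segments. I would derive this from the library lemma that separates one RTC step from the rest combined with transitivity/idempotence of RTC (\texttt{RTC\_RTC} or \texttt{RTC\_CASES\_RTC\_TWICE}, depending on what is available), noting that $R^*$ composed with $R^*$ is again $R^*$. Concretely, the forward direction uses a cases lemma to peel off either the reflexive case ($E = E'$) or a first step $E \mathrel{-\!\tau\!\to} E_1$ with $E_1 \overset{\epsilon}{\Rightarrow} E'$, and then the single step can be re-absorbed into an \HOLinline{\HOLConst{EPS}} segment; the backward direction uses that a single $\tau$-step is an \HOLinline{\HOLConst{EPS}} (one application of the closure), reflexivity gives the equality case, and transitivity closes the two-segment case.

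The main obstacle here is not mathematical depth but \emph{lemma selection and notation management}: identifying precisely which of the several RTC case-analysis theorems in \texttt{relationTheory} matches each statement, and arranging the rewrites so the curried relation \HOLinline{\ensuremath{\alpha} -> \ensuremath{\alpha} -> \HOLTyOp{bool}} unfolds and refolds cleanly without leaving stray $\lambda$-abstractions or $R^*$ occurrences in the goal. Because \HOLinline{\HOLConst{EPS}} is literally an RTC, no induction needs to be set up by hand and no genuinely new reasoning is required — which is exactly the payoff the paper highlights for having built \HOLinline{\HOLConst{EPS}} on top of HOL4's \texttt{relationTheory} rather than via a bespoke recursive definition.
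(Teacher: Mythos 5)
Your proposal matches the paper's own treatment: the paper gives no explicit proof of this proposition but states that, because \texttt{EPS} is defined as the RTC of the single-$\tau$ relation, these ``cases'' theorems are immediately inherited from HOL4's \texttt{relationTheory}, which is exactly the instantiate-and-fold strategy you describe via \texttt{RTC\_CASES1}, \texttt{RTC\_CASES2}, and the RTC-composition lemmas. The approach is correct and essentially identical.
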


\begin{proposition}{(The induction and strong induction principles of
   the \HOLinline{\HOLConst{EPS}} relation)}
\begin{alltt}
\HOLTokenTurnstile{} (\HOLSymConst{\HOLTokenForall{}}\HOLBoundVar{x}. \HOLFreeVar{P} \HOLBoundVar{x} \HOLBoundVar{x}) \HOLSymConst{\HOLTokenConj{}} (\HOLSymConst{\HOLTokenForall{}}\HOLBoundVar{x} \HOLBoundVar{y} \HOLBoundVar{z}. \HOLBoundVar{x} --\HOLSymConst{\ensuremath{\tau}}-> \HOLBoundVar{y} \HOLSymConst{\HOLTokenConj{}} \HOLFreeVar{P} \HOLBoundVar{y} \HOLBoundVar{z} \HOLSymConst{\HOLTokenImp{}} \HOLFreeVar{P} \HOLBoundVar{x} \HOLBoundVar{z}) \HOLSymConst{\HOLTokenImp{}}
   \HOLSymConst{\HOLTokenForall{}}\HOLBoundVar{x} \HOLBoundVar{y}. \HOLBoundVar{x} \HOLSymConst{\HOLTokenEPS} \HOLBoundVar{y} \HOLSymConst{\HOLTokenImp{}} \HOLFreeVar{P} \HOLBoundVar{x} \HOLBoundVar{y} \hfill[EPS_ind]
\HOLTokenTurnstile{} (\HOLSymConst{\HOLTokenForall{}}\HOLBoundVar{x}. \HOLFreeVar{P} \HOLBoundVar{x} \HOLBoundVar{x}) \HOLSymConst{\HOLTokenConj{}} (\HOLSymConst{\HOLTokenForall{}}\HOLBoundVar{x} \HOLBoundVar{y} \HOLBoundVar{z}. \HOLBoundVar{x} --\HOLSymConst{\ensuremath{\tau}}-> \HOLBoundVar{y} \HOLSymConst{\HOLTokenConj{}} \HOLBoundVar{y} \HOLSymConst{\HOLTokenEPS} \HOLBoundVar{z} \HOLSymConst{\HOLTokenConj{}} \HOLFreeVar{P} \HOLBoundVar{y} \HOLBoundVar{z} \HOLSymConst{\HOLTokenImp{}} \HOLFreeVar{P} \HOLBoundVar{x} \HOLBoundVar{z}) \HOLSymConst{\HOLTokenImp{}}
   \HOLSymConst{\HOLTokenForall{}}\HOLBoundVar{x} \HOLBoundVar{y}. \HOLBoundVar{x} \HOLSymConst{\HOLTokenEPS} \HOLBoundVar{y} \HOLSymConst{\HOLTokenImp{}} \HOLFreeVar{P} \HOLBoundVar{x} \HOLBoundVar{y} \hfill[EPS_strongind]
\HOLTokenTurnstile{} (\HOLSymConst{\HOLTokenForall{}}\HOLBoundVar{x}. \HOLFreeVar{P} \HOLBoundVar{x} \HOLBoundVar{x}) \HOLSymConst{\HOLTokenConj{}} (\HOLSymConst{\HOLTokenForall{}}\HOLBoundVar{x} \HOLBoundVar{y} \HOLBoundVar{z}. \HOLFreeVar{P} \HOLBoundVar{x} \HOLBoundVar{y} \HOLSymConst{\HOLTokenConj{}} \HOLBoundVar{y} --\HOLSymConst{\ensuremath{\tau}}-> \HOLBoundVar{z} \HOLSymConst{\HOLTokenImp{}} \HOLFreeVar{P} \HOLBoundVar{x} \HOLBoundVar{z}) \HOLSymConst{\HOLTokenImp{}}
   \HOLSymConst{\HOLTokenForall{}}\HOLBoundVar{x} \HOLBoundVar{y}. \HOLBoundVar{x} \HOLSymConst{\HOLTokenEPS} \HOLBoundVar{y} \HOLSymConst{\HOLTokenImp{}} \HOLFreeVar{P} \HOLBoundVar{x} \HOLBoundVar{y} \hfill[EPS_ind_right]
\HOLTokenTurnstile{} (\HOLSymConst{\HOLTokenForall{}}\HOLBoundVar{x}. \HOLFreeVar{P} \HOLBoundVar{x} \HOLBoundVar{x}) \HOLSymConst{\HOLTokenConj{}} (\HOLSymConst{\HOLTokenForall{}}\HOLBoundVar{x} \HOLBoundVar{y} \HOLBoundVar{z}. \HOLFreeVar{P} \HOLBoundVar{x} \HOLBoundVar{y} \HOLSymConst{\HOLTokenConj{}} \HOLBoundVar{x} \HOLSymConst{\HOLTokenEPS} \HOLBoundVar{y} \HOLSymConst{\HOLTokenConj{}} \HOLBoundVar{y} --\HOLSymConst{\ensuremath{\tau}}-> \HOLBoundVar{z} \HOLSymConst{\HOLTokenImp{}} \HOLFreeVar{P} \HOLBoundVar{x} \HOLBoundVar{z}) \HOLSymConst{\HOLTokenImp{}}
   \HOLSymConst{\HOLTokenForall{}}\HOLBoundVar{x} \HOLBoundVar{y}. \HOLBoundVar{x} \HOLSymConst{\HOLTokenEPS} \HOLBoundVar{y} \HOLSymConst{\HOLTokenImp{}} \HOLFreeVar{P} \HOLBoundVar{x} \HOLBoundVar{y} \hfill[EPS_strongind_right]
\HOLTokenTurnstile{} (\HOLSymConst{\HOLTokenForall{}}\HOLBoundVar{E} \HOLBoundVar{E\sp{\prime}}. \HOLBoundVar{E} --\HOLSymConst{\ensuremath{\tau}}-> \HOLBoundVar{E\sp{\prime}} \HOLSymConst{\HOLTokenImp{}} \HOLFreeVar{P} \HOLBoundVar{E} \HOLBoundVar{E\sp{\prime}}) \HOLSymConst{\HOLTokenConj{}} (\HOLSymConst{\HOLTokenForall{}}\HOLBoundVar{E}. \HOLFreeVar{P} \HOLBoundVar{E} \HOLBoundVar{E}) \HOLSymConst{\HOLTokenConj{}}
   (\HOLSymConst{\HOLTokenForall{}}\HOLBoundVar{E} \HOLBoundVar{E\sb{\mathrm{1}}} \HOLBoundVar{E\sp{\prime}}. \HOLFreeVar{P} \HOLBoundVar{E} \HOLBoundVar{E\sb{\mathrm{1}}} \HOLSymConst{\HOLTokenConj{}} \HOLFreeVar{P} \HOLBoundVar{E\sb{\mathrm{1}}} \HOLBoundVar{E\sp{\prime}} \HOLSymConst{\HOLTokenImp{}} \HOLFreeVar{P} \HOLBoundVar{E} \HOLBoundVar{E\sp{\prime}}) \HOLSymConst{\HOLTokenImp{}}
   \HOLSymConst{\HOLTokenForall{}}\HOLBoundVar{x} \HOLBoundVar{y}. \HOLBoundVar{x} \HOLSymConst{\HOLTokenEPS} \HOLBoundVar{y} \HOLSymConst{\HOLTokenImp{}} \HOLFreeVar{P} \HOLBoundVar{x} \HOLBoundVar{y} \hfill[EPS_INDUCT]
\end{alltt}
\end{proposition}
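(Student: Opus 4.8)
The plan is to observe that, by definition, \HOLinline{\HOLConst{EPS}} is nothing but the reflexive transitive closure $R^*$ of the relation $R = (\lambda E\, E'.\ E \overset{\tau}{\longrightarrow} E')$. Consequently every one of these five principles should be obtained, with essentially no creative work, as a direct instance of a generic RTC induction theorem already present in HOL4's \texttt{relationTheory}: I would specialize that theorem's relation parameter to $R$, then rewrite with the definition of \HOLinline{\HOLConst{EPS}} and $\beta$-reduce so that each occurrence of $R\,E\,E'$ collapses to the single-$\tau$ transition $E \overset{\tau}{\longrightarrow} E'$.

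Concretely, the correspondence I expect is: \texttt{EPS\_ind} is the instance of the standard left induction \texttt{RTC\_INDUCT}; \texttt{EPS\_strongind} is the instance of \texttt{RTC\_STRONG\_INDUCT}; and the two right-handed variants \texttt{EPS\_ind\_right} and \texttt{EPS\_strongind\_right} are the instances of \texttt{RTC\_INDUCT\_RIGHT1} and \texttt{RTC\_STRONG\_INDUCT\_RIGHT1} respectively. In each case the mechanical steps are the same: \texttt{ISPEC} (or \texttt{MATCH\_MP}) the relation $R$ into the generic theorem, unfold \HOLinline{\HOLConst{EPS}}, and simplify.

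The one principle that is not simply a renamed RTC theorem is \texttt{EPS\_INDUCT}, whose three premises express closure under single $\tau$-steps, reflexivity, and \emph{transitivity} of $P$. This is not the primitive rule-induction shape of RTC (whose inductive clauses are only reflexivity and a one-step prefix), so I would derive it from \texttt{EPS\_ind}. Given the three hypotheses of \texttt{EPS\_INDUCT}, I instantiate \texttt{EPS\_ind} with the same predicate $P$: its reflexivity obligation $\forall x.\ P\,x\,x$ is literally the second hypothesis, and its step obligation $\forall x\,y\,z.\ x \overset{\tau}{\longrightarrow} y \wedge P\,y\,z \Rightarrow P\,x\,z$ is discharged by first applying the single-step hypothesis to $x \overset{\tau}{\longrightarrow} y$ to obtain $P\,x\,y$, and then applying the transitivity hypothesis to $P\,x\,y$ and $P\,y\,z$ to conclude $P\,x\,z$.

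The proof carries no real mathematical difficulty; the only obstacle is bookkeeping. I must make sure the $\beta$-reduction of the instantiated $\lambda$-abstraction is actually performed so that $R$ reappears as the expected transition, and that the argument and quantifier orders line up with the library statements; I also rely on the right-handed and strong RTC induction theorems being available under the anticipated names. Should any of the right-handed variants be missing, the fallback is to derive them from the left-handed ones together with the transitivity of \HOLinline{\HOLConst{EPS}} (itself a specialization of \texttt{RTC\_RTC} / \texttt{RTC\_TRANSITIVE}), in the same spirit as the \texttt{EPS\_INDUCT} derivation above.
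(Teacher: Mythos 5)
Your proposal is correct and takes essentially the same route as the paper: since \HOLinline{\HOLConst{EPS}} is defined as the RTC of the single-$\tau$ transition relation, the paper simply inherits all of these principles from HOL4's \texttt{relationTheory} (its standard, strong, and right-handed RTC induction theorems), exactly as you describe. Your additional observation that \texttt{EPS\_INDUCT}, with its transitivity premise, is not a primitive RTC induction instance and needs a small derivation (e.g.\ from \texttt{EPS\_ind}, or from a lifting lemma such as the one for reflexive--transitive relations) is a sound refinement of the same idea, which the paper glosses over by calling everything ``immediately available.''
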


Then we define the weak transition between two CCS processes upon the
\HOLinline{\HOLConst{EPS}} relation:
\begin{definition}
For any two CCS processes $E, E' \in Q$, define ``weak transition'' relation $\Longrightarrow \subseteq
Q\times A \times Q$, where A can be $\tau$ or a visible action:
$E \overset{a}{\longrightarrow} E'$ if and only if there exists two processes
$E_1$ and $E_2$ such that $E \overset{\epsilon}{\Longrightarrow} E_1
\overset{a}{\longrightarrow} E_2 \overset{\epsilon}{\Longrightarrow} E'$:
\begin{alltt}
\HOLTokenTurnstile{} \HOLFreeVar{E} ==\HOLFreeVar{u}=>> \HOLFreeVar{E\sp{\prime}} \HOLSymConst{\HOLTokenEquiv{}} \HOLSymConst{\HOLTokenExists{}}\HOLBoundVar{E\sb{\mathrm{1}}} \HOLBoundVar{E\sb{\mathrm{2}}}. \HOLFreeVar{E} \HOLSymConst{\HOLTokenEPS} \HOLBoundVar{E\sb{\mathrm{1}}} \HOLSymConst{\HOLTokenConj{}} \HOLBoundVar{E\sb{\mathrm{1}}} --\HOLFreeVar{u}-> \HOLBoundVar{E\sb{\mathrm{2}}} \HOLSymConst{\HOLTokenConj{}} \HOLBoundVar{E\sb{\mathrm{2}}} \HOLSymConst{\HOLTokenEPS} \HOLFreeVar{E\sp{\prime}}\hfill[WEAK_TRANS]
\end{alltt}
\end{definition}

Using above two definitions and the ``cases'' and induction theorems,
a large amount of properties about \HOLinline{\HOLConst{EPS}} and \HOLinline{\HOLConst{WEAK_TRANS}}
were proved:
\begin{proposition}{(Properties of \HOLinline{\HOLConst{EPS}} and \HOLinline{\HOLConst{WEAK_TRANS}})}
\begin{enumerate}
\item Any transition also implies a weak transition:
\begin{alltt}
\HOLTokenTurnstile{} \HOLFreeVar{E} --\HOLFreeVar{u}-> \HOLFreeVar{E\sp{\prime}} \HOLSymConst{\HOLTokenImp{}} \HOLFreeVar{E} ==\HOLFreeVar{u}=>> \HOLFreeVar{E\sp{\prime}}\hfill [TRANS_IMP_WEAK_TRANS]
\end{alltt}

\item Weak $\tau$-transition  implies \HOLinline{\HOLConst{EPS}} relation:
\begin{alltt}
\HOLTokenTurnstile{} \HOLFreeVar{E} ==\HOLSymConst{\ensuremath{\tau}}=>> \HOLFreeVar{E\sp{\prime}} \HOLSymConst{\HOLTokenImp{}} \HOLFreeVar{E} \HOLSymConst{\HOLTokenEPS} \HOLFreeVar{E\sp{\prime}}\hfill [WEAK_TRANS_TAU]
\end{alltt}

\item $\tau$-transition implies \HOLinline{\HOLConst{EPS}} relation:
\begin{alltt}
\HOLTokenTurnstile{} \HOLFreeVar{E} --\HOLSymConst{\ensuremath{\tau}}-> \HOLFreeVar{E\sp{\prime}} \HOLSymConst{\HOLTokenImp{}} \HOLFreeVar{E} \HOLSymConst{\HOLTokenEPS} \HOLFreeVar{E\sp{\prime}}\hfill [TRANS_TAU_IMP_EPS]
\end{alltt}

\item Weak $\tau$-transition implies an $\tau$ transition followed by
  EPS transition:
\begin{alltt}
\HOLTokenTurnstile{} \HOLFreeVar{E} ==\HOLSymConst{\ensuremath{\tau}}=>> \HOLFreeVar{E\sp{\prime}} \HOLSymConst{\HOLTokenImp{}} \HOLSymConst{\HOLTokenExists{}}\HOLBoundVar{E\sb{\mathrm{1}}}. \HOLFreeVar{E} --\HOLSymConst{\ensuremath{\tau}}-> \HOLBoundVar{E\sb{\mathrm{1}}} \HOLSymConst{\HOLTokenConj{}} \HOLBoundVar{E\sb{\mathrm{1}}} \HOLSymConst{\HOLTokenEPS} \HOLFreeVar{E\sp{\prime}}\hfill[WEAK_TRANS_TAU_IMP_TRANS_TAU]
\end{alltt}

\item \HOLinline{\HOLConst{EPS}} implies $\tau$-prefixed \HOLinline{\HOLConst{EPS}}:
\begin{alltt}
\HOLTokenTurnstile{} \HOLFreeVar{E} \HOLSymConst{\HOLTokenEPS} \HOLFreeVar{E\sp{\prime}} \HOLSymConst{\HOLTokenImp{}} \HOLSymConst{\ensuremath{\tau}}\HOLSymConst{..}\HOLFreeVar{E} \HOLSymConst{\HOLTokenEPS} \HOLFreeVar{E\sp{\prime}}\hfill [TAU_PREFIX_EPS]
\end{alltt}

\item Weak $\tau$-transition implies $\tau$-prefixed weak:
  $\tau$-transition:
\begin{alltt}
\HOLTokenTurnstile{} \HOLFreeVar{E} ==\HOLFreeVar{u}=>> \HOLFreeVar{E\sp{\prime}} \HOLSymConst{\HOLTokenImp{}} \HOLSymConst{\ensuremath{\tau}}\HOLSymConst{..}\HOLFreeVar{E} ==\HOLFreeVar{u}=>> \HOLFreeVar{E\sp{\prime}} \hfill [TAU_PREFIX_WEAK_TRANS]
\end{alltt}

\item A weak transition wrapped by EPS transitions is still a weak
  transition:
\begin{alltt}
\HOLTokenTurnstile{} \HOLFreeVar{E} \HOLSymConst{\HOLTokenEPS} \HOLFreeVar{E\sb{\mathrm{1}}} \HOLSymConst{\HOLTokenConj{}} \HOLFreeVar{E\sb{\mathrm{1}}} ==\HOLFreeVar{u}=>> \HOLFreeVar{E\sb{\mathrm{2}}} \HOLSymConst{\HOLTokenConj{}} \HOLFreeVar{E\sb{\mathrm{2}}} \HOLSymConst{\HOLTokenEPS} \HOLFreeVar{E\sp{\prime}} \HOLSymConst{\HOLTokenImp{}} \HOLFreeVar{E} ==\HOLFreeVar{u}=>> \HOLFreeVar{E\sp{\prime}}\hfill [EPS_AND_WEAK]
\end{alltt}

\item A weak transition after a $\tau$-transition is still a weak
  transition:
\begin{alltt}
\HOLTokenTurnstile{} \HOLFreeVar{E} --\HOLSymConst{\ensuremath{\tau}}-> \HOLFreeVar{E\sb{\mathrm{1}}} \HOLSymConst{\HOLTokenConj{}} \HOLFreeVar{E\sb{\mathrm{1}}} ==\HOLFreeVar{u}=>> \HOLFreeVar{E\sp{\prime}} \HOLSymConst{\HOLTokenImp{}} \HOLFreeVar{E} ==\HOLFreeVar{u}=>> \HOLFreeVar{E\sp{\prime}}\hfill [TRANS_TAU_AND_WEAK]
\end{alltt}

\item Any transition followed by an EPS transition becomes a weak
  transition:
\begin{alltt}
\HOLTokenTurnstile{} \HOLFreeVar{E} --\HOLFreeVar{u}-> \HOLFreeVar{E\sb{\mathrm{1}}} \HOLSymConst{\HOLTokenConj{}} \HOLFreeVar{E\sb{\mathrm{1}}} \HOLSymConst{\HOLTokenEPS} \HOLFreeVar{E\sp{\prime}} \HOLSymConst{\HOLTokenImp{}} \HOLFreeVar{E} ==\HOLFreeVar{u}=>> \HOLFreeVar{E\sp{\prime}}\hfill [TRANS_AND_EPS]
\end{alltt}

\item An EPS transition implies either no transition or a weak
  $\tau$-transition:
\begin{alltt}
\HOLTokenTurnstile{} \HOLFreeVar{E} \HOLSymConst{\HOLTokenEPS} \HOLFreeVar{E\sp{\prime}} \HOLSymConst{\HOLTokenImp{}} \HOLFreeVar{E} \HOLSymConst{=} \HOLFreeVar{E\sp{\prime}} \HOLSymConst{\HOLTokenDisj{}} \HOLFreeVar{E} ==\HOLSymConst{\ensuremath{\tau}}=>> \HOLFreeVar{E\sp{\prime}}\hfill [EPS_IMP_WEAK_TRANS]
\end{alltt}

\item Two possible cases for the first step of a weak transition:
\begin{alltt}
\HOLTokenTurnstile{} \HOLFreeVar{E} ==\HOLFreeVar{u}=>> \HOLFreeVar{E\sb{\mathrm{1}}} \HOLSymConst{\HOLTokenImp{}}
   (\HOLSymConst{\HOLTokenExists{}}\HOLBoundVar{E\sp{\prime}}. \HOLFreeVar{E} --\HOLSymConst{\ensuremath{\tau}}-> \HOLBoundVar{E\sp{\prime}} \HOLSymConst{\HOLTokenConj{}} \HOLBoundVar{E\sp{\prime}} ==\HOLFreeVar{u}=>> \HOLFreeVar{E\sb{\mathrm{1}}}) \HOLSymConst{\HOLTokenDisj{}}
   \HOLSymConst{\HOLTokenExists{}}\HOLBoundVar{E\sp{\prime}}. \HOLFreeVar{E} --\HOLFreeVar{u}-> \HOLBoundVar{E\sp{\prime}} \HOLSymConst{\HOLTokenConj{}} \HOLBoundVar{E\sp{\prime}} \HOLSymConst{\HOLTokenEPS} \HOLFreeVar{E\sb{\mathrm{1}}}\hfill [WEAK_TRANS_cases1]
\end{alltt}

\item The weak transition version of SOS inference rule $(Sum_1)$ and $(Sum_2)$:
\begin{alltt}
\HOLTokenTurnstile{} \HOLFreeVar{E} ==\HOLFreeVar{u}=>> \HOLFreeVar{E\sb{\mathrm{1}}} \HOLSymConst{\HOLTokenImp{}} \HOLFreeVar{E} \HOLSymConst{+} \HOLFreeVar{E\sp{\prime}} ==\HOLFreeVar{u}=>> \HOLFreeVar{E\sb{\mathrm{1}}}\hfill [WEAK_SUM1]
\HOLTokenTurnstile{} \HOLFreeVar{E} ==\HOLFreeVar{u}=>> \HOLFreeVar{E\sb{\mathrm{1}}} \HOLSymConst{\HOLTokenImp{}} \HOLFreeVar{E\sp{\prime}} \HOLSymConst{+} \HOLFreeVar{E} ==\HOLFreeVar{u}=>> \HOLFreeVar{E\sb{\mathrm{1}}}\hfill [WEAK_SUM2]
\end{alltt}
\end{enumerate}
\end{proposition}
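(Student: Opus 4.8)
The plan is to prove the twelve listed properties in a carefully chosen order, since several of them serve as lemmas for the later ones. Every property is ultimately a consequence of three basic facts about the RTC-based definition of \texttt{EPS}: reflexivity ($E \overset{\epsilon}{\Rightarrow} E$), transitivity, and single-step embedding (one $\tau$-move is one step of the closure), together with the unfolding of \texttt{WEAK\_TRANS} and the \texttt{EPS\_cases} theorems already established. I would first dispatch the purely definitional items. For item~1 (\texttt{TRANS\_IMP\_WEAK\_TRANS}) and item~9 (\texttt{TRANS\_AND\_EPS}) I unfold \texttt{WEAK\_TRANS} and supply the two intermediate processes using reflexivity of \texttt{EPS}. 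Item~3 (\texttt{TRANS\_TAU\_IMP\_EPS}) is exactly the single-step embedding. Item~2 (\texttt{WEAK\_TRANS\_TAU}) unfolds \texttt{WEAK\_TRANS}, converts the middle $\tau$-move into an \texttt{EPS} step via item~3, and then collapses the three \texttt{EPS} segments by transitivity.

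Next I would establish the central gluing lemma and its immediate consequences. Item~7 (\texttt{EPS\_AND\_WEAK}) is the key composition result: unfold the inner weak transition and absorb the two surrounding \texttt{EPS} segments into its endpoints by transitivity. Item~8 (\texttt{TRANS\_TAU\_AND\_WEAK}) then follows by chaining item~3 into item~7, and item~10 (\texttt{EPS\_IMP\_WEAK\_TRANS}) follows by case-splitting the hypothesis with \texttt{EPS\_cases1}: the reflexive case yields the equality disjunct, while the step case $E \overset{\tau}{\to} u \overset{\epsilon}{\Rightarrow} E'$ yields a weak $\tau$-transition via item~9. For the prefix items I use the prefix SOS rule $\tau.E \overset{\tau}{\to} E$: item~5 (\texttt{TAU\_PREFIX\_EPS}) prepends this single $\tau$-step to $E \overset{\epsilon}{\Rightarrow} E'$, and item~6 (\texttt{TAU\_PREFIX\_WEAK\_TRANS}) applies item~5 to the leading \texttt{EPS} segment inside the unfolded weak transition.

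Items~4 (\texttt{WEAK\_TRANS\_TAU\_IMP\_TRANS\_TAU}) and~11 (\texttt{WEAK\_TRANS\_cases1}) both require exposing a genuine \emph{first} transition. After unfolding \texttt{WEAK\_TRANS} I case-split the leading $E \overset{\epsilon}{\Rightarrow} E_1$ with \texttt{EPS\_cases1}: if $E = E_1$ the original action is itself the first step, otherwise there is a $\tau$-move $E \overset{\tau}{\to} u$ and the remaining segments reassemble (by transitivity, respectively by the \texttt{WEAK\_TRANS} definition) into a weak transition out of $u$. For item~4 the exposed action is already $\tau$, so both branches deliver a leading $\tau$-move followed by an \texttt{EPS} tail.

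The hard part will be \texttt{WEAK\_SUM1}/\texttt{WEAK\_SUM2} (item~12). The difficulty is structural: the summation SOS rules $(Sum_1)$ and $(Sum_2)$ lift only a single transition, $E \overset{a}{\to} E'' \Rightarrow E + F \overset{a}{\to} E''$, and the summand context is destroyed after that one step; there is no rule letting $E + F$ perform the internal $\tau$-moves with which a weak transition $E \overset{u}{\Rightarrow} E_1$ may begin while preserving the $+$. Lifting the whole weak transition verbatim is therefore impossible. The plan is to first expose the single concrete first step of $E \overset{u}{\Rightarrow} E_1$ using \texttt{WEAK\_TRANS\_cases1} (item~11). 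In the $\tau$-first branch, $E \overset{\tau}{\to} C$ with $C \overset{u}{\Rightarrow} E_1$; I lift that one step through $+$ via $(Sum_1)$ and re-attach the weak tail using \texttt{TRANS\_TAU\_AND\_WEAK} (item~8). In the $u$-first branch, $E \overset{u}{\to} C$ with $C \overset{\epsilon}{\Rightarrow} E_1$; I again lift the single step through $+$ and re-attach the \texttt{EPS} tail using \texttt{TRANS\_AND\_EPS} (item~9). This dependency is precisely why item~11 must be proved before item~12, and \texttt{WEAK\_SUM2} is identical with $(Sum_2)$ in place of $(Sum_1)$.
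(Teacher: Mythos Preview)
Your proposal is correct and matches the approach the paper implicitly takes: the paper does not spell out individual proofs for these twelve items but simply states that they follow ``using above two definitions and the `cases' and induction theorems,'' which is exactly the toolkit you deploy (RTC reflexivity, transitivity, single-step embedding, \texttt{EPS\_cases1}, and unfolding \texttt{WEAK\_TRANS}). Your ordering and inter-item dependencies are sound, and in particular your handling of \texttt{WEAK\_SUM1}/\texttt{WEAK\_SUM2} via \texttt{WEAK\_TRANS\_cases1} correctly identifies and resolves the one genuinely non-trivial point, namely that the sum context is discarded after a single step so only the first concrete move can be lifted through $(Sum_1)$/$(Sum_2)$.
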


\section{Weak bisimulation equivalence}

The concepts of weak bisimulation and weak bisimulation equivalence
(a.k.a. observation equivalence), together with the algebraic laws for
weak bisimulation equivalence, stand at a central position in this
project. This is mostly because all the deep theorems (Deng lemma,
Hennessy lemma, Coarsest congruence contained in weak equivalence) that we have
formally proved in this project, were all talking about the
relationship between weak bisimulation equivalence and rooted weak
bisimulation equivalence (a.k.a. observation congruence, we'll use
this shorted names in the rest of the paper).  The other reason is,
since the observation congruence is not recursively defined but rely
on the definition of weak equivalence, it turns out that, the
properties of weak equivalence were heavily used in the proof of
properties of observation congruence.

On the other side, it's quite easy to derive out almost all the
algebraic laws for weak equivalence (and observation congruence),
simply because strong equivalence implies weak equivalence (and also
observation congruence). This fact also reflects the fact that,
although strong equivalence and its algebraic laws were usually
useless in real world model checking, they do have contributions for
deriving more useful algebraic laws. And from the view of theorem
proving it totally make sense: if we try to prove any algebraic law
for weak equivalence \emph{directly}, the proof will be quite long and
difficult, and the handling of $tau$-transitions will be a common part
in all these proofs. But if we use the strong algebraic laws as
lemmas, the proofs were actually divided into two logical parts: one for
handling the algebraic law itself, the other for handling the $\tau$-transitions.

The definition of weak bisimulation is the same as in
\cite{Gorrieri:2015jt}, except for the use of EPS in case of
$\tau$-transitions:
\begin{definition}{(Weak bisimulation)}
\begin{alltt}
\HOLTokenTurnstile{} \HOLConst{WEAK_BISIM} \HOLFreeVar{Wbsm} \HOLSymConst{\HOLTokenEquiv{}}
   \HOLSymConst{\HOLTokenForall{}}\HOLBoundVar{E} \HOLBoundVar{E\sp{\prime}}.
     \HOLFreeVar{Wbsm} \HOLBoundVar{E} \HOLBoundVar{E\sp{\prime}} \HOLSymConst{\HOLTokenImp{}}
     (\HOLSymConst{\HOLTokenForall{}}\HOLBoundVar{l}.
        (\HOLSymConst{\HOLTokenForall{}}\HOLBoundVar{E\sb{\mathrm{1}}}.
           \HOLBoundVar{E} --\HOLConst{label} \HOLBoundVar{l}-> \HOLBoundVar{E\sb{\mathrm{1}}} \HOLSymConst{\HOLTokenImp{}}
           \HOLSymConst{\HOLTokenExists{}}\HOLBoundVar{E\sb{\mathrm{2}}}. \HOLBoundVar{E\sp{\prime}} ==\HOLConst{label} \HOLBoundVar{l}=>> \HOLBoundVar{E\sb{\mathrm{2}}} \HOLSymConst{\HOLTokenConj{}} \HOLFreeVar{Wbsm} \HOLBoundVar{E\sb{\mathrm{1}}} \HOLBoundVar{E\sb{\mathrm{2}}}) \HOLSymConst{\HOLTokenConj{}}
        \HOLSymConst{\HOLTokenForall{}}\HOLBoundVar{E\sb{\mathrm{2}}}.
          \HOLBoundVar{E\sp{\prime}} --\HOLConst{label} \HOLBoundVar{l}-> \HOLBoundVar{E\sb{\mathrm{2}}} \HOLSymConst{\HOLTokenImp{}}
          \HOLSymConst{\HOLTokenExists{}}\HOLBoundVar{E\sb{\mathrm{1}}}. \HOLBoundVar{E} ==\HOLConst{label} \HOLBoundVar{l}=>> \HOLBoundVar{E\sb{\mathrm{1}}} \HOLSymConst{\HOLTokenConj{}} \HOLFreeVar{Wbsm} \HOLBoundVar{E\sb{\mathrm{1}}} \HOLBoundVar{E\sb{\mathrm{2}}}) \HOLSymConst{\HOLTokenConj{}}
     (\HOLSymConst{\HOLTokenForall{}}\HOLBoundVar{E\sb{\mathrm{1}}}. \HOLBoundVar{E} --\HOLSymConst{\ensuremath{\tau}}-> \HOLBoundVar{E\sb{\mathrm{1}}} \HOLSymConst{\HOLTokenImp{}} \HOLSymConst{\HOLTokenExists{}}\HOLBoundVar{E\sb{\mathrm{2}}}. \HOLBoundVar{E\sp{\prime}} \HOLSymConst{\HOLTokenEPS} \HOLBoundVar{E\sb{\mathrm{2}}} \HOLSymConst{\HOLTokenConj{}} \HOLFreeVar{Wbsm} \HOLBoundVar{E\sb{\mathrm{1}}} \HOLBoundVar{E\sb{\mathrm{2}}}) \HOLSymConst{\HOLTokenConj{}}
     \HOLSymConst{\HOLTokenForall{}}\HOLBoundVar{E\sb{\mathrm{2}}}. \HOLBoundVar{E\sp{\prime}} --\HOLSymConst{\ensuremath{\tau}}-> \HOLBoundVar{E\sb{\mathrm{2}}} \HOLSymConst{\HOLTokenImp{}} \HOLSymConst{\HOLTokenExists{}}\HOLBoundVar{E\sb{\mathrm{1}}}. \HOLBoundVar{E} \HOLSymConst{\HOLTokenEPS} \HOLBoundVar{E\sb{\mathrm{1}}} \HOLSymConst{\HOLTokenConj{}} \HOLFreeVar{Wbsm} \HOLBoundVar{E\sb{\mathrm{1}}} \HOLBoundVar{E\sb{\mathrm{2}}}
\end{alltt}
\end{definition}

Weak bisimulation has some common properties:
\begin{proposition}{Properties of weak bisimulation}
\begin{enumerate}
\item The identity relation is a weak bisimulation:
\begin{alltt}
\HOLTokenTurnstile{} \HOLConst{WEAK_BISIM} (\HOLTokenLambda{}\HOLBoundVar{x} \HOLBoundVar{y}. \HOLBoundVar{x} \HOLSymConst{=} \HOLBoundVar{y})\hfill[IDENTITY_WEAK_BISIM]
\end{alltt}
\item The converse of a weak bisimulation is still a weak
  bisimulation:
\begin{alltt}
\HOLTokenTurnstile{} \HOLConst{WEAK_BISIM} \HOLFreeVar{Wbsm} \HOLSymConst{\HOLTokenImp{}} \HOLConst{WEAK_BISIM} (\HOLTokenLambda{}\HOLBoundVar{x} \HOLBoundVar{y}. \HOLFreeVar{Wbsm} \HOLBoundVar{y} \HOLBoundVar{x})\hfill[IDENTITY_WEAK_BISIM]
\end{alltt}
\item The composition of two weak bisimulations is a weak
  bisimulation:
\begin{alltt}
\HOLTokenTurnstile{} \HOLConst{WEAK_BISIM} \HOLFreeVar{Wbsm\sb{\mathrm{1}}} \HOLSymConst{\HOLTokenConj{}} \HOLConst{WEAK_BISIM} \HOLFreeVar{Wbsm\sb{\mathrm{2}}} \HOLSymConst{\HOLTokenImp{}}
   \HOLConst{WEAK_BISIM} (\HOLTokenLambda{}\HOLBoundVar{x} \HOLBoundVar{z}. \HOLSymConst{\HOLTokenExists{}}\HOLBoundVar{y}. \HOLFreeVar{Wbsm\sb{\mathrm{1}}} \HOLBoundVar{x} \HOLBoundVar{y} \HOLSymConst{\HOLTokenConj{}} \HOLFreeVar{Wbsm\sb{\mathrm{2}}} \HOLBoundVar{y} \HOLBoundVar{z})\hfill[COMP_WEAK_BISIM]
\end{alltt}
\item The union of two weak bisimulations is a weak bisimulation:
\begin{alltt}
\HOLTokenTurnstile{} \HOLConst{WEAK_BISIM} \HOLFreeVar{Wbsm\sb{\mathrm{1}}} \HOLSymConst{\HOLTokenConj{}} \HOLConst{WEAK_BISIM} \HOLFreeVar{Wbsm\sb{\mathrm{2}}} \HOLSymConst{\HOLTokenImp{}}
   \HOLConst{WEAK_BISIM} (\HOLTokenLambda{}\HOLBoundVar{x} \HOLBoundVar{y}. \HOLFreeVar{Wbsm\sb{\mathrm{1}}} \HOLBoundVar{x} \HOLBoundVar{y} \HOLSymConst{\HOLTokenDisj{}} \HOLFreeVar{Wbsm\sb{\mathrm{2}}} \HOLBoundVar{x} \HOLBoundVar{y})\hfill[UNION_WEAK_BISIM]
\end{alltt}
\end{enumerate}
\end{proposition}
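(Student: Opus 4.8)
The plan is to prove all four closure properties by the same recipe: unfold the definition of \texttt{WEAK\_BISIM} for the candidate relation, fix an arbitrary related pair, and discharge the four defining obligations in turn (the forward and backward clauses for visible \HOLinline{\HOLConst{label}}-transitions, and the forward and backward clauses for $\tau$-transitions). For the identity relation $(\lambda x\,y.\ x = y)$ the related pair has the form $(E,E)$, and every move of $E$ is matched by the very same move: a visible transition $E \overset{l}{\longrightarrow} E_1$ is promoted to the weak transition $E \overset{l}{\Longrightarrow} E_1$ by \texttt{TRANS\_IMP\_WEAK\_TRANS}, while a $\tau$-transition $E \overset{\tau}{\longrightarrow} E_1$ is promoted to $E \overset{\epsilon}{\Longrightarrow} E_1$ by \texttt{TRANS\_TAU\_IMP\_EPS}; in each case the two resulting states coincide, so the relation is re-established. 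This part needs nothing beyond the definition and the two elementary promotion lemmas.

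The union $(\lambda x\,y.\ \mathit{Wbsm_1}\,x\,y \vee \mathit{Wbsm_2}\,x\,y)$ is handled by a case split on which disjunct holds for the given pair: the matching move is supplied by the corresponding clause of $\mathit{Wbsm_1}$ or $\mathit{Wbsm_2}$, and since the matched state is again related by the same component, it lands back in the same disjunct, so the two relations never need to interact. The converse $(\lambda x\,y.\ \mathit{Wbsm}\,y\,x)$ exploits the left/right symmetry built into the definition, whose four clauses form two mirror-image pairs. Assuming $\mathit{Wbsm}\,E'\,E$, a move of $E$ is precisely a move of the \emph{second} component of the pair $(E', E)$ related by $\mathit{Wbsm}$, hence it is matched by the \emph{backward} clause of $\mathit{Wbsm}$; symmetrically, a move of $E'$ is matched by the forward clause. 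The proof is therefore a mechanical interchange of the forward and backward obligations, again requiring no new lemmas.

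The composition $(\lambda x\,z.\ \exists y.\ \mathit{Wbsm_1}\,x\,y \wedge \mathit{Wbsm_2}\,y\,z)$ is the main obstacle. The difficulty is that when $x$ makes a single visible move $x \overset{l}{\longrightarrow} x_1$, the first bisimulation only yields a matching \emph{weak} transition $y \overset{l}{\Longrightarrow} y_2$, and this cannot be relayed across $\mathit{Wbsm_2}$ directly, since the defining clauses of \texttt{WEAK\_BISIM} only match \emph{single} transitions. I therefore expect to need two auxiliary lemmas asserting that a weak bisimulation respects $\overset{\epsilon}{\Longrightarrow}$ and $\overset{l}{\Longrightarrow}$, not merely single steps. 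First, that $\mathit{Wbsm}\,E\,E' \wedge E \overset{\epsilon}{\Longrightarrow} E_1$ implies $\exists E_2.\ E' \overset{\epsilon}{\Longrightarrow} E_2 \wedge \mathit{Wbsm}\,E_1\,E_2$, which I would prove by induction on the \texttt{EPS} relation via \texttt{EPS\_strongind}: the reflexive case is immediate, and each $\tau$-step is matched through the $\tau$-clause of the definition, the pieces being stitched together by the transitivity of $\overset{\epsilon}{\Longrightarrow}$. Second, the analogous statement for $\overset{l}{\Longrightarrow}$, obtained by decomposing the weak transition as $E \overset{\epsilon}{\Longrightarrow} E_a \overset{l}{\longrightarrow} E_b \overset{\epsilon}{\Longrightarrow} E_1$, applying the EPS lemma to the two $\epsilon$-segments and the single-step visible clause to the middle, then reassembling the result into one weak transition with \texttt{EPS\_AND\_WEAK}. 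With these in hand, the composition proof chains a move of $x$ through $\mathit{Wbsm_1}$ (producing a weak move of $y$) and then through the weak-transition lemma for $\mathit{Wbsm_2}$ (producing a weak move of $z$), the intermediate state $y_2$ furnishing the existential witness for the composed relation; the $\tau$-clauses are discharged identically, using the EPS lemma in place of the weak-transition lemma. I expect the induction establishing the EPS lemma to be the only genuinely non-routine step, and the decisive ingredients there are exactly the availability of the \texttt{EPS} induction principle together with \texttt{EPS\_AND\_WEAK}.
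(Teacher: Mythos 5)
Your plan is correct and matches the machinery the paper actually develops: the paper states this proposition without an explicit proof, but the auxiliary fact you identify as the crux of the composition case is precisely the ``existence lemma'' the paper records separately (that \HOLinline{\HOLConst{WEAK_BISIM} \HOLFreeVar{Wbsm} \HOLSymConst{\HOLTokenConj{}} \HOLFreeVar{Wbsm} \HOLFreeVar{E} \HOLFreeVar{E\sp{\prime}}} together with \HOLinline{\HOLFreeVar{E} \HOLSymConst{\HOLTokenEPS} \HOLFreeVar{E\sb{\mathrm{1}}}} yields a matching \HOLinline{\HOLConst{EPS}} move of \HOLinline{\HOLFreeVar{E\sp{\prime}}}), proved there by exactly the \texttt{EPS} induction you describe, with the weak-transition variant obtained by the same decomposition and reassembly via \texttt{EPS\_AND\_WEAK}. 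The identity, converse, and union cases are routine as you say, so no gap.
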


There're two ways to define weak bisimulation equivalence in HOL4, one
is to define it as the union of all weak bisimulations:
\begin{definition}{(Alternative definition of weak equivalence)}
For any two CCS processes $E$ and $E'$, they're \emph{weak
  bisimulation equivalent} (or weak bisimilar) if and only if there's
a weak bisimulation relation between $E$ and $E'$:
\begin{alltt}
\HOLTokenTurnstile{} \HOLFreeVar{E} \HOLSymConst{\ensuremath{\approx}} \HOLFreeVar{E\sp{\prime}} \HOLSymConst{\HOLTokenEquiv{}} \HOLSymConst{\HOLTokenExists{}}\HOLBoundVar{Wbsm}. \HOLBoundVar{Wbsm} \HOLFreeVar{E} \HOLFreeVar{E\sp{\prime}} \HOLSymConst{\HOLTokenConj{}} \HOLConst{WEAK_BISIM} \HOLBoundVar{Wbsm}\hfill[WEAK_EQUIV]
\end{alltt}
\end{definition}
This is the old method used by Prof. Nesi in Hol88 in which there's no
support yet for defining co-inductive relations.  The new method we
have used in this project, is to use HOL4's new co-inductive relation
defining facility \texttt{Hol_coreln} to define weak bisimulation
equivalence:
\begin{lstlisting}
val (WEAK_EQUIV_rules, WEAK_EQUIV_coind, WEAK_EQUIV_cases) = Hol_coreln `
    (!(E :('a, 'b) CCS) (E' :('a, 'b) CCS).
       (!l.
	 (!E1. TRANS E  (label l) E1 ==>
	       (?E2. WEAK_TRANS E' (label l) E2 /\ WEAK_EQUIV E1 E2)) /\
	 (!E2. TRANS E' (label l) E2 ==>
	       (?E1. WEAK_TRANS E  (label l) E1 /\ WEAK_EQUIV E1 E2))) /\
       (!E1. TRANS E  tau E1 ==> (?E2. EPS E' E2 /\ WEAK_EQUIV E1 E2)) /\
       (!E2. TRANS E' tau E2 ==> (?E1. EPS E  E1 /\ WEAK_EQUIV E1 E2))
      ==> WEAK_EQUIV E E')`;
\end{lstlisting}
The disadvantage of this new method is that,  the rules used in above
definition actually duplicated the definition of weak bisimulation,
while the advantage is that, HOL4 automatically proved an important
theorem and returned it as the third return value of above
definition. This theorem is also called ``the property (*)'' (in
Milner's book \cite{Milner:2017tw}:
\begin{proposition}{(The property (*) for weak bisimulation
    equivalence)}
\begin{alltt}
\HOLTokenTurnstile{} \HOLFreeVar{a\sb{\mathrm{0}}} \HOLSymConst{\ensuremath{\approx}} \HOLFreeVar{a\sb{\mathrm{1}}} \HOLSymConst{\HOLTokenEquiv{}}
   (\HOLSymConst{\HOLTokenForall{}}\HOLBoundVar{l}.
      (\HOLSymConst{\HOLTokenForall{}}\HOLBoundVar{E\sb{\mathrm{1}}}.
         \HOLFreeVar{a\sb{\mathrm{0}}} --\HOLConst{label} \HOLBoundVar{l}-> \HOLBoundVar{E\sb{\mathrm{1}}} \HOLSymConst{\HOLTokenImp{}}
         \HOLSymConst{\HOLTokenExists{}}\HOLBoundVar{E\sb{\mathrm{2}}}. \HOLFreeVar{a\sb{\mathrm{1}}} ==\HOLConst{label} \HOLBoundVar{l}=>> \HOLBoundVar{E\sb{\mathrm{2}}} \HOLSymConst{\HOLTokenConj{}} \HOLBoundVar{E\sb{\mathrm{1}}} \HOLSymConst{\ensuremath{\approx}} \HOLBoundVar{E\sb{\mathrm{2}}}) \HOLSymConst{\HOLTokenConj{}}
      \HOLSymConst{\HOLTokenForall{}}\HOLBoundVar{E\sb{\mathrm{2}}}.
        \HOLFreeVar{a\sb{\mathrm{1}}} --\HOLConst{label} \HOLBoundVar{l}-> \HOLBoundVar{E\sb{\mathrm{2}}} \HOLSymConst{\HOLTokenImp{}}
        \HOLSymConst{\HOLTokenExists{}}\HOLBoundVar{E\sb{\mathrm{1}}}. \HOLFreeVar{a\sb{\mathrm{0}}} ==\HOLConst{label} \HOLBoundVar{l}=>> \HOLBoundVar{E\sb{\mathrm{1}}} \HOLSymConst{\HOLTokenConj{}} \HOLBoundVar{E\sb{\mathrm{1}}} \HOLSymConst{\ensuremath{\approx}} \HOLBoundVar{E\sb{\mathrm{2}}}) \HOLSymConst{\HOLTokenConj{}}
   (\HOLSymConst{\HOLTokenForall{}}\HOLBoundVar{E\sb{\mathrm{1}}}. \HOLFreeVar{a\sb{\mathrm{0}}} --\HOLSymConst{\ensuremath{\tau}}-> \HOLBoundVar{E\sb{\mathrm{1}}} \HOLSymConst{\HOLTokenImp{}} \HOLSymConst{\HOLTokenExists{}}\HOLBoundVar{E\sb{\mathrm{2}}}. \HOLFreeVar{a\sb{\mathrm{1}}} \HOLSymConst{\HOLTokenEPS} \HOLBoundVar{E\sb{\mathrm{2}}} \HOLSymConst{\HOLTokenConj{}} \HOLBoundVar{E\sb{\mathrm{1}}} \HOLSymConst{\ensuremath{\approx}} \HOLBoundVar{E\sb{\mathrm{2}}}) \HOLSymConst{\HOLTokenConj{}}
   \HOLSymConst{\HOLTokenForall{}}\HOLBoundVar{E\sb{\mathrm{2}}}. \HOLFreeVar{a\sb{\mathrm{1}}} --\HOLSymConst{\ensuremath{\tau}}-> \HOLBoundVar{E\sb{\mathrm{2}}} \HOLSymConst{\HOLTokenImp{}} \HOLSymConst{\HOLTokenExists{}}\HOLBoundVar{E\sb{\mathrm{1}}}. \HOLFreeVar{a\sb{\mathrm{0}}} \HOLSymConst{\HOLTokenEPS} \HOLBoundVar{E\sb{\mathrm{1}}} \HOLSymConst{\HOLTokenConj{}} \HOLBoundVar{E\sb{\mathrm{1}}} \HOLSymConst{\ensuremath{\approx}} \HOLBoundVar{E\sb{\mathrm{2}}}\hfill[OBS_PROPERTY_STAR]
\end{alltt}
\end{proposition}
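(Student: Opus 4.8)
The statement is precisely the one-step ``unfolding'' of the co-inductively defined relation $\approx$: it asserts that $\approx$ coincides with the functional $\Phi$ sending a relation $R$ to the conjunction of the four matching clauses (two for visible labels, answered by \texttt{WEAK\_TRANS}, and two for $\tau$, answered by \texttt{EPS}) with $R$ plugged into the successor positions. For a relation introduced by \texttt{Hol\_coreln}, this very equivalence is the third value returned by the definition package, \texttt{WEAK\_EQUIV\_cases}. So the top-level plan is simply to identify \texttt{OBS\_PROPERTY\_STAR} with \texttt{WEAK\_EQUIV\_cases} up to $\alpha$-renaming ($a_0, a_1$ for the $E, E'$ of the definition) and cosmetic rewriting with the infix $\approx$ notation.

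To see why this \texttt{cases} theorem holds, and to reconcile it with the equivalent union-of-bisimulations definition \texttt{WEAK\_EQUIV}, I would argue the two implications separately. For the forward direction I would first establish the key sub-lemma that $\approx$ is \emph{itself} a weak bisimulation, i.e. \texttt{WEAK\_BISIM} $(\lambda x\,y.\ x \approx y)$: given any $E \approx E'$ one extracts a witnessing weak bisimulation $W$ with $W\,E\,E'$, unfolds \texttt{WEAK\_BISIM} at this pair to obtain the matching (weak) transitions, and then promotes each successor relation $W\,E_1\,E_2$ back to $E_1 \approx E_2$ using the definition of $\approx$ (the same $W$ witnesses weak bisimilarity of the successors). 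With this sub-lemma in hand, the forward direction of \texttt{OBS\_PROPERTY\_STAR} is just \texttt{WEAK\_BISIM} specialised to the relation $\approx$ at the pair $(a_0, a_1)$.

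For the backward direction I would assume the four matching clauses hold for $(a_0, a_1)$ with $\approx$ in the successor positions and produce a weak bisimulation containing that pair. Under the co-inductive reading this is literally the introduction rule \texttt{WEAK\_EQUIV\_rules}; under the union definition I would instead take $W = (\lambda x\,y.\ x \approx y \lor (x = a_0 \land y = a_1))$ and check \texttt{WEAK\_BISIM} $W$ by cases on which disjunct holds. On the $\approx$-part I reuse the sub-lemma that $\approx$ is a weak bisimulation, and on the extra pair I use the hypothesis directly, observing that $\approx \subseteq W$ so that $\approx$-related successors are automatically $W$-related. One may then assemble the required bisimulation from \texttt{IDENTITY\_WEAK\_BISIM} and \texttt{UNION\_WEAK\_BISIM}, obtaining $a_0 \approx a_1$.

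I expect the only genuine work to be the sub-lemma that $\approx$ is a weak bisimulation, where the asymmetry between the visible-label clauses (answered by \texttt{WEAK\_TRANS}) and the $\tau$ clauses (answered by \texttt{EPS}) must be tracked carefully, and where the promotion of $W\,E_1\,E_2$ to $E_1 \approx E_2$ hinges on reusing the \emph{same} witness $W$ rather than re-deriving a bisimulation from scratch. Everything above this lemma is bookkeeping: with the \texttt{Hol\_coreln} definition the statement is delivered directly as \texttt{WEAK\_EQUIV\_cases}, and the remaining effort amounts to aligning the variable names and the $\approx$ notation with the displayed form.
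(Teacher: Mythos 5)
Your proposal takes essentially the same approach as the paper: the paper does not prove this statement by hand at all, but obtains it as the \texttt{WEAK\_EQUIV\_cases} theorem returned automatically by the \texttt{Hol\_coreln} package when $\approx$ is defined co-inductively, which is exactly your top-level plan of identifying \texttt{OBS\_PROPERTY\_STAR} with that cases theorem up to renaming. Your supplementary derivation from the union-of-bisimulations definition (via the sub-lemma that $\approx$ is itself a weak bisimulation, then checking $W = {\approx} \cup \{(a_0,a_1)\}$ directly) is sound and goes beyond what the paper records, except for the side remark that one could assemble $W$ from \texttt{UNION\_WEAK\_BISIM}: that lemma requires both components to be weak bisimulations, and the singleton $\{(a_0,a_1)\}$ is not one, so only your direct case-split verification of $W$ works there.
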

It's known that, above property cannot be used as an alternative
definition of weak equivalence, because it doesn't capture all
possible weak equivalences. But it turns out that, for the proof of
most theorems about weak bisimularities this property is enough to be
used as a rewrite rule in their proofs. And, if we had used the old
method to define weak equivalence, it's quite difficult to prove above
property (*).\footnote{In our previous project, the property (*) for
  strong equivalence was proved based on the old method, then in this
  project we have completely removed these code and now both strong
  and weak bisimulation equivalences were based on the new method. On
  the other side, the fact that Prof. Nesi can define co-inductive relation
  without using \texttt{Hol_coreln} has shown that, the core HOL logic
doesn't need to be extended to suppport co-inductive relation, and all
what \texttt{Hol_coreln} does internally is to use the existing HOL
theorems to construct the related proofs. This is very different with
the situation in other theorem provers (e.g. Coq) in which the core
logic has to be extended to support co-induction.}

Using the alternative definition of weak equivalence, it's quite
simple to prove that, the weak equivalence is an equivalence relation:
\begin{proposition}{(Weak equivalence is an equivalence relation)}
\begin{alltt}
\HOLTokenTurnstile{} \HOLConst{equivalence} (\HOLSymConst{=~})\hfill[WEAK_EQUIV_equivalence]
\end{alltt}
or
\begin{alltt}
\HOLTokenTurnstile{} \HOLFreeVar{E} \HOLSymConst{\ensuremath{\approx}} \HOLFreeVar{E}\hfill[WEAK_EQUIV_REFL]
\HOLTokenTurnstile{} \HOLFreeVar{E} \HOLSymConst{\ensuremath{\approx}} \HOLFreeVar{E\sp{\prime}} \HOLSymConst{\HOLTokenImp{}} \HOLFreeVar{E\sp{\prime}} \HOLSymConst{\ensuremath{\approx}} \HOLFreeVar{E}\hfill[WEAK_EQUIV_SYM]
\HOLTokenTurnstile{} \HOLFreeVar{E} \HOLSymConst{\ensuremath{\approx}} \HOLFreeVar{E\sp{\prime}} \HOLSymConst{\HOLTokenConj{}} \HOLFreeVar{E\sp{\prime}} \HOLSymConst{\ensuremath{\approx}} \HOLFreeVar{E\sp{\prime\prime}} \HOLSymConst{\HOLTokenImp{}} \HOLFreeVar{E} \HOLSymConst{\ensuremath{\approx}} \HOLFreeVar{E\sp{\prime\prime}}\hfill[WEAK_EQUIV_TRANS]
\end{alltt}
\end{proposition}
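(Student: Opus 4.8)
The plan is to work entirely from the alternative characterization $E \approx E' \iff \exists Wbsm.\ Wbsm\ E\ E' \land \text{WEAK\_BISIM}\ Wbsm$, treating weak equivalence as the union of all weak bisimulations, and to discharge each of the three clauses by exhibiting an explicit weak bisimulation as witness. The four closure properties of weak bisimulation listed just above --- that the identity relation, the converse, the composition, and the union of weak bisimulations are again weak bisimulations --- are exactly the ingredients needed, so the whole argument reduces to choosing the right witness in each case.

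For reflexivity ($E \approx E$) I would instantiate the existential with the identity relation $(\lambda x\ y.\ x = y)$: it is a weak bisimulation by \texttt{IDENTITY\_WEAK\_BISIM}, and it plainly relates $E$ to itself since $E = E$. For symmetry, from $E \approx E'$ I obtain a weak bisimulation $Wbsm$ with $Wbsm\ E\ E'$, and I take its converse $(\lambda x\ y.\ Wbsm\ y\ x)$ as the witness for $E' \approx E$; the converse is a weak bisimulation by the converse property, and it relates $E'$ to $E$ precisely because $Wbsm\ E\ E'$ holds. For transitivity, from $E \approx E'$ and $E' \approx E''$ I extract weak bisimulations $Wbsm_1$ and $Wbsm_2$ with $Wbsm_1\ E\ E'$ and $Wbsm_2\ E'\ E''$, and I take their relational composition $(\lambda x\ z.\ \exists y.\ Wbsm_1\ x\ y \land Wbsm_2\ y\ z)$ as witness, which is a weak bisimulation by \texttt{COMP\_WEAK\_BISIM} and relates $E$ to $E''$ through the intermediate process $E'$.

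The packaged statement $\text{equivalence}\ (\approx)$ then follows by unfolding the \texttt{relationTheory} predicate $\text{equivalence}\ R \iff \text{reflexive}\ R \land \text{symmetric}\ R \land \text{transitive}\ R$ and supplying the three clauses just proved; conversely, the three named theorems can be read off as immediate corollaries of the single \texttt{equivalence} fact.

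I do not expect a genuine obstacle inside this proof: all of the real content has already been paid for in establishing the four closure lemmas. The only clause with nontrivial structure is transitivity, and even there the work is hidden inside \texttt{COMP\_WEAK\_BISIM}, whose proof must stitch a weak transition of the middle process into weak transitions of the outer processes using the concatenation facts for \HOLinline{\HOLConst{EPS}} and \HOLinline{\HOLConst{WEAK_TRANS}} (such as \texttt{EPS\_AND\_WEAK} and the transitivity of \HOLinline{\HOLConst{EPS}}). Relative to those lemmas, the equivalence-relation proof itself is essentially a matter of selecting the right witness and rewriting with the alternative definition, which is why it is \emph{quite simple} once the closure properties are in hand.
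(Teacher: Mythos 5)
Your proposal matches the paper's approach: the paper explicitly says the result is ``quite simple to prove'' \emph{using the alternative definition} of weak equivalence (the union-of-all-weak-bisimulations characterization), with the closure properties \texttt{IDENTITY\_WEAK\_BISIM}, the converse lemma, and \texttt{COMP\_WEAK\_BISIM} stated immediately beforehand as the intended witnesses for reflexivity, symmetry, and transitivity respectively. Your witness choices and the reduction of all real work to the already-proved closure lemmas are exactly the paper's argument.
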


The substitutability of weak equivalence under various CCS process
operators were then proved based on above definition and property
(*). However, as we know weak equivalence is not a congruence, in some
of these substitutability theorems we must added extra assumptions on
the processes involved, i.e. the stability of CCS processes:
\begin{definition}{(Stable processes (agents))}
A process (or agent) is said to be \emph{stable} if there's no $\tau$-transition
coming from it's root:
\begin{alltt}
\HOLTokenTurnstile{} \HOLConst{STABLE} \HOLFreeVar{E} \HOLSymConst{\HOLTokenEquiv{}} \HOLSymConst{\HOLTokenForall{}}\HOLBoundVar{u} \HOLBoundVar{E\sp{\prime}}. \HOLFreeVar{E} --\HOLBoundVar{u}-> \HOLBoundVar{E\sp{\prime}} \HOLSymConst{\HOLTokenImp{}} \HOLBoundVar{u} \HOLSymConst{\HOLTokenNotEqual{}} \HOLSymConst{\ensuremath{\tau}}
\end{alltt}
\end{definition}
Notice that, the stability of a CCS process doesn't imply the
$\tau$-free of all its sub-processes. Instead the definition only
concerns on the first transition leading from the process (root).

Among other small lemmas, we have proved the following properties of weak bisimulation
equivalence:
\begin{proposition}{Properties of weak bisimulation equivalence)}
\begin{enumerate}
\item Weak equivalence is substitutive under prefix operator:
\begin{alltt}
\HOLTokenTurnstile{} \HOLFreeVar{E} \HOLSymConst{\ensuremath{\approx}} \HOLFreeVar{E\sp{\prime}} \HOLSymConst{\HOLTokenImp{}} \HOLSymConst{\HOLTokenForall{}}\HOLBoundVar{u}. \HOLBoundVar{u}\HOLSymConst{..}\HOLFreeVar{E} \HOLSymConst{\ensuremath{\approx}} \HOLBoundVar{u}\HOLSymConst{..}\HOLFreeVar{E\sp{\prime}}\hfill[WEAK_EQUIV_SUBST_PREFIX]
\end{alltt}
\item Weak equivalence of stable agents is preserved by binary
  summation:
\begin{alltt}
\HOLTokenTurnstile{} \HOLFreeVar{E\sb{\mathrm{1}}} \HOLSymConst{\ensuremath{\approx}} \HOLFreeVar{E\sb{\mathrm{1}}\sp{\prime}} \HOLSymConst{\HOLTokenConj{}} \HOLConst{STABLE} \HOLFreeVar{E\sb{\mathrm{1}}} \HOLSymConst{\HOLTokenConj{}} \HOLConst{STABLE} \HOLFreeVar{E\sb{\mathrm{1}}\sp{\prime}} \HOLSymConst{\HOLTokenConj{}} \HOLFreeVar{E\sb{\mathrm{2}}} \HOLSymConst{\ensuremath{\approx}} \HOLFreeVar{E\sb{\mathrm{2}}\sp{\prime}} \HOLSymConst{\HOLTokenConj{}} \HOLConst{STABLE} \HOLFreeVar{E\sb{\mathrm{2}}} \HOLSymConst{\HOLTokenConj{}}
   \HOLConst{STABLE} \HOLFreeVar{E\sb{\mathrm{2}}\sp{\prime}} \HOLSymConst{\HOLTokenImp{}}
   \HOLFreeVar{E\sb{\mathrm{1}}} \HOLSymConst{+} \HOLFreeVar{E\sb{\mathrm{2}}} \HOLSymConst{\ensuremath{\approx}} \HOLFreeVar{E\sb{\mathrm{1}}\sp{\prime}} \HOLSymConst{+} \HOLFreeVar{E\sb{\mathrm{2}}\sp{\prime}}\hfill[WEAK_EQUIV_PRESD_BY_SUM]
\end{alltt}
\item Weak equivalence of stable agents is substitutive under binary
  summation on the right:
\begin{alltt}
\HOLTokenTurnstile{} \HOLFreeVar{E} \HOLSymConst{\ensuremath{\approx}} \HOLFreeVar{E\sp{\prime}} \HOLSymConst{\HOLTokenConj{}} \HOLConst{STABLE} \HOLFreeVar{E} \HOLSymConst{\HOLTokenConj{}} \HOLConst{STABLE} \HOLFreeVar{E\sp{\prime}} \HOLSymConst{\HOLTokenImp{}} \HOLSymConst{\HOLTokenForall{}}\HOLBoundVar{E\sp{\prime\prime}}. \HOLFreeVar{E} \HOLSymConst{+} \HOLBoundVar{E\sp{\prime\prime}} \HOLSymConst{\ensuremath{\approx}} \HOLFreeVar{E\sp{\prime}} \HOLSymConst{+} \HOLBoundVar{E\sp{\prime\prime}}
\end{alltt}
\item Weak equivalence of stable agents is substitutive under binary
  summation on the left:
\begin{alltt}
\HOLTokenTurnstile{} \HOLFreeVar{E} \HOLSymConst{\ensuremath{\approx}} \HOLFreeVar{E\sp{\prime}} \HOLSymConst{\HOLTokenConj{}} \HOLConst{STABLE} \HOLFreeVar{E} \HOLSymConst{\HOLTokenConj{}} \HOLConst{STABLE} \HOLFreeVar{E\sp{\prime}} \HOLSymConst{\HOLTokenImp{}} \HOLSymConst{\HOLTokenForall{}}\HOLBoundVar{E\sp{\prime\prime}}. \HOLBoundVar{E\sp{\prime\prime}} \HOLSymConst{+} \HOLFreeVar{E} \HOLSymConst{\ensuremath{\approx}} \HOLBoundVar{E\sp{\prime\prime}} \HOLSymConst{+} \HOLFreeVar{E\sp{\prime}}
\end{alltt}
\item Weak equivalence is preserved by parallel operator:
\begin{alltt}
\HOLTokenTurnstile{} \HOLFreeVar{E\sb{\mathrm{1}}} \HOLSymConst{\ensuremath{\approx}} \HOLFreeVar{E\sb{\mathrm{1}}\sp{\prime}} \HOLSymConst{\HOLTokenConj{}} \HOLFreeVar{E\sb{\mathrm{2}}} \HOLSymConst{\ensuremath{\approx}} \HOLFreeVar{E\sb{\mathrm{2}}\sp{\prime}} \HOLSymConst{\HOLTokenImp{}} \HOLFreeVar{E\sb{\mathrm{1}}} \HOLSymConst{||} \HOLFreeVar{E\sb{\mathrm{2}}} \HOLSymConst{\ensuremath{\approx}} \HOLFreeVar{E\sb{\mathrm{1}}\sp{\prime}} \HOLSymConst{||} \HOLFreeVar{E\sb{\mathrm{2}}\sp{\prime}}\hfill[WEAK_EQUIV_PRESD_BY_PAR]
\end{alltt}
\item Weak equivalence is substitutive under restriction operator:
\begin{alltt}
\HOLTokenTurnstile{} \HOLFreeVar{E} \HOLSymConst{\ensuremath{\approx}} \HOLFreeVar{E\sp{\prime}} \HOLSymConst{\HOLTokenImp{}} \HOLSymConst{\HOLTokenForall{}}\HOLBoundVar{L}. \HOLSymConst{\ensuremath{\nu}} \HOLBoundVar{L} \HOLFreeVar{E} \HOLSymConst{\ensuremath{\approx}} \HOLSymConst{\ensuremath{\nu}} \HOLBoundVar{L} \HOLFreeVar{E\sp{\prime}}\hfill[WEAK_EQUIV_SUBST_RESTR]
\end{alltt}
\item Weak equivalence is substitutive under relabelling operator:
\begin{alltt}
\HOLTokenTurnstile{} \HOLFreeVar{E} \HOLSymConst{\ensuremath{\approx}} \HOLFreeVar{E\sp{\prime}} \HOLSymConst{\HOLTokenImp{}} \HOLSymConst{\HOLTokenForall{}}\HOLBoundVar{rf}. \HOLConst{relab} \HOLFreeVar{E} \HOLBoundVar{rf} \HOLSymConst{\ensuremath{\approx}} \HOLConst{relab} \HOLFreeVar{E\sp{\prime}} \HOLBoundVar{rf}\hfill[WEAK_EQUIV_SUBST_RELAB]
\end{alltt}
\end{enumerate}
\end{proposition}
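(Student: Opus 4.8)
The plan is to exhibit an explicit weak bisimulation containing the pair $(\mathsf{relab}\ E\ rf,\ \mathsf{relab}\ E'\ rf)$ and then invoke the alternative characterisation [WEAK\_EQUIV]. Concretely, I would fix $rf$ and take as candidate relation
\[
\mathcal{R} = \{(\mathsf{relab}\ F\ rf,\ \mathsf{relab}\ F'\ rf) \mid F \approx F'\}.
\]
Since $E \approx E'$ immediately gives $(\mathsf{relab}\ E\ rf,\ \mathsf{relab}\ E'\ rf) \in \mathcal{R}$, it suffices to show $\mathsf{WEAK\_BISIM}\ \mathcal{R}$; the goal then follows from the definition [WEAK\_EQUIV].

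Before attacking the bisimulation obligations, I would first prove two transfer lemmas stating that relabelling commutes with weak moves: (i) if $F \overset{\epsilon}{\Rightarrow} F'$ then $\mathsf{relab}\ F\ rf \overset{\epsilon}{\Rightarrow} \mathsf{relab}\ F'\ rf$, and (ii) if $F \overset{u}{\Rightarrow} F'$ then $\mathsf{relab}\ F\ rf \overset{\mathsf{relabel}\ rf\ u}{\Longrightarrow} \mathsf{relab}\ F'\ rf$. Lemma (i) is proved by EPS induction [EPS\_ind]: the reflexive base case is trivial, and in the inductive step a single $\tau$-step $F \xrightarrow{\tau} G$ is preserved by the relabelling SOS rule because relabelling fixes $\tau$. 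Lemma (ii) then follows by unfolding [WEAK\_TRANS] into $F \overset{\epsilon}{\Rightarrow} F_1 \xrightarrow{u} F_2 \overset{\epsilon}{\Rightarrow} F'$, applying (i) to the two EPS segments and the relabelling SOS rule to the middle single step.

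For the bisimulation proof proper, I would take $F \approx F'$ and analyse an arbitrary transition of $\mathsf{relab}\ F\ rf$. Inversion of the relabelling SOS rule shows every transition $\mathsf{relab}\ F\ rf \xrightarrow{w} H$ arises from some $F \xrightarrow{v} G$ with $H = \mathsf{relab}\ G\ rf$ and $w = \mathsf{relabel}\ rf\ v$; moreover $w$ is visible iff $v$ is, since only $\tau$ is sent to $\tau$. In the visible case $v = \mathsf{label}\ l'$ and $w = \mathsf{label}\ l$, property (*) [OBS\_PROPERTY\_STAR] applied to $F \approx F'$ yields $G'$ with $F' \overset{\mathsf{label}\ l'}{\Longrightarrow} G'$ and $G \approx G'$; transfer lemma (ii) turns this into $\mathsf{relab}\ F'\ rf \overset{\mathsf{label}\ l}{\Longrightarrow} \mathsf{relab}\ G'\ rf$, and $(\mathsf{relab}\ G\ rf,\ \mathsf{relab}\ G'\ rf) \in \mathcal{R}$ by $G \approx G'$. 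In the $\tau$ case $v = \tau$, property (*) gives $F' \overset{\epsilon}{\Rightarrow} G'$ with $G \approx G'$, and transfer lemma (i) lifts this to $\mathsf{relab}\ F'\ rf \overset{\epsilon}{\Rightarrow} \mathsf{relab}\ G'\ rf$. The symmetric clauses, which match moves of $\mathsf{relab}\ F'\ rf$, are discharged by the dual half of property (*) together with the symmetry of $\approx$.

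I expect the main obstacle to be purely the bookkeeping around the relabelling of actions: proving transfer lemma (i) cleanly by EPS induction, and carrying out the inversion of the relabelling SOS rule so as to establish both $w = \mathsf{relabel}\ rf\ v$ and the visible/$\tau$ dichotomy. Because relabelling is a unary operator that commutes with transitions without introducing synchronisation, there is no genuine combinatorial difficulty here—unlike the parallel-composition case—so once the two transfer lemmas are in place the bisimulation obligations reduce to routine applications of property (*).
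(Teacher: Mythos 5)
Your candidate-relation argument for the relabelling law is correct, and it is essentially the method the paper itself uses for this family of results: the paper records that these substitutivity properties ``were proved based on'' the alternative definition \texttt{WEAK\_EQUIV} (existence of a weak bisimulation) ``and property (*)'' (\texttt{OBS\_PROPERTY\_STAR}), which is exactly your scheme of exhibiting $\mathcal{R}$, checking the clauses of \texttt{WEAK\_BISIM} via property (*), and using commutation lemmas for \texttt{EPS} and \texttt{WEAK\_TRANS} under the operator. Your transfer lemmas (i) and (ii), the \texttt{EPS\_ind} induction, and the inversion of the relabelling SOS rule are all available in the development, and nothing in that part of the argument would fail.

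The genuine gap is coverage: the proposition is a conjunction of seven laws, and you prove only the seventh, \texttt{WEAK\_EQUIV\_SUBST\_RELAB}. Prefix and restriction (items 1 and 6) would yield to essentially the same template (with the small twist that for prefix the candidate relation must also contain weak equivalence itself, since firing the prefix exits the context). But the items you omit are precisely those where a new idea is needed. The sum laws (items 2--4) are \emph{false} without the \texttt{STABLE} hypotheses --- for instance $\tau.a.0 \approx a.0$ yet $\tau.a.0 + b.0 \not\approx a.0 + b.0$, because the left process can silently discard the $b$-alternative --- so any proof must make genuine use of stability: the relation $\{(F+G,\,F'+G') \mid F \approx F',\ G \approx G'\}$ is not a weak bisimulation; instead one takes the single pair $(E_1+E_2,\,E_1'+E_2')$ together with weak equivalence itself, uses stability to see that every initial move is visible and originates in a determinate summand, and matches it with \texttt{WEAK\_SUM1}/\texttt{WEAK\_SUM2}. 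The parallel law (item 5) additionally requires lemmas lifting weak transitions of the components through the parallel operator, including the synchronisation case where a label and its co-label jointly produce a $\tau$; your closing remark concedes this case is harder, but the proposal contains no argument for it. So, as a proof of the proposition as stated, the proposal is incomplete, even though the single case it treats is handled correctly and in the paper's own style.
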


Finally, we have proved that, strong equivalence implies weak
equivalence:
\begin{theorem}{(Strong equivalence implies weak equivalence)}
\begin{alltt}
\HOLTokenTurnstile{} \HOLFreeVar{E} \HOLSymConst{\ensuremath{\sim}} \HOLFreeVar{E\sp{\prime}} \HOLSymConst{\HOLTokenImp{}} \HOLFreeVar{E} \HOLSymConst{\ensuremath{\approx}} \HOLFreeVar{E\sp{\prime}}\hfill[STRONG_IMP_WEAK_EQUIV]
\end{alltt}
\end{theorem}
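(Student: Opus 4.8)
The plan is to exhibit strong equivalence itself as a weak bisimulation, and then invoke the alternative definition of weak equivalence, which characterizes $\approx$ as the union of all weak bisimulations. Concretely, assuming $E \sim E'$, I would instantiate the existential witness $Wbsm$ in \texttt{WEAK\_EQUIV} with the relation $\lambda x\, y.\ x \sim y$. The clause $Wbsm\ E\ E'$ is then exactly the hypothesis $E \sim E'$, so the entire problem reduces to the single proof obligation $\text{WEAK\_BISIM}\ (\lambda x\, y.\ x \sim y)$, i.e.\ that strong equivalence is a weak bisimulation.

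To discharge this obligation I would unfold \texttt{WEAK\_BISIM} and verify its four clauses, relying on the ``property (*)'' of strong equivalence (the strong-transition analogue of \texttt{OBS\_PROPERTY\_STAR} proved in the earlier project), which says that whenever $E \sim E'$, every single transition of one side is matched by a single transition of the other side with the residuals again related by $\sim$. For the visible-label clauses, a transition $E \xrightarrow{\text{label } l} E_1$ yields, by property (*), a matching transition $E' \xrightarrow{\text{label } l} E_2$ with $E_1 \sim E_2$; applying \texttt{TRANS\_IMP\_WEAK\_TRANS} promotes that single transition to the required weak transition $E' \stackrel{\text{label } l}{\Longrightarrow} E_2$, and $E_1 \sim E_2$ is precisely the relatedness demanded by $Wbsm$. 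For the $\tau$-clauses, a transition $E \xrightarrow{\tau} E_1$ is matched by $E' \xrightarrow{\tau} E_2$ with $E_1 \sim E_2$; here \texttt{TRANS\_TAU\_IMP\_EPS} turns the single $\tau$-step into the EPS relation $E' \stackrel{\epsilon}{\Longrightarrow} E_2$ that the $\tau$-clause of \texttt{WEAK\_BISIM} requires. The two ``reverse direction'' clauses are handled identically after appealing to the symmetry of $\sim$.

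Each clause thus follows by the same two-step recipe: match with a single transition via property (*), then weaken that single transition to the corresponding weak or EPS relation via one of the conversion lemmas already established for weak transitions. There is no genuine obstacle here; the real content of the result is simply the observation that single transitions are special cases of weak transitions, and single $\tau$-steps are special cases of EPS, so that any strong bisimulation automatically satisfies the more permissive matching conditions of a weak bisimulation. The only point requiring mild care is keeping the two promotion lemmas straight, using \texttt{TRANS\_IMP\_WEAK\_TRANS} for the labelled clauses but \texttt{TRANS\_TAU\_IMP\_EPS} for the $\tau$-clauses, since the latter target the EPS relation rather than a weak transition.
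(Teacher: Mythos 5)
Your proposal is correct, and it is exactly the argument the paper's formalization is set up to support: the paper states \texttt{STRONG\_IMP\_WEAK\_EQUIV} without spelling out a proof, but it introduces the alternative characterization of $\approx$ as the union of all weak bisimulations and the promotion lemmas \texttt{TRANS\_IMP\_WEAK\_TRANS} and \texttt{TRANS\_TAU\_IMP\_EPS} precisely so that one can witness the existential with $\sim$ itself and discharge \texttt{WEAK\_BISIM}~$(\lambda x\,y.\ x \sim y)$ via the strong property (*). Your handling of the four clauses, including the distinction between weak transitions for labelled actions and EPS for $\tau$-steps, matches the intended route, so there is nothing to correct.
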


Here we omit all the algebraic laws for weak equivalence, because they
were all easily derived from the corresponding algebraic laws for
strong equivalence, except for the following $\tau$-law:
\begin{theorem}{The $\tau$-law for weak equivalence)}
\begin{alltt}
\HOLTokenTurnstile{} \HOLSymConst{\ensuremath{\tau}}\HOLSymConst{..}\HOLFreeVar{E} \HOLSymConst{\ensuremath{\approx}} \HOLFreeVar{E}\hfill[TAU_WEAK]
\end{alltt}
\end{theorem}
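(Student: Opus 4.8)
\emph{Proof idea.} The plan is to discharge $\tau..E \approx E$ through the existential characterisation \texttt{WEAK\_EQUIV}, for which it suffices to produce one relation that is a weak bisimulation and relates $\tau..E$ to $E$. The natural witness is the $\tau$-prefixing relation closed up with the identity,
\[
  R \;=\; \lambda x\, y.\; (x = \tau..y) \;\lor\; (x = y),
\]
which satisfies $R\,(\tau..E)\,E$ via its left disjunct. The identity component is essential rather than cosmetic: once $\tau..E$ has consumed its leading $\tau$, both processes sit on the \emph{same} residual, so the bisimulation game can only be closed if the diagonal pairs already belong to $R$. For this reason one cannot simply apply \texttt{UNION\_WEAK\_BISIM} to the singleton $\{(\tau..y,y)\}$ and the identity, since the bare $\tau$-prefix relation is not a weak bisimulation on its own.

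The remaining obligation is to show that $R$ is a weak bisimulation, which I would split according to the two shapes of a related pair. On a diagonal pair $(x,x)$ every move is answered by itself, using \texttt{TRANS\_IMP\_WEAK\_TRANS} for a visible action and reflexivity of \texttt{EPS} for a $\tau$-move; this case is precisely \texttt{IDENTITY\_WEAK\_BISIM}. On a pair $(\tau..y, y)$ the four clauses go as follows. The forward clauses require inverting the transitions of $\tau..y$: its only root transition is $\tau..y \overset{\tau}{\longrightarrow} y$, so the forward \emph{visible} clause is vacuous, while the forward $\tau$-clause need only answer the step to $y$, which is matched by the empty move $y \overset{\epsilon}{\Longrightarrow} y$ and lands on the diagonal pair $(y,y)\in R$. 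For the backward clauses, where it is $y$ that moves, I lift the move past the leading $\tau$: a visible step $y \overset{a}{\longrightarrow} y'$ gives $\tau..y \overset{a}{\Longrightarrow} y'$ by \texttt{TRANS\_IMP\_WEAK\_TRANS} followed by \texttt{TAU\_PREFIX\_WEAK\_TRANS}, and a silent step $y \overset{\tau}{\longrightarrow} y'$ gives $\tau..y \overset{\epsilon}{\Longrightarrow} y'$ by \texttt{TRANS\_TAU\_IMP\_EPS} followed by \texttt{TAU\_PREFIX\_EPS}; in either case the two residuals coincide at $y'$ and the resulting pair $(y',y')$ again lies on the diagonal.

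I expect the only delicate point to be the forward transition inversion: I must establish that $\tau..y$ has no derivative other than $y$ and that it arises solely from the $\tau$-action. This is a routine case analysis of the \texttt{TRANS} relation on the prefix operator, after which every clause closes immediately into the identity part of $R$. A shorter alternative that avoids naming $R$ is to rewrite the goal with the backward direction of \texttt{OBS\_PROPERTY\_STAR} and discharge its four conjuncts directly, supplying the needed $\approx$-witnesses with \texttt{WEAK\_EQUIV\_REFL} and the same \texttt{EPS}/\texttt{WEAK\_TRANS} lemmas; of the two, the explicit weak-bisimulation construction is the more transparent.
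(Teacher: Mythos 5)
Your proof is correct. The paper states \texttt{TAU\_WEAK} without giving any argument for it --- it is singled out precisely as the one algebraic law for $\approx$ that cannot be inherited from the corresponding strong law, since $\tau..E \sim E$ fails --- so there is no proof in the text to compare against. Your witness relation $R=\{(\tau..y,\,y)\}\cup\mathrm{Id}$ does discharge the existential definition \texttt{WEAK\_EQUIV}: the forward clauses reduce, after inverting \texttt{TRANS} on the prefix operator, to the single move $\tau..y \overset{\tau}{\longrightarrow} y$, answered by the empty \texttt{EPS} step onto the diagonal; the backward clauses are handled by \texttt{TAU\_PREFIX\_WEAK\_TRANS} and \texttt{TAU\_PREFIX\_EPS}; and since every residual pair lands on the diagonal, the identity component must indeed be folded into $R$ itself rather than obtained via \texttt{UNION\_WEAK\_BISIM} --- a point you identify correctly. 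One pedantic remark: on a diagonal pair, a silent move $x \overset{\tau}{\longrightarrow} E_1$ is matched by $x \overset{\epsilon}{\Longrightarrow} E_1$ via \texttt{TRANS\_TAU\_IMP\_EPS}, not by reflexivity of \texttt{EPS}; since you defer that whole case to \texttt{IDENTITY\_WEAK\_BISIM} this does not affect soundness. Your alternative route --- applying the rules direction of the coinductive definition (the right-to-left reading of \texttt{OBS\_PROPERTY\_STAR}) and supplying the residual equivalences with \texttt{WEAK\_EQUIV\_REFL} --- is equally valid and is the shorter mechanized proof, as it requires no explicit bisimulation witness and no coinduction, the residuals being already known to be weakly equivalent.
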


\section{Observation congruence}

The concept of rooted weak bisimulation equivalence (also namsed
\emph{observation congruence}) is an ``obvious fix'' to convert weak
bisimulation equivalence into a congruence. Its definition is not
recursive but based on the definition of weak equivalence:
\begin{definition}{(Observation congruence)}
Two CCS processes are observation congruence if and only if for any
transition from one of them, there's a responding weak transition from
the other, and the resulting two sub-processes are weak equivalence:
\begin{alltt}
\HOLTokenTurnstile{} \HOLFreeVar{E} \HOLSymConst{\HOLTokenObsCongr} \HOLFreeVar{E\sp{\prime}} \HOLSymConst{\HOLTokenEquiv{}}
   \HOLSymConst{\HOLTokenForall{}}\HOLBoundVar{u}.
     (\HOLSymConst{\HOLTokenForall{}}\HOLBoundVar{E\sb{\mathrm{1}}}. \HOLFreeVar{E} --\HOLBoundVar{u}-> \HOLBoundVar{E\sb{\mathrm{1}}} \HOLSymConst{\HOLTokenImp{}} \HOLSymConst{\HOLTokenExists{}}\HOLBoundVar{E\sb{\mathrm{2}}}. \HOLFreeVar{E\sp{\prime}} ==\HOLBoundVar{u}=>> \HOLBoundVar{E\sb{\mathrm{2}}} \HOLSymConst{\HOLTokenConj{}} \HOLBoundVar{E\sb{\mathrm{1}}} \HOLSymConst{\ensuremath{\approx}} \HOLBoundVar{E\sb{\mathrm{2}}}) \HOLSymConst{\HOLTokenConj{}}
     \HOLSymConst{\HOLTokenForall{}}\HOLBoundVar{E\sb{\mathrm{2}}}. \HOLFreeVar{E\sp{\prime}} --\HOLBoundVar{u}-> \HOLBoundVar{E\sb{\mathrm{2}}} \HOLSymConst{\HOLTokenImp{}} \HOLSymConst{\HOLTokenExists{}}\HOLBoundVar{E\sb{\mathrm{1}}}. \HOLFreeVar{E} ==\HOLBoundVar{u}=>> \HOLBoundVar{E\sb{\mathrm{1}}} \HOLSymConst{\HOLTokenConj{}} \HOLBoundVar{E\sb{\mathrm{1}}} \HOLSymConst{\ensuremath{\approx}} \HOLBoundVar{E\sb{\mathrm{2}}}\hfill[OBS_CONGR]
\end{alltt}
\end{definition}

By observing the differences between the definition of observation
equivalence (weak equivalence) and congruence, we can see that,
observation equivalence requires a little more: for each $\tau$-transition
from one process, the other process must response with at least one
$\tau$-transition. Thus what's immediately
proven is the following two theorems:
\begin{theorem}{(Observation congruence implies observation
    equivalence)}
\begin{alltt}
\HOLTokenTurnstile{} \HOLFreeVar{E} \HOLSymConst{\HOLTokenObsCongr} \HOLFreeVar{E\sp{\prime}} \HOLSymConst{\HOLTokenImp{}} \HOLFreeVar{E} \HOLSymConst{\ensuremath{\approx}} \HOLFreeVar{E\sp{\prime}}\hfill[OBS_CONGR_IMP_WEAK_EQUIV]
\end{alltt}
\end{theorem}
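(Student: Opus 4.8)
The plan is to reduce the goal $E \approx E'$ to its characterization via the property~(*) theorem \texttt{OBS\_PROPERTY\_STAR}, and then discharge each of its four conjuncts from the hypothesis $E \mathrel{\approx^c} E'$ after unfolding the definition \texttt{OBS\_CONGR}. The only genuine content is a mismatch in how the two relations treat silent moves, which is bridged by the already-proved lemma \texttt{WEAK\_TRANS\_TAU}. First I would rewrite the conclusion with \texttt{OBS\_PROPERTY\_STAR}, so that proving $E \approx E'$ splits into: (a)~every visible transition $E \xrightarrow{\mathrm{label}\,l} E_1$ is matched by a weak transition $E' \overset{\mathrm{label}\,l}{\Longrightarrow} E_2$ with $E_1 \approx E_2$; (b)~the symmetric clause with the roles of $E$ and $E'$ exchanged; (c)~every silent transition $E \xrightarrow{\tau} E_1$ is matched by an \texttt{EPS} move $E' \overset{\epsilon}{\Longrightarrow} E_2$ with $E_1 \approx E_2$; and (d)~its symmetric counterpart.

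Next I would strip the hypothesis $E \mathrel{\approx^c} E'$ by unfolding \texttt{OBS\_CONGR}, which yields, for every action $u$, both a forward and a backward matching clause whose response is a weak transition $\overset{u}{\Longrightarrow}$. For the visible cases (a) and (b) I simply specialise $u$ to $\mathrm{label}\,l$; the clauses delivered by \texttt{OBS\_CONGR} are then literally the two goals, so these subgoals close immediately with no further work. For the silent cases (c) and (d) I specialise $u$ to $\tau$, and here \texttt{OBS\_CONGR} supplies a matching \emph{weak $\tau$-transition} $E' \overset{\tau}{\Longrightarrow} E_2$ (respectively $E \overset{\tau}{\Longrightarrow} E_1$), which is slightly stronger than what property~(*) asks for.

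The one step that requires a lemma---and which I expect to be the only obstacle---is exactly this silent case: property~(*) demands an \texttt{EPS} response (zero or more $\tau$-steps), whereas \texttt{OBS\_CONGR} provides a weak $\tau$-transition that by definition factors as $\overset{\epsilon}{\Longrightarrow}\,\xrightarrow{\tau}\,\overset{\epsilon}{\Longrightarrow}$, hence contains at least one genuine $\tau$-step. Applying \texttt{WEAK\_TRANS\_TAU}, which states $E \overset{\tau}{\Longrightarrow} E' \Rightarrow E \overset{\epsilon}{\Longrightarrow} E'$, converts the supplied response into the required \texttt{EPS} move; the witness $E_2$ (resp. $E_1$) together with the accompanying $E_1 \approx E_2$ then discharges the subgoal. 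Conceptually, the whole theorem collapses to the observation that observation congruence asks each silent move to be answered by \emph{at least one} silent move, which is a fortiori an answer by \emph{zero or more} silent moves, while the visible part of the two definitions is identical and so transfers verbatim.
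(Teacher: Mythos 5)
Your proposal is correct and matches the paper's (implicit) argument: the paper gives no explicit proof, only the remark that the two definitions differ solely in the $\tau$-clause, where observation congruence demands \emph{at least one} silent step while weak equivalence accepts \emph{zero or more} --- exactly the gap you bridge with \texttt{WEAK\_TRANS\_TAU} after rewriting the goal with \texttt{OBS\_PROPERTY\_STAR}. The visible clauses transfer verbatim as you say, so the proof is essentially the one the paper intends.
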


\begin{theorem}{(Observation equivalence on stable agents implies
    observation congruence)}
\begin{alltt}
\HOLTokenTurnstile{} \HOLFreeVar{E} \HOLSymConst{\ensuremath{\approx}} \HOLFreeVar{E\sp{\prime}} \HOLSymConst{\HOLTokenConj{}} \HOLConst{STABLE} \HOLFreeVar{E} \HOLSymConst{\HOLTokenConj{}} \HOLConst{STABLE} \HOLFreeVar{E\sp{\prime}} \HOLSymConst{\HOLTokenImp{}} \HOLFreeVar{E} \HOLSymConst{\HOLTokenObsCongr} \HOLFreeVar{E\sp{\prime}}\hfill[WEAK_EQUIV_STABLE_IMP_CONGR]
\end{alltt}
\end{theorem}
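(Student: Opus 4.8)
The plan is to prove the theorem

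$$E \approx E' \wedge \text{STABLE}\ E \wedge \text{STABLE}\ E' \implies E \approx^c E'$$

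by unfolding the definition of observation congruence [OBS\_CONGR] and showing directly that each transition on one side is matched by a weak transition on the other, with weak-equivalent residuals. Looking at the two definitions side by side, the only gap between $\approx$ (via OBS\_PROPERTY\_STAR) and $\approx^c$ is in the treatment of $\tau$-transitions: weak equivalence only guarantees that a $\tau$-move from $E$ is answered by an $\overset{\epsilon}{\Rightarrow}$ (EPS) move from $E'$, which may be empty, whereas observation congruence demands a genuine weak transition $E' \overset{\tau}{\Longrightarrow} E''$. The stability hypotheses are exactly what closes this gap.

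First I would assume $E \approx E'$, STABLE $E$, STABLE $E'$, fix an action $u$, and split into the two symmetric matching obligations of [OBS\_CONGR]. Each obligation further splits according to whether $u$ is a visible label or $\tau$. For the visible-label case, I would simply invoke OBS\_PROPERTY\_STAR on $E \approx E'$: it already yields, for every $E_1$ with $E \xrightarrow{\text{label }l} E_1$, some $E_2$ with $E' \overset{\text{label }l}{\Longrightarrow} E_2$ and $E_1 \approx E_2$, which is precisely what observation congruence requires for visible $u$. So these cases are immediate.

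The crux is the $\tau$-case. Suppose $E \xrightarrow{\tau} E_1$. But STABLE $E$ says, by definition, that every transition out of $E$ carries a non-$\tau$ action, so $E \xrightarrow{\tau} E_1$ is impossible — the hypothesis is vacuously discharged and there is nothing to match. The same vacuity argument, using STABLE $E'$, handles $\tau$-moves from $E'$. Hence the only non-trivial content comes from the visible-label cases, already settled by property (*), and the $\tau$-cases collapse because stability forbids any $\tau$-transition from the root. I expect the main (though modest) obstacle to be purely bookkeeping: correctly matching the quantifier structure of [OBS\_CONGR] against that of OBS\_PROPERTY\_STAR after case-splitting on $u$, and remembering to apply STABLE to the \emph{correct} process in each of the two symmetric directions so that the $\tau$-subgoals are genuinely unreachable rather than merely hard.
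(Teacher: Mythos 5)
Your proposal is correct and matches the argument the paper intends: the paper itself notes that the only gap between $\approx$ and $\approx^c$ lies in the treatment of $\tau$-transitions, and states this theorem as an immediate consequence — precisely your observation that stability makes both $\tau$-matching obligations vacuous while property (*) discharges the visible-label cases directly.
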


Surprisingly, it's not trivial to prove that, the observation
equivalence is indeed an equivalence relation. The reflexitivy and
symmetry are trivial:
\begin{proposition}{(The reflexitivy and symmetry of observation congruence)}
\begin{alltt}
\HOLTokenTurnstile{} \HOLFreeVar{E} \HOLSymConst{\HOLTokenObsCongr} \HOLFreeVar{E}\hfill[OBS_CONGR_REFL]
\HOLTokenTurnstile{} \HOLFreeVar{E} \HOLSymConst{\HOLTokenObsCongr} \HOLFreeVar{E\sp{\prime}} \HOLSymConst{\HOLTokenImp{}} \HOLFreeVar{E\sp{\prime}} \HOLSymConst{\HOLTokenObsCongr} \HOLFreeVar{E}\hfill[OBS_CONGR_SYM]
\end{alltt}
\end{proposition}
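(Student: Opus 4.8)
The plan is to prove both statements by unfolding the definition \texttt{OBS\_CONGR} and reducing each resulting obligation to facts already established for weak equivalence and weak transitions, namely \texttt{WEAK\_EQUIV\_REFL}, \texttt{WEAK\_EQUIV\_SYM} and \texttt{TRANS\_IMP\_WEAK\_TRANS}. Neither direction requires any genuinely new reasoning about $\tau$-transitions; the entire content is the interplay between the single transition $\overset{u}{\longrightarrow}$ that drives the definition and the weak transition $\overset{u}{\Longrightarrow}$ that must answer it.

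For reflexivity ($E \approx^c E$), after unfolding I must show, for every action $u$, that any transition $E \overset{u}{\longrightarrow} E_1$ is matched by a weak transition of $E$ leading to some $E_2$ with $E_1 \approx E_2$ (together with the symmetric clause, which is literally identical once the two processes coincide). I would supply the witness $E_2 := E_1$. Then $E \overset{u}{\Longrightarrow} E_1$ holds by \texttt{TRANS\_IMP\_WEAK\_TRANS}, and $E_1 \approx E_1$ holds by \texttt{WEAK\_EQUIV\_REFL}. Because both clauses of \texttt{OBS\_CONGR} become the same when the two agents are equal, this single choice discharges the whole goal.

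For symmetry I would assume $E \approx^c E'$ and unfold both the hypothesis and the goal $E' \approx^c E$. The two clauses of the goal are exactly the two clauses of the hypothesis with the roles of $E$ and $E'$ interchanged: a transition $E' \overset{u}{\longrightarrow} E_1$ in the goal is answered precisely by the second (``from $E'$'') clause of the hypothesis, which yields some $F$ with $E \overset{u}{\Longrightarrow} F$ and $F \approx E_1$; dually, a transition $E \overset{u}{\longrightarrow} E_2$ in the goal is answered by the first (``from $E$'') clause of the hypothesis. In each case the required witness and its weak transition are supplied directly, and the only adjustment is to reverse the orientation of the resulting weak equivalence (turning $F \approx E_1$ into $E_1 \approx F$) by a single application of \texttt{WEAK\_EQUIV\_SYM}.

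I do not expect a real obstacle: this proposition is easy precisely because the substantial facts it relies upon---that $\approx$ is reflexive and symmetric, and that every transition is already a weak transition---have been proved earlier. The only point demanding care is the cross-over in the symmetry argument, where the ``forward'' clause of the conclusion must be matched against the ``backward'' clause of the hypothesis and vice versa, accompanied by the reorientation via \texttt{WEAK\_EQUIV\_SYM}. Transitivity, by contrast, is the genuinely delicate property and is treated separately.
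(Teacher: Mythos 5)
Your proposal is correct and is exactly the argument the paper implicitly relies on when it labels these two properties ``trivial'' (it gives no proof): reflexivity via the witness $E_2 := E_1$ using \texttt{TRANS\_IMP\_WEAK\_TRANS} and \texttt{WEAK\_EQUIV\_REFL}, and symmetry by swapping the two clauses of the unfolded definition and reorienting the resulting weak equivalence with \texttt{WEAK\_EQUIV\_SYM}. No gap: all three supporting facts are established earlier in the paper, and the cross-over of clauses you flag is indeed the only point requiring any care.
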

But the transitivity is hard to prove.\footnote{Actually it's not
  proven in the old work, the formal proofs that we did in this
  project is completely new.}  Our proof here is based on the following
lemmas:
\begin{lemma}
If two processes $E$ and $E'$ are observation congruence, then for any
EPS transition coming from $E$, there's a corresponding EPS transition
from $E'$, and the resulting two subprocesses are weakly equivalent:
\begin{alltt}
\HOLTokenTurnstile{} \HOLFreeVar{E} \HOLSymConst{\HOLTokenObsCongr} \HOLFreeVar{E\sp{\prime}} \HOLSymConst{\HOLTokenImp{}} \HOLSymConst{\HOLTokenForall{}}\HOLBoundVar{E\sb{\mathrm{1}}}. \HOLFreeVar{E} \HOLSymConst{\HOLTokenEPS} \HOLBoundVar{E\sb{\mathrm{1}}} \HOLSymConst{\HOLTokenImp{}} \HOLSymConst{\HOLTokenExists{}}\HOLBoundVar{E\sb{\mathrm{2}}}. \HOLFreeVar{E\sp{\prime}} \HOLSymConst{\HOLTokenEPS} \HOLBoundVar{E\sb{\mathrm{2}}} \HOLSymConst{\HOLTokenConj{}} \HOLBoundVar{E\sb{\mathrm{1}}} \HOLSymConst{\ensuremath{\approx}} \HOLBoundVar{E\sb{\mathrm{2}}}\hfill[OBS_CONGR_EPS]
\end{alltt}
\end{lemma}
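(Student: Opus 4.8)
The plan is to reduce the statement to a purely weak-equivalence fact and then lift it through \texttt{OBS\_CONGR\_IMP\_WEAK\_EQUIV}. The first thing I would observe is that observation congruence plays essentially no role here beyond its consequence $E \approx E'$: the EPS relation is a chain of single $\tau$-transitions, and congruence only strengthens the matching of the \emph{first} transition, after which the residuals are merely weakly equivalent. I would therefore \emph{not} try to induct along the EPS chain keeping $\approx^c$ as the invariant, since that invariant is destroyed after one step. Instead I would first establish the auxiliary lemma that weak equivalence already transfers along EPS:
\[
E \approx E' \;\Longrightarrow\; \forall E_1.\; E \overset{\epsilon}{\Rightarrow} E_1 \;\Longrightarrow\; \exists E_2.\; E' \overset{\epsilon}{\Rightarrow} E_2 \,\wedge\, E_1 \approx E_2 .
\]

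For this auxiliary lemma I would induct on $E \overset{\epsilon}{\Rightarrow} E_1$ via \texttt{EPS\_ind}, carrying the predicate $P\,E\,E_1 \;:=\; \forall E'.\, E \approx E' \Rightarrow \exists E_2.\, E' \overset{\epsilon}{\Rightarrow} E_2 \wedge E_1 \approx E_2$, with $E'$ quantified \emph{inside} so that the invariant can be re-used at each step. The reflexive base case $P\,x\,x$ is discharged by taking $E_2 := E'$: reflexivity of RTC gives $E' \overset{\epsilon}{\Rightarrow} E'$, and $x \approx E'$ is exactly the hypothesis. For the inductive step I would assume $x \overset{\tau}{\longrightarrow} y$ together with $P\,y\,z$, and fix any $E'$ with $x \approx E'$. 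Feeding the transition $x \overset{\tau}{\longrightarrow} y$ into the property~(*) theorem \texttt{OBS\_PROPERTY\_STAR} (rewritten from $x \approx E'$) produces some $E_2'$ with $E' \overset{\epsilon}{\Rightarrow} E_2'$ and $y \approx E_2'$; instantiating the induction hypothesis $P\,y\,z$ at $E_2'$ then yields $E_2$ with $E_2' \overset{\epsilon}{\Rightarrow} E_2$ and $z \approx E_2$. Transitivity of RTC splices the two EPS segments into $E' \overset{\epsilon}{\Rightarrow} E_2$, closing the step.

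With the auxiliary lemma available the target is immediate: from $E \approx^c E'$ I would obtain $E \approx E'$ by \texttt{OBS\_CONGR\_IMP\_WEAK\_EQUIV} and then apply the auxiliary lemma to the given transition $E \overset{\epsilon}{\Rightarrow} E_1$. The main obstacle is conceptual rather than computational: one must resist inducting with observation congruence and recognise that weak equivalence is the correct \emph{preservable} invariant, with property~(*) supplying precisely the single-$\tau$ matching required at each step and RTC-transitivity gluing the matched segments together.
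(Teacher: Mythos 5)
Your proof is correct, and the essential insight matches the paper's: observation congruence cannot serve as the induction invariant along the $\tau$-chain, so the invariant actually carried must be weak equivalence, with property~(*) matching each single $\tau$ and RTC-transitivity splicing the responses. Where you diverge is in the decomposition and the direction of induction. The paper proves \texttt{OBS\_CONGR\_EPS} directly by \emph{right} induction (\texttt{EPS\_ind\_right}): the base case discharges $E \approx E'$ from $E \approx^c E'$ and responds with the empty EPS, and the inductive step extends the already-constructed partner $E_2$ (for which $E_1 \approx E_2$ holds by IH) past the final $\tau$-step; because the new step is appended on the side where the existential partner already lives, there is no need to generalize over the partner process in the induction predicate. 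You instead factor out a pure weak-equivalence lemma and prove it by \emph{left} induction (\texttt{EPS\_ind}), which forces you to quantify $E'$ inside the predicate so the IH can be re-applied to the freshly produced residual $E_2'$ --- a standard but slightly heavier strengthening. Your factored lemma is in fact present in the paper in a more general form (the ``existence lemma'' stated for an arbitrary weak bisimulation \texttt{Wbsm} rather than for $\approx$ itself), though the paper deploys it only in the proof of the subsequent \texttt{WEAK\_TRANS} lemma, not here. Your route buys a cleaner separation of concerns (congruence is used exactly once, at the top); the paper's route buys a shorter induction predicate at the cost of entangling the base case with \texttt{OBS\_CONGR\_IMP\_WEAK\_EQUIV}. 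Both are sound.
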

\begin{proof}
By (right) induction\footnote{The induction theorem used here is \texttt{EPS_ind_right}.} on the number of $\tau$ in the EPS transition of $E$. In
the base case, there's no $\tau$ at all, the $E$ transites to
itself. And in this case $E$' can respond with itself, which is also
an EPS transition:
\begin{displaymath}
\xymatrix{
  {E} \ar@{.}[r]^{\approx^c} \ar@{-}[d]^{=} & {E'} \ar@{-}[d]^{=} \\
  {E} \ar@{.}[r]^{\approx} & {E'}
}
\end{displaymath}

For the induction case, suppose the proposition is true for zero or more $\tau$
transitions except for the last step, that's, $\forall E, \exists E1,
E2$, such that \HOLinline{\HOLFreeVar{E} \HOLSymConst{\HOLTokenEPS} \HOLFreeVar{E\sb{\mathrm{1}}}}, \HOLinline{\HOLFreeVar{E\sp{\prime}} \HOLSymConst{\HOLTokenEPS} \HOLFreeVar{E\sb{\mathrm{2}}}} and
\HOLinline{\HOLFreeVar{E\sb{\mathrm{1}}} \HOLSymConst{\ensuremath{\approx}} \HOLFreeVar{E\sb{\mathrm{2}}}}. Now by definition of weak equivalence, if
\HOLinline{\HOLFreeVar{E\sb{\mathrm{1}}} --\HOLSymConst{\ensuremath{\tau}}-> \HOLFreeVar{E\sb{\mathrm{1}}\sp{\prime}}} then there exists $E2'$ such that \HOLinline{\HOLFreeVar{E\sb{\mathrm{2}}} \HOLSymConst{\HOLTokenEPS} \HOLFreeVar{E\sb{\mathrm{2}}\sp{\prime}}} and \HOLinline{\HOLFreeVar{E\sb{\mathrm{1}}\sp{\prime}} \HOLSymConst{\ensuremath{\approx}} \HOLFreeVar{E\sb{\mathrm{2}}\sp{\prime}}}. Then by transitivity of EPS,
we have \HOLinline{\HOLFreeVar{E\sp{\prime}} \HOLSymConst{\HOLTokenEPS} \HOLFreeVar{E\sb{\mathrm{2}}} \HOLSymConst{\HOLTokenConj{}} \HOLFreeVar{E\sb{\mathrm{2}}} \HOLSymConst{\HOLTokenEPS} \HOLFreeVar{E\sb{\mathrm{2}}\sp{\prime}} \HOLSymConst{\HOLTokenImp{}} \HOLFreeVar{E\sp{\prime}} \HOLSymConst{\HOLTokenEPS} \HOLFreeVar{E\sb{\mathrm{2}}\sp{\prime}}}, thus $E_2'$ is a valid response required by
observation congruence:
\begin{displaymath}
\xymatrix{
{E} \ar@{.}[r]^{\approx^c} \ar@{=>}[d]^{\epsilon} & {E'}
\ar@{=>}[d]^{\epsilon} \ar@/^4ex/[dd]^{\epsilon} \\
{\forall E_1} \ar@{.}[r]^{\approx} \ar[d]^{\tau} & {\forall E_2} \ar@{=>}[d]^{\epsilon}
\\
{\forall E_1'} \ar@{.}[r]^{\approx} & {\exists E_2'}
}
\end{displaymath}
\qed
\end{proof}

\begin{lemma}
If two processes $E$ and $E'$ are observation congruence, then for any
weak transition coming from $E$, there's a corresponding weak
transition from $E'$, and the resulting two subprocesses are weakly
equivalent:
\begin{alltt}
\HOLTokenTurnstile{} \HOLFreeVar{E} \HOLSymConst{\HOLTokenObsCongr} \HOLFreeVar{E\sp{\prime}} \HOLSymConst{\HOLTokenImp{}} \HOLSymConst{\HOLTokenForall{}}\HOLBoundVar{u} \HOLBoundVar{E\sb{\mathrm{1}}}. \HOLFreeVar{E} ==\HOLBoundVar{u}=>> \HOLBoundVar{E\sb{\mathrm{1}}} \HOLSymConst{\HOLTokenImp{}} \HOLSymConst{\HOLTokenExists{}}\HOLBoundVar{E\sb{\mathrm{2}}}. \HOLFreeVar{E\sp{\prime}} ==\HOLBoundVar{u}=>> \HOLBoundVar{E\sb{\mathrm{2}}} \HOLSymConst{\HOLTokenConj{}} \HOLBoundVar{E\sb{\mathrm{1}}} \HOLSymConst{\ensuremath{\approx}} \HOLBoundVar{E\sb{\mathrm{2}}}
\end{alltt}
\end{lemma}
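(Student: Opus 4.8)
The plan is to reduce the weak transition $E \overset{u}{\Longrightarrow} E_1$ to a single ``first'' elementary move, which can be answered through the definition \texttt{OBS\_CONGR} of $\approx^c$, followed by a trailing part that need only be answered up to weak equivalence. The decomposition is supplied by \texttt{WEAK\_TRANS\_cases1}, which splits $E \overset{u}{\Longrightarrow} E_1$ into two cases: either (i) $E \overset{\tau}{\longrightarrow} E''$ with $E'' \overset{u}{\Longrightarrow} E_1$, or (ii) $E \overset{u}{\longrightarrow} E''$ with $E'' \overset{\epsilon}{\Longrightarrow} E_1$. In both cases the very first step is a single (strong) transition of $E$, so I can feed it directly to the definition of observation congruence.

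First I would dispatch case (ii), the simpler one. Matching the single transition $E \overset{u}{\longrightarrow} E''$ through \texttt{OBS\_CONGR} yields some $E_2'$ with $E' \overset{u}{\Longrightarrow} E_2'$ and $E'' \approx E_2'$. It then remains to chase the trailing $E'' \overset{\epsilon}{\Longrightarrow} E_1$ along the weak equivalence $E'' \approx E_2'$. For this I would use (or first establish, by \texttt{EPS} induction from the $\tau$-clause of property (*) \texttt{OBS\_PROPERTY\_STAR}, exactly as in the preceding \texttt{OBS\_CONGR\_EPS} lemma) the companion fact that weak equivalence respects \texttt{EPS}: from $E'' \approx E_2'$ and $E'' \overset{\epsilon}{\Longrightarrow} E_1$ obtain $E_2$ with $E_2' \overset{\epsilon}{\Longrightarrow} E_2$ and $E_1 \approx E_2$. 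Gluing $E' \overset{u}{\Longrightarrow} E_2' \overset{\epsilon}{\Longrightarrow} E_2$ back into a single weak transition via \texttt{EPS\_AND\_WEAK} closes the case.

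Case (i) is the delicate one, and here lies the real reason the hypothesis must be observation \emph{congruence} rather than mere weak equivalence. Matching the leading $E \overset{\tau}{\longrightarrow} E''$ through \texttt{OBS\_CONGR} gives $E_r$ with $E' \overset{\tau}{\Longrightarrow} E_r$ \emph{and} $E'' \approx E_r$; crucially, this response is a genuine weak $\tau$-transition containing at least one $\tau$. The trailing $E'' \overset{u}{\Longrightarrow} E_1$ is then transported along $E'' \approx E_r$ using the companion fact that weak equivalence respects weak transitions (again reducible to property (*) together with the \texttt{EPS}-respecting lemma above): for $u$ a label this produces a weak transition $E_r \overset{u}{\Longrightarrow} E_2$, while for $u=\tau$ it produces only $E_r \overset{\epsilon}{\Longrightarrow} E_2$, in both cases with $E_1 \approx E_2$. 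Finally I recombine: for a label, $E' \overset{\tau}{\Longrightarrow} E_r$ gives $E' \overset{\epsilon}{\Longrightarrow} E_r$ by \texttt{WEAK\_TRANS\_TAU}, and \texttt{EPS\_AND\_WEAK} yields $E' \overset{u}{\Longrightarrow} E_2$; for $u=\tau$, the weak $\tau$-step $E' \overset{\tau}{\Longrightarrow} E_r$ followed by the trailing \texttt{EPS} is still a weak $\tau$-transition, once more by \texttt{EPS\_AND\_WEAK}.

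The main obstacle is precisely the $u=\tau$ sub-case: weak equivalence alone cannot guarantee a responding transition carrying at least one $\tau$ (e.g.\ $\tau.P \approx P$, yet $P$ need not answer with a real $\tau$), so a naive argument invoking only \texttt{OBS\_CONGR\_EPS} and property (*) would deliver merely an \texttt{EPS} response and fail to reconstruct $E' \overset{\tau}{\Longrightarrow} E_2$. The fix, and the heart of the proof, is to ensure that the \emph{first} elementary move of the weak transition is always answered through the stronger \texttt{OBS\_CONGR} obligation, so that at least one $\tau$ is emitted before control is handed over to the weaker weak-equivalence reasoning; the \texttt{WEAK\_TRANS\_cases1} split is exactly what exposes that first move.
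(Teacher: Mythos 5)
Your proof is correct, and it rests on the same key insight as the paper's --- the first elementary move of the weak transition must be answered under the \texttt{OBS\_CONGR} obligation, so that the mandatory $\tau$ is emitted before the argument falls back to plain weak-equivalence reasoning --- but the decomposition is genuinely different. The paper cases first on whether $u$ is $\tau$ or a visible label: for $u=\tau$ it peels off one strong $\tau$ (via \texttt{WEAK\_TRANS\_TAU\_IMP\_TRANS\_TAU}), answers it through \texttt{OBS\_CONGR}, and transports the trailing \texttt{EPS} with the existence lemma; for a label it uses the \texttt{EPS}--transition--\texttt{EPS} shape of \texttt{WEAK\_TRANS} and dispatches the \emph{leading} \texttt{EPS} with the previously proven \texttt{OBS\_CONGR\_EPS}, the middle step with property~(*), and the trailing \texttt{EPS} with the existence lemma again. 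You instead case on the identity of the first step via \texttt{WEAK\_TRANS\_cases1}, which lets both branches feed that step directly to \texttt{OBS\_CONGR} and treats $\tau$ and labels nearly uniformly. The cost is that in your branch~(i) the residue $E''\overset{u}{\Longrightarrow}E_1$ is a full weak transition, so you need the weak-transition version of the transport lemma (weak equivalence sends weak label transitions to weak label transitions, but weak $\tau$-transitions only to \texttt{EPS}) rather than just the \texttt{EPS} version the paper states --- you correctly flag that this must be established separately from property~(*) and the \texttt{EPS} existence lemma, and it is not circular since it concerns $\approx$ rather than $\approx^c$. The benefit is that \texttt{OBS\_CONGR\_EPS} is not needed inside this proof at all. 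Your recombination steps (\texttt{WEAK\_TRANS\_TAU} followed by \texttt{EPS\_AND\_WEAK}) are sound as described, including the delicate $u=\tau$ sub-case where the genuine weak $\tau$-step obtained from \texttt{OBS\_CONGR} absorbs the trailing \texttt{EPS}.
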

\begin{proof}{(sketch}
Consider the two cases when the action is $\tau$ or not
$\tau$. For all weak $\tau$-transitions coming from $E$, the
observation congruence requires that there's at least one $\tau$
following $E'$ and the resulting two sub-processes, say $E_1'$ and
$E_2$ are weak equivalence. Then the desired responses can be found by
using a similar existence lemma for weak equivalence:
\begin{displaymath}
\xymatrix{
{E} \ar@{.}[r]^{\approx^c} \ar[d]^{\tau} \ar@/^-4ex/[dd]^{\tau} & {E'}
\ar@{=>}[d]^{\tau} \ar@/^4ex/[dd]^{\tau} \\
{\exists E_1'} \ar@{.}[r]^{\approx} \ar@{=>}[d]^{\epsilon} & {\exists E_2}
\ar@{=>}[d]^{\epsilon} \\
{\forall E_1} \ar@{.}[r]^{\approx} & {\exists E_2'}
}
\end{displaymath}

For all the non-$\tau$ weak transitions from $E$, the proof follows
from previous lemma and a similar existence lemma for weak
equivalence. The following figure is a sketch for the proof of this
case:
\begin{displaymath}
\xymatrix{
{E} \ar@{.}[r]^{\approx^c} \ar@{=>}[d]^{\epsilon}
\ar@/^-4ex/[ddd]^{\forall L}
& {E'} \ar@{=>}[d]^{\epsilon} \ar@/^4ex/[ddd]^{L} \\
{\exists E_1'} \ar@{.}[r]^{\approx} \ar[d]^{L} & {\exists E_2'} \ar@{=>}[d]^{L} \\
{\exists E_2} \ar@{=>}[d]^{\epsilon} \ar@{.}[r]^{\approx} & {\exists E_2''}
\ar@{=>}[d]^{\epsilon} \\
{\forall E_1} \ar@{.}[r]^{\approx} & {\exists E2'''}
}
\end{displaymath}
In the previous figure, the existence of $E_2'$ follows by previous lemma, the existence of
$E_2''$ follows by the definition of weak equivalence, and the
existence of $E_2'''$ follows by the next existence lemma of weak equivalence.
\qed
\end{proof}

The existence lemma for weak equivalences that we mentioned in
previous proof is the following one:
\begin{lemma}
\begin{alltt}
\HOLTokenTurnstile{} \HOLFreeVar{E} \HOLSymConst{\HOLTokenEPS} \HOLFreeVar{E\sb{\mathrm{1}}} \HOLSymConst{\HOLTokenImp{}}
   \HOLSymConst{\HOLTokenForall{}}\HOLBoundVar{Wbsm} \HOLBoundVar{E\sp{\prime}}.
     \HOLConst{WEAK_BISIM} \HOLBoundVar{Wbsm} \HOLSymConst{\HOLTokenConj{}} \HOLBoundVar{Wbsm} \HOLFreeVar{E} \HOLBoundVar{E\sp{\prime}} \HOLSymConst{\HOLTokenImp{}} \HOLSymConst{\HOLTokenExists{}}\HOLBoundVar{E\sb{\mathrm{2}}}. \HOLBoundVar{E\sp{\prime}} \HOLSymConst{\HOLTokenEPS} \HOLBoundVar{E\sb{\mathrm{2}}} \HOLSymConst{\HOLTokenConj{}} \HOLBoundVar{Wbsm} \HOLFreeVar{E\sb{\mathrm{1}}} \HOLBoundVar{E\sb{\mathrm{2}}}
\end{alltt}
\end{lemma}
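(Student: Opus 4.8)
The plan is to prove the statement by induction on the \HOLinline{\HOLConst{EPS}} relation, using the left induction principle \texttt{EPS\_ind}. The essential preparatory move is to keep both the relation $Wbsm$ and the process $E'$ universally quantified \emph{inside} the predicate that is inducted upon. Concretely, I would run the induction on
\[
P(E,E_1) \;\equiv\; \forall\, Wbsm\; E'.\; \text{WEAK\_BISIM}\; Wbsm \,\wedge\, Wbsm\; E\; E' \;\Rightarrow\; \exists E_2.\; E' \overset{\epsilon}{\Longrightarrow} E_2 \,\wedge\, Wbsm\; E_1\; E_2 ,
\]
rather than fixing $Wbsm$ and $E'$ before starting. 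This is load-bearing: in the inductive step the induction hypothesis must be applied to an intermediate process that is in general \emph{different} from the original $E'$, so an induction hypothesis specialised to a single fixed $E'$ would be too weak to close the proof.

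For the base case we have $E = E_1$, and I would simply choose $E_2 := E'$. The reflexivity of \HOLinline{\HOLConst{EPS}} (it is by definition a reflexive transitive closure) gives $E' \overset{\epsilon}{\Longrightarrow} E'$, and the remaining obligation $Wbsm\; E_1\; E_2$ is literally the hypothesis $Wbsm\; E\; E'$.

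For the inductive step the \HOLinline{\HOLConst{EPS}} transition decomposes as a single first step $E \overset{\tau}{\longrightarrow} E''$ followed by $E'' \overset{\epsilon}{\Longrightarrow} E_1$, and we are handed the induction hypothesis $P(E'',E_1)$. Assuming $Wbsm$ is a weak bisimulation and $Wbsm\; E\; E'$, I would invoke the $\tau$-matching clause of the definition of \HOLinline{\HOLConst{WEAK_BISIM}}: the transition $E \overset{\tau}{\longrightarrow} E''$ forces some $G$ with $E' \overset{\epsilon}{\Longrightarrow} G$ and $Wbsm\; E''\; G$. Feeding $Wbsm$ and this $G$ into $P(E'',E_1)$ — whose side conditions are exactly the weak-bisimulation property together with the $Wbsm\; E''\; G$ just obtained — yields an $E_2$ with $G \overset{\epsilon}{\Longrightarrow} E_2$ and $Wbsm\; E_1\; E_2$. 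Composing $E' \overset{\epsilon}{\Longrightarrow} G \overset{\epsilon}{\Longrightarrow} E_2$ via transitivity of \HOLinline{\HOLConst{EPS}} delivers $E' \overset{\epsilon}{\Longrightarrow} E_2$, and $Wbsm\; E_1\; E_2$ supplies the last conjunct.

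I do not anticipate any deep obstacle here. The only genuine subtlety is the quantifier bookkeeping highlighted above: the proof goes through precisely because the induction hypothesis stays general in $Wbsm$ and $E'$, so that it can be re-instantiated at the freshly produced intermediate process $G$. Once the induction is set up correctly, each case collapses to a single use of the $\tau$-clause of \HOLinline{\HOLConst{WEAK_BISIM}} together with the reflexivity and transitivity of \HOLinline{\HOLConst{EPS}}, all of which are immediate from its definition as an RTC.
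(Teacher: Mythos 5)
Your proof is correct, and the setup you emphasize --- keeping both $Wbsm$ and $E'$ universally quantified inside the predicate that is inducted upon --- is exactly the right move; it is the reason the lemma is phrased with those quantifiers \emph{after} the \texttt{EPS} antecedent. The paper itself states this lemma bare, with no proof, so the only in-paper point of comparison is its proof of the companion lemma \texttt{OBS\_CONGR\_EPS}, which inducts from the right (\texttt{EPS\_ind\_right}), matching the \emph{last} $\tau$-step of the chain against the weak equivalence of the intermediate pair. You instead induct from the left (\texttt{EPS\_ind}), matching the \emph{first} $\tau$-step via the $\tau$-clause of \texttt{WEAK\_BISIM}. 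Both directions work for this lemma, and the reason yours goes through is precisely the generality you insist on: after matching $E \overset{\tau}{\longrightarrow} E''$ against $Wbsm$, the successors $E''$ and $G$ are again related by the \emph{same} $Wbsm$, so the induction hypothesis can be re-instantiated at $G$. It is worth noting that this is exactly what fails for the congruence analogue \texttt{OBS\_CONGR\_EPS}: there, after the first transition one retains only weak equivalence, not observation congruence, so a left-induction hypothesis would be inapplicable and the right induction is forced. Your left-to-right decomposition is therefore not only valid but arguably the more natural choice for this particular lemma, since it consumes the hypothesis $Wbsm\;E\;E'$ immediately and never needs to re-examine the head of the chain.
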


Now we prove the transitivity of observation congruence:
\begin{theorem}{(Transitivity of Observation Congruence)}
\begin{alltt}
\HOLTokenTurnstile{} \HOLFreeVar{E} \HOLSymConst{\HOLTokenObsCongr} \HOLFreeVar{E\sp{\prime}} \HOLSymConst{\HOLTokenConj{}} \HOLFreeVar{E\sp{\prime}} \HOLSymConst{\HOLTokenObsCongr} \HOLFreeVar{E\sp{\prime\prime}} \HOLSymConst{\HOLTokenImp{}} \HOLFreeVar{E} \HOLSymConst{\HOLTokenObsCongr} \HOLFreeVar{E\sp{\prime\prime}}\hfill[OBS_CONGR_TRANS]
\end{alltt}
\end{theorem}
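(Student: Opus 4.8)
The plan is to unfold the definition of observation congruence (\texttt{OBS\_CONGR}) in the conclusion and establish the two symmetric matching conditions. Fix an action $u$. For the forward direction I assume a strong transition $E \overset{u}{\to} E_1$ and must exhibit a weak transition $E'' \overset{u}{\Longrightarrow} E_3$ with $E_1 \approx E_3$; the backward direction, starting from $E'' \overset{u}{\to} E_3$, is handled in mirror image.

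For the forward direction I would first apply the definition of $E \approx^c E'$ to the transition $E \overset{u}{\to} E_1$, obtaining some $E_2$ with $E' \overset{u}{\Longrightarrow} E_2$ and $E_1 \approx E_2$. The crucial observation is that what we now hold from $E'$ is only a \emph{weak} transition, so the definition of $E' \approx^c E''$, which matches strong transitions on the left, cannot be applied to it directly. This is exactly where the lemma stating that observation congruence preserves weak transitions (with weakly equivalent residuals) is indispensable: applying it to $E' \approx^c E''$ and the weak transition $E' \overset{u}{\Longrightarrow} E_2$ yields $E_3$ with $E'' \overset{u}{\Longrightarrow} E_3$ and $E_2 \approx E_3$. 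Chaining $E_1 \approx E_2$ with $E_2 \approx E_3$ via transitivity of weak equivalence (\texttt{WEAK\_EQUIV\_TRANS}) gives $E_1 \approx E_3$, closing the direction.

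The backward direction runs the same steps in the opposite order. Assuming $E'' \overset{u}{\to} E_3$, I would first use the definition of $E' \approx^c E''$ to produce $E_2$ with $E' \overset{u}{\Longrightarrow} E_2$ and $E_2 \approx E_3$, then convert $E \approx^c E'$ into $E' \approx^c E$ by symmetry (\texttt{OBS\_CONGR\_SYM}) and apply the weak-transition-preservation lemma to $E' \overset{u}{\Longrightarrow} E_2$, obtaining $E_1$ with $E \overset{u}{\Longrightarrow} E_1$ and $E_2 \approx E_1$. Symmetry and transitivity of weak equivalence then deliver $E_1 \approx E_3$, as required.

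The main obstacle is not this bookkeeping but the weak-transition-preservation lemma itself, which is the only device that lets a weak transition be matched across an observation congruence; without it one is stuck immediately after the first step, holding a weak transition from $E'$ that the bare definition of observation congruence cannot consume. That lemma in turn rests on \texttt{OBS\_CONGR\_EPS} (propagating the leading and trailing $\overset{\epsilon}{\Longrightarrow}$ segments of a weak transition) and on the existence lemma for weak bisimulations. Consequently the real depth of the argument is concentrated in those auxiliary results, and the transitivity theorem itself is a short composition once they are in hand.
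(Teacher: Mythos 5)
Your proposal is correct and follows essentially the same route as the paper: match the strong transition through the first congruence, then push the resulting \emph{weak} transition through the second congruence via the weak-transition-preservation lemma (the paper's Lemma~2, itself built on \texttt{OBS\_CONGR\_EPS} and the existence lemma for weak bisimulations), and close with transitivity of weak equivalence. The paper dispatches the backward direction with ``completely symmetric,'' which your explicit use of \texttt{OBS\_CONGR\_SYM} merely spells out.
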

\begin{proof}
Suppose \HOLinline{\HOLFreeVar{E} \HOLSymConst{\HOLTokenObsCongr} \HOLFreeVar{E\sp{\prime}}} and \HOLinline{\HOLFreeVar{E\sp{\prime}} \HOLSymConst{\HOLTokenObsCongr} \HOLFreeVar{E\sp{\prime\prime}}}, we're
going to prove \HOLinline{\HOLFreeVar{E} \HOLSymConst{\HOLTokenObsCongr} \HOLFreeVar{E\sp{\prime\prime}}} by checking directly the
definition of observation congruence.

For any $u$ and $E_1$ which satisfy \HOLinline{\HOLFreeVar{E} --\HOLFreeVar{u}-> \HOLFreeVar{E\sb{\mathrm{1}}}}, by definition of observation congruence, there exists $E_2$
such that \HOLinline{\HOLFreeVar{E\sp{\prime}} ==\HOLFreeVar{u}=>> \HOLFreeVar{E\sb{\mathrm{2}}}} with \HOLinline{\HOLFreeVar{E\sb{\mathrm{1}}} \HOLSymConst{\ensuremath{\approx}} \HOLFreeVar{E\sb{\mathrm{2}}}}. By above Lemma 2, there exists
another $E_3$ such that \HOLinline{\HOLFreeVar{E\sp{\prime\prime}} ==\HOLFreeVar{u}=>> \HOLFreeVar{E\sb{\mathrm{3}}}} with
\HOLinline{\HOLFreeVar{E\sb{\mathrm{2}}} \HOLSymConst{\ensuremath{\approx}} \HOLFreeVar{E\sb{\mathrm{3}}}}. By the already proven transitivity of weak
equivalence, \HOLinline{\HOLFreeVar{E\sb{\mathrm{1}}} \HOLSymConst{\ensuremath{\approx}} \HOLFreeVar{E\sb{\mathrm{3}}}}, thus $E_3$ is the required
process which satisfies the definition of observation
congruence. This proves the first part. The other part is completely symmetric.
\begin{displaymath}
\xymatrix{
{\forall E1} \ar@{.}[r]^{\approx} \ar@/^4ex/[rr]^{\approx (goal)} &
{\exists E_2}
\ar@{.}[r]^{\approx} & {\exists E_3} \\
{\forall E} \ar[u]^{\forall u} \ar@{.}[r]^{\approx^c} & {E'} \ar@{=>}[u]^{u}
\ar@{.}[r]^{\approx^c} & {E''} \ar@{=>}[u]^{u}
}
\end{displaymath}
\qed
\end{proof}

Then we have proved the substitutivity of observation congruence under
various CCS process operators:
\begin{proposition}
\begin{enumerate}
\item Observation congruence is substitutive under the prefix
  operator:
\begin{alltt}
\HOLTokenTurnstile{} \HOLFreeVar{E} \HOLSymConst{\HOLTokenObsCongr} \HOLFreeVar{E\sp{\prime}} \HOLSymConst{\HOLTokenImp{}} \HOLSymConst{\HOLTokenForall{}}\HOLBoundVar{u}. \HOLBoundVar{u}\HOLSymConst{..}\HOLFreeVar{E} \HOLSymConst{\HOLTokenObsCongr} \HOLBoundVar{u}\HOLSymConst{..}\HOLFreeVar{E\sp{\prime}}\hfill[OBS_CONGR_SUBST_PREFIX]
\end{alltt}
\item Observation congruence is substitutive under binary summation:
\begin{alltt}
\HOLTokenTurnstile{} \HOLFreeVar{p} \HOLSymConst{\HOLTokenObsCongr} \HOLFreeVar{q} \HOLSymConst{\HOLTokenConj{}} \HOLFreeVar{r} \HOLSymConst{\HOLTokenObsCongr} \HOLFreeVar{s} \HOLSymConst{\HOLTokenImp{}} \HOLFreeVar{p} \HOLSymConst{+} \HOLFreeVar{r} \HOLSymConst{\HOLTokenObsCongr} \HOLFreeVar{q} \HOLSymConst{+} \HOLFreeVar{s}\hfill[OBS_CONGR_PRESD_BY_SUM]
\end{alltt}
\item Observation congruence is preserved by parallel composition:
\begin{alltt}
\HOLTokenTurnstile{} \HOLFreeVar{E\sb{\mathrm{1}}} \HOLSymConst{\HOLTokenObsCongr} \HOLFreeVar{E\sb{\mathrm{1}}\sp{\prime}} \HOLSymConst{\HOLTokenConj{}} \HOLFreeVar{E\sb{\mathrm{2}}} \HOLSymConst{\HOLTokenObsCongr} \HOLFreeVar{E\sb{\mathrm{2}}\sp{\prime}} \HOLSymConst{\HOLTokenImp{}} \HOLFreeVar{E\sb{\mathrm{1}}} \HOLSymConst{||} \HOLFreeVar{E\sb{\mathrm{2}}} \HOLSymConst{\HOLTokenObsCongr} \HOLFreeVar{E\sb{\mathrm{1}}\sp{\prime}} \HOLSymConst{||} \HOLFreeVar{E\sb{\mathrm{2}}\sp{\prime}}\hfill[OBS_CONGR_PRESD_BY_PAR]
\end{alltt}
\item Observation congruence is substitutive under the restriction
  operator:
\begin{alltt}
\HOLTokenTurnstile{} \HOLFreeVar{E} \HOLSymConst{\HOLTokenObsCongr} \HOLFreeVar{E\sp{\prime}} \HOLSymConst{\HOLTokenImp{}} \HOLSymConst{\HOLTokenForall{}}\HOLBoundVar{L}. \HOLSymConst{\ensuremath{\nu}} \HOLBoundVar{L} \HOLFreeVar{E} \HOLSymConst{\HOLTokenObsCongr} \HOLSymConst{\ensuremath{\nu}} \HOLBoundVar{L} \HOLFreeVar{E\sp{\prime}}\hfill[OBS_CONGR_SUBST_RESTR]
\end{alltt}
\item Observation congruence is substitutive under the relabeling
  operator:
\begin{alltt}
\HOLTokenTurnstile{} \HOLFreeVar{E} \HOLSymConst{\HOLTokenObsCongr} \HOLFreeVar{E\sp{\prime}} \HOLSymConst{\HOLTokenImp{}} \HOLSymConst{\HOLTokenForall{}}\HOLBoundVar{rf}. \HOLConst{relab} \HOLFreeVar{E} \HOLBoundVar{rf} \HOLSymConst{\HOLTokenObsCongr} \HOLConst{relab} \HOLFreeVar{E\sp{\prime}} \HOLBoundVar{rf}\hfill[OBS_CONGR_SUBST_RELAB]
\end{alltt}
\end{enumerate}
\end{proposition}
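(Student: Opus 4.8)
The plan is to prove each clause by directly unfolding the definition \texttt{OBS\_CONGR}. In every case the proof obligation splits into two symmetric simulation requirements (a move of the left process matched by a weak move of the right, and vice versa); since each clause is stated symmetrically in its two related arguments, I would prove only the left-to-right direction generically and obtain the other half by applying that direction to the swapped hypotheses via \texttt{OBS\_CONGR\_SYM}, closing with \texttt{WEAK\_EQUIV\_SYM}. The unifying idea is that observation congruence is the ``rooted'' refinement of $\approx$: after the very first transition the residuals need only be related by $\approx$, and weak equivalence has \emph{already} been shown substitutive under every operator (\texttt{WEAK\_EQUIV\_SUBST\_PREFIX}, \texttt{WEAK\_EQUIV\_PRESD\_BY\_PAR}, \texttt{WEAK\_EQUIV\_SUBST\_RESTR}, \texttt{WEAK\_EQUIV\_SUBST\_RELAB}). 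Hence each clause reduces to two independent subtasks: (a) correctly \emph{matching the first step} at the level of weak transitions, and (b) closing the residuals by the corresponding weak-equivalence lemma, using that $\approx^c$ implies $\approx$ (\texttt{OBS\_CONGR\_IMP\_WEAK\_EQUIV}) to supply $\approx$ for any passive component.

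For the prefix clause, I would use that the sole transition of $u.E$ is $u.E \overset{u}{\longrightarrow} E$; the matching weak move is $u.E' \overset{u}{\Longrightarrow} E'$, and the residual requirement $E \approx E'$ is exactly \texttt{OBS\_CONGR\_IMP\_WEAK\_EQUIV} applied to the hypothesis. For the summation clause $p + r \approx^c q + s$, a transition of $p + r$ must, by inversion on the SOS rules $(Sum_1)$/$(Sum_2)$, already be a transition of $p$ or of $r$; I apply the congruence hypothesis of the active summand to obtain a weak response of $q$ (resp.\ $s$), then lift it to $q + s$ using precisely the weak-transition SOS laws \texttt{WEAK\_SUM1}/\texttt{WEAK\_SUM2} proved earlier --- this is exactly why those weak versions were established. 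The residual is carried unchanged, so no further closure is needed.

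For restriction and relabelling, a transition of the compound process is, by inversion on the corresponding SOS rule, driven by a transition of the inner process (subject to the label-admissibility side condition for restriction, or an image label for relabelling). I apply the congruence hypothesis to the inner process to obtain an inner weak response, re-wrap it through the operator to form a weak transition of the compound process (this needs weak-transition versions of the restriction and relabelling SOS rules, which I would prove by induction along \texttt{EPS} from the single-step rules), and finally close the residuals with \texttt{WEAK\_EQUIV\_SUBST\_RESTR} / \texttt{WEAK\_EQUIV\_SUBST\_RELAB}.

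The parallel clause is where the real work lies. A transition of $E_1 \| E_2$ arises in three ways: from $E_1$ alone, from $E_2$ alone, or from a synchronisation $E_1 \overset{l}{\longrightarrow} F_1$, $E_2 \overset{\bar l}{\longrightarrow} F_2$ yielding $E_1 \| E_2 \overset{\tau}{\longrightarrow} F_1 \| F_2$. The two ``alone'' cases are routine: apply the relevant congruence hypothesis, lift the resulting weak move through $\|$ on one side, and close with \texttt{WEAK\_EQUIV\_PRESD\_BY\_PAR}, feeding in $\approx$ for the passive component via \texttt{OBS\_CONGR\_IMP\_WEAK\_EQUIV}. The synchronisation case is the main obstacle: from $E_1 \approx^c E_1'$ and $E_2 \approx^c E_2'$ I obtain matching weak transitions $E_1' \overset{l}{\Longrightarrow} F_1'$ and $E_2' \overset{\bar l}{\Longrightarrow} F_2'$, and I must assemble these into a \emph{single synchronised} weak transition $E_1' \| E_2' \overset{\tau}{\Longrightarrow} F_1' \| F_2'$. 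This requires an auxiliary lemma combining two complementary weak label-transitions into one weak $\tau$-transition of the composite --- essentially interleaving the leading and trailing $\overset{\epsilon}{\Longrightarrow}$ segments of each component through $\|$ and inserting one genuine synchronisation step in the middle. Once that lemma is available, the residual $F_1 \| F_2 \approx F_1' \| F_2'$ follows immediately from \texttt{WEAK\_EQUIV\_PRESD\_BY\_PAR}. I expect establishing this weak synchronisation lemma to be the most delicate step, since it forces careful bookkeeping of the $\tau$-prefix and $\tau$-suffix of each component's weak transition and of their commutation with parallel composition.
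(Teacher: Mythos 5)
Your proposal is correct and is essentially the proof the paper's development anticipates: the paper states this proposition without an explicit proof, but the infrastructure it establishes --- \texttt{WEAK\_SUM1}/\texttt{WEAK\_SUM2} (explicitly flagged as the weak versions of $(Sum_1)$/$(Sum_2)$), \texttt{OBS\_CONGR\_IMP\_WEAK\_EQUIV}, \texttt{OBS\_CONGR\_SYM}, and the weak-equivalence substitutivity lemmas for prefix, parallel, restriction and relabelling --- is exactly the toolkit your argument invokes, including the one genuinely delicate point you correctly isolate (assembling two complementary weak label-transitions into one weak $\tau$-transition of the parallel composite via EPS-lifting through $\|$). No gaps.
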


Finally, like the case for weak equivalence, we can easily prove the
relationship between strong equivalence and observation congruence:
\begin{theorem}{(Strong equivalence implies observation congruence)}
\begin{alltt}
\HOLTokenTurnstile{} \HOLFreeVar{E} \HOLSymConst{\ensuremath{\sim}} \HOLFreeVar{E\sp{\prime}} \HOLSymConst{\HOLTokenImp{}} \HOLFreeVar{E} \HOLSymConst{\HOLTokenObsCongr} \HOLFreeVar{E\sp{\prime}}\hfill[STRONG_IMP_OBS_CONGR]
\end{alltt}
\end{theorem}

With this result, all algebraic laws for observation congruence can be
derived from the corresponding algebraic laws of strong
equivalence. Here we omit these theorems, except for the following four
$\tau$-laws:
\begin{theorem}{(The $\tau$-laws for observation congruence)}
\begin{alltt}
\HOLTokenTurnstile{} \HOLFreeVar{u}\HOLSymConst{..}\HOLSymConst{\ensuremath{\tau}}\HOLSymConst{..}\HOLFreeVar{E} \HOLSymConst{\HOLTokenObsCongr} \HOLFreeVar{u}\HOLSymConst{..}\HOLFreeVar{E}\hfill[TAU1]
\HOLTokenTurnstile{} \HOLFreeVar{E} \HOLSymConst{+} \HOLSymConst{\ensuremath{\tau}}\HOLSymConst{..}\HOLFreeVar{E} \HOLSymConst{\HOLTokenObsCongr} \HOLSymConst{\ensuremath{\tau}}\HOLSymConst{..}\HOLFreeVar{E}\hfill[TAU2]
\HOLTokenTurnstile{} \HOLFreeVar{u}\HOLSymConst{..}(\HOLFreeVar{E} \HOLSymConst{+} \HOLSymConst{\ensuremath{\tau}}\HOLSymConst{..}\HOLFreeVar{E\sp{\prime}}) \HOLSymConst{+} \HOLFreeVar{u}\HOLSymConst{..}\HOLFreeVar{E\sp{\prime}} \HOLSymConst{\HOLTokenObsCongr} \HOLFreeVar{u}\HOLSymConst{..}(\HOLFreeVar{E} \HOLSymConst{+} \HOLSymConst{\ensuremath{\tau}}\HOLSymConst{..}\HOLFreeVar{E\sp{\prime}})\hfill[TAU3]
\HOLTokenTurnstile{} \HOLFreeVar{E} \HOLSymConst{+} \HOLSymConst{\ensuremath{\tau}}\HOLSymConst{..}(\HOLFreeVar{E\sp{\prime}} \HOLSymConst{+} \HOLFreeVar{E}) \HOLSymConst{\HOLTokenObsCongr} \HOLSymConst{\ensuremath{\tau}}\HOLSymConst{..}(\HOLFreeVar{E\sp{\prime}} \HOLSymConst{+} \HOLFreeVar{E})\hfill[TAU_STRAT]
\end{alltt}
\end{theorem}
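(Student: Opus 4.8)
The plan is to prove each of the four $\tau$-laws directly from the definition of observation congruence, \texttt{OBS\_CONGR}, rather than deriving them from anything earlier. These are the genuinely \emph{weak} laws: unlike every other algebraic law for $\approx^c$, they are \emph{not} instances of strong equivalence (for example $\tau.E \sim E$ is false), so \texttt{STRONG\_IMP\_OBS\_CONGR} is of no use here. For a goal $P \approx^c Q$, unfolding \texttt{OBS\_CONGR} splits the obligation into two symmetric halves: each strong transition $P \overset{u}{\longrightarrow} P_1$ must be answered by a weak transition $Q \overset{u}{\Longrightarrow} Q_1$ with $P_1 \approx Q_1$, and conversely. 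Since every process appearing in these laws is built only from prefix and summation, the transitions leaving it are few and completely determined by \texttt{SOS} inversion, so each law reduces to a small finite case analysis.

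Concretely, for \texttt{TAU1} ($u.\tau.E \approx^c u.E$) the only move of $u.\tau.E$ is $u.\tau.E \overset{u}{\longrightarrow} \tau.E$, which I would answer by $u.E \overset{u}{\longrightarrow} E$ (a weak transition by \texttt{TRANS\_IMP\_WEAK\_TRANS}), discharging the residual obligation $\tau.E \approx E$ with \texttt{TAU\_WEAK}; the converse move $u.E \overset{u}{\longrightarrow} E$ I would answer by absorbing the extra silent step, $u.\tau.E \overset{u}{\Longrightarrow} E$, closing with \texttt{WEAK\_EQUIV\_REFL}. \texttt{TAU2}, \texttt{TAU3} and \texttt{TAU\_STRAT} follow the same recipe: after inversion the summation cases are matched either by the corresponding summand directly or by prepending/appending one $\tau$-step, and in every case the residual weak-equivalence obligation is either an instance of \texttt{WEAK\_EQUIV\_REFL} or of \texttt{TAU\_WEAK}. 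No other fact about $\approx$ is needed, which keeps the proofs short.

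The main obstacle, such as it is, is the construction of the weak transitions that \emph{absorb an extra $\tau$ move} -- for instance showing $u.\tau.E \overset{u}{\Longrightarrow} E$, or $u.(E + \tau.E') \overset{u}{\Longrightarrow} E'$, or $\tau.E \overset{a}{\Longrightarrow} E'$ when $E \overset{a}{\longrightarrow} E'$. These are exactly the places where a visible action is adjacent to a silent one, and I would assemble them from the \texttt{EPS}/\texttt{WEAK\_TRANS} plumbing lemmas \texttt{TRANS\_AND\_EPS} (a transition followed by \texttt{EPS}) and \texttt{TRANS\_TAU\_AND\_WEAK} (a $\tau$-transition followed by a weak transition), together with \texttt{TRANS\_TAU\_IMP\_EPS} to turn a single $\tau$ into an \texttt{EPS} step. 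Beyond this, the only real care is bookkeeping: ensuring each enumerated transition is answered by a weak transition carrying the \emph{same} action label $u$, and keeping the two summation cases (\texttt{Sum1} versus \texttt{Sum2}) from being conflated during inversion.
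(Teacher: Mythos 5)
Your proposal is correct, and since the paper states these four laws without giving any proof (it only remarks that, unlike the other algebraic laws for $\approx^c$, they are not inherited from strong equivalence), there is no printed argument to measure it against. Your direct method --- unfold \texttt{OBS\_CONGR}, enumerate the finitely many strong transitions of each side by \texttt{TRANS\_PREFIX}/\texttt{TRANS\_SUM} inversion, and answer each one either by the matching summand or by absorbing one silent step via \texttt{TRANS\_AND\_EPS}, \texttt{TRANS\_TAU\_AND\_WEAK} and \texttt{TRANS\_TAU\_IMP\_EPS}, closing residuals with \texttt{WEAK\_EQUIV\_REFL} or \texttt{TAU\_WEAK} --- goes through in every case; I checked all four case analyses and none is missing a transition. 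The one observation worth adding is that \texttt{TAU\_STRAT} is traditionally not proved from scratch but derived from \texttt{TAU2}: by \texttt{TAU2} one has $(E'+E)+\tau.(E'+E)\approx^c\tau.(E'+E)$, and then the strong sum laws (associativity and absorption of the duplicated summand $E$ inside $E'+E$), lifted through \texttt{STRONG\_IMP\_OBS\_CONGR} together with substitutivity and transitivity of $\approx^c$, yield $E+\tau.(E'+E)\approx^c\tau.(E'+E)$; the name ``\texttt{STRAT}'' reflects exactly this derived-law strategy. Your direct proof is no harder and arguably more self-contained, but it forgoes the reuse of \texttt{TAU2} and of the already-established strong laws that the derivation buys.
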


\section{Deng lemma and Hennessy lemma}

The relationship between weak equivalence and observation congruence
was an interesting research topic, and there're many deep lemmas
related. In this project, we have proved two such deep lemmas. The
first one is the following Deng Lemma (for weak
bisimularity\footnote{The original Deng lemma is for another kind of
  equivalence relation called \emph{rooted branching bisimularity},
  which is not touched in this project.}):
\begin{theorem}{(Deng lemma for weak bisimilarity)}
If \HOLinline{\HOLFreeVar{p} \HOLSymConst{\ensuremath{\approx}} \HOLFreeVar{q}}, then one of the following three cases
holds:
\begin{enumerate}
\item $\exists p'$ such that \HOLinline{\HOLFreeVar{p} --\HOLSymConst{\ensuremath{\tau}}-> \HOLFreeVar{p\sp{\prime}}} and
  \HOLinline{\HOLFreeVar{p\sp{\prime}} \HOLSymConst{\ensuremath{\approx}} \HOLFreeVar{q}}, or
\item $\exists q'$ such that \HOLinline{\HOLFreeVar{q} --\HOLSymConst{\ensuremath{\tau}}-> \HOLFreeVar{q\sp{\prime}}} and
  \HOLinline{\HOLFreeVar{p} \HOLSymConst{\ensuremath{\approx}} \HOLFreeVar{q\sp{\prime}}}, or
\item \HOLinline{\HOLFreeVar{p} \HOLSymConst{\HOLTokenObsCongr} \HOLFreeVar{q}}.
\end{enumerate}
\begin{alltt}
\HOLTokenTurnstile{} \HOLFreeVar{p} \HOLSymConst{\ensuremath{\approx}} \HOLFreeVar{q} \HOLSymConst{\HOLTokenImp{}}
   (\HOLSymConst{\HOLTokenExists{}}\HOLBoundVar{p\sp{\prime}}. \HOLFreeVar{p} --\HOLSymConst{\ensuremath{\tau}}-> \HOLBoundVar{p\sp{\prime}} \HOLSymConst{\HOLTokenConj{}} \HOLBoundVar{p\sp{\prime}} \HOLSymConst{\ensuremath{\approx}} \HOLFreeVar{q}) \HOLSymConst{\HOLTokenDisj{}} (\HOLSymConst{\HOLTokenExists{}}\HOLBoundVar{q\sp{\prime}}. \HOLFreeVar{q} --\HOLSymConst{\ensuremath{\tau}}-> \HOLBoundVar{q\sp{\prime}} \HOLSymConst{\HOLTokenConj{}} \HOLFreeVar{p} \HOLSymConst{\ensuremath{\approx}} \HOLBoundVar{q\sp{\prime}}) \HOLSymConst{\HOLTokenDisj{}}
   \HOLFreeVar{p} \HOLSymConst{\HOLTokenObsCongr} \HOLFreeVar{q}\hfill[DENG_LEMMA]
\end{alltt}
\end{theorem}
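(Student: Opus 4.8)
The plan is to reduce the three-way disjunction to a single implication. Since $A \vee B \vee C$ is classically equivalent to $(\neg A \wedge \neg B) \Rightarrow C$, I would assume $p \approx q$ together with the negations of the first two alternatives and then prove the third, $p \approx^c q$, by unfolding the definition \texttt{OBS\_CONGR} directly. The working hypotheses thus become: for every $p'$ with $p \overset{\tau}{\longrightarrow} p'$ we have $\neg(p' \approx q)$, and for every $q'$ with $q \overset{\tau}{\longrightarrow} q'$ we have $\neg(p \approx q')$.

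The heart of the argument is the observation that, given $p \approx q$, the only gap between weak equivalence and observation congruence lies in how $\tau$-moves are answered: property (*) (\texttt{OBS\_PROPERTY\_STAR}) answers a $\tau$-move with an \texttt{EPS} step $\overset{\epsilon}{\Longrightarrow}$, whereas \texttt{OBS\_CONGR} demands a genuine weak $\tau$-move $\overset{\tau}{\Longrightarrow}$ containing at least one real $\tau$. The bridge is \texttt{EPS\_IMP\_WEAK\_TRANS}, which splits any \texttt{EPS} step into the reflexive case ($q = q_2$) or a genuine weak $\tau$-move. For the visible-action requirements of \texttt{OBS\_CONGR} there is no gap at all: they coincide verbatim with the label clauses of property (*), so those sub-goals discharge immediately from $p \approx q$.

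So, to verify \texttt{OBS\_CONGR}, I would split on whether the action is a label or $\tau$ and treat both matching directions. For a $\tau$-move $p \overset{\tau}{\longrightarrow} p_1$, property (*) yields some $q_2$ with $q \overset{\epsilon}{\Longrightarrow} q_2$ and $p_1 \approx q_2$; applying \texttt{EPS\_IMP\_WEAK\_TRANS} gives either $q \overset{\tau}{\Longrightarrow} q_2$ --- exactly the response \texttt{OBS\_CONGR} needs --- or $q = q_2$, whence $p_1 \approx q$, which together with $p \overset{\tau}{\longrightarrow} p_1$ would be alternative~1 and is excluded by hypothesis. The symmetric $\tau$-move $q \overset{\tau}{\longrightarrow} q_2$ is handled identically, with the reflexive case ruled out by the negation of alternative~2.

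The proof is therefore mostly a disciplined case analysis rather than a source of deep difficulty; the main thing to get right is the bookkeeping of the four sub-goals of \texttt{OBS\_CONGR} (two directions crossed with label versus $\tau$) and the careful use of the negated hypotheses at exactly the two points where the reflexive \texttt{EPS} branch must be discarded. The one genuinely load-bearing insight is that the zero-$\tau$ branch of \texttt{EPS\_IMP\_WEAK\_TRANS} is precisely what manifests as Deng's first two alternatives, so that discarding those branches is legitimate exactly when alternatives~1 and~2 fail.
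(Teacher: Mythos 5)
Your proposal is correct and follows essentially the same route as the paper's proof: the same tautological reduction of the disjunction to proving $p \approx^c q$ under the two negated alternatives, the same case split on $\tau$ versus visible actions when unfolding \texttt{OBS\_CONGR}, and the same use of \texttt{EPS\_IMP\_WEAK\_TRANS} (the paper phrases it as ``$q \neq E_2$, thus the EPS contains at least one $\tau$'') to discharge the reflexive branch against the negated hypotheses. Nothing to add --- the bookkeeping you describe is exactly what the paper's formal proof does.
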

\begin{proof}
Actually there's no need to consider thee difference cases. Using the
logical tautology \HOLinline{(\HOLSymConst{\HOLTokenNeg{}}\HOLFreeVar{P} \HOLSymConst{\HOLTokenConj{}} \HOLSymConst{\HOLTokenNeg{}}\HOLFreeVar{Q} \HOLSymConst{\HOLTokenImp{}} \HOLFreeVar{R}) \HOLSymConst{\HOLTokenImp{}} \HOLFreeVar{P} \HOLSymConst{\HOLTokenDisj{}} \HOLFreeVar{Q} \HOLSymConst{\HOLTokenDisj{}} \HOLFreeVar{R}},
the theorem can be reduced to the following goal:
\begin{quote}
Prove \HOLinline{\HOLFreeVar{p} \HOLSymConst{\HOLTokenObsCongr} \HOLFreeVar{q}}, with the following three assumptions:
\begin{enumerate}
\item \HOLinline{\HOLFreeVar{p} \HOLSymConst{\ensuremath{\approx}} \HOLFreeVar{q}}
\item \HOLinline{\HOLSymConst{\HOLTokenNeg{}}\HOLSymConst{\HOLTokenExists{}}\HOLBoundVar{p\sp{\prime}}. \HOLFreeVar{p} --\HOLSymConst{\ensuremath{\tau}}-> \HOLBoundVar{p\sp{\prime}} \HOLSymConst{\HOLTokenConj{}} \HOLBoundVar{p\sp{\prime}} \HOLSymConst{\ensuremath{\approx}} \HOLFreeVar{q}}
\item \HOLinline{\HOLSymConst{\HOLTokenNeg{}}\HOLSymConst{\HOLTokenExists{}}\HOLBoundVar{q\sp{\prime}}. \HOLFreeVar{q} --\HOLSymConst{\ensuremath{\tau}}-> \HOLBoundVar{q\sp{\prime}} \HOLSymConst{\HOLTokenConj{}} \HOLFreeVar{p} \HOLSymConst{\ensuremath{\approx}} \HOLBoundVar{q\sp{\prime}}}
\end{enumerate}
\end{quote}

Now we check the definition of observation congruence: for any
transition from $p$, say \HOLinline{\HOLFreeVar{p} --\HOLFreeVar{u}-> \HOLFreeVar{E\sb{\mathrm{1}}}}, consider the cases when
$u = \tau$ and $u \neq \tau$:
\begin{enumerate}
\item If $u = \tau$, then by \HOLinline{\HOLFreeVar{p} \HOLSymConst{\ensuremath{\approx}} \HOLFreeVar{q}} and the definition
  of weak equivalence, there exists $E_2$ such that \HOLinline{\HOLFreeVar{q} \HOLSymConst{\HOLTokenEPS} \HOLFreeVar{E\sb{\mathrm{2}}}}
  and \HOLinline{\HOLFreeVar{E\sb{\mathrm{1}}} \HOLSymConst{\ensuremath{\approx}} \HOLFreeVar{E\sb{\mathrm{2}}}}. But by assumption we know $q \neq E2$, thus
  \HOLinline{\HOLFreeVar{q} \HOLSymConst{\HOLTokenEPS} \HOLFreeVar{E\sb{\mathrm{2}}}} contains at least one $\tau$-transition, thus is
  actually \HOLinline{\HOLFreeVar{q} ==\HOLSymConst{\ensuremath{\tau}}=>> \HOLFreeVar{E\sb{\mathrm{2}}}}, which is required by the definition
  of observation congruence for \HOLinline{\HOLFreeVar{p} \HOLSymConst{\ensuremath{\approx}} \HOLFreeVar{q}}.
\begin{displaymath}
\xymatrix{
{p} \ar@{.}[r]^{\approx^c} \ar[d]^{\tau} & {q} \ar@{=>}[d]^{\epsilon
  (\tau)} \\
{\forall E_1} \ar@{.}[ur]^{\not\approx} \ar@{.}[r]^{\approx} &
{\exists E_2}
}
\end{displaymath}
\item If $u = L$, then the requirement of observation congruence is
  directly satisfied.
\end{enumerate}
The other direction is completely symmetric.\qed
\end{proof}

Now we start to prove Hennessy Lemma:
\begin{theorem}{(Hennessy Lemma)}
For any processes $p$ and $q$, \HOLinline{\HOLFreeVar{p} \HOLSymConst{\ensuremath{\approx}} \HOLFreeVar{q}} if and only if
(\HOLinline{\HOLFreeVar{p} \HOLSymConst{\HOLTokenObsCongr} \HOLFreeVar{q}} or \HOLinline{\HOLFreeVar{p} \HOLSymConst{\HOLTokenObsCongr} \HOLSymConst{\ensuremath{\tau}}\HOLSymConst{..}\HOLFreeVar{q}} or
\HOLinline{\HOLSymConst{\ensuremath{\tau}}\HOLSymConst{..}\HOLFreeVar{p} \HOLSymConst{\HOLTokenObsCongr} \HOLFreeVar{q}}):
\begin{alltt}
\HOLTokenTurnstile{} \HOLFreeVar{p} \HOLSymConst{\ensuremath{\approx}} \HOLFreeVar{q} \HOLSymConst{\HOLTokenEquiv{}} \HOLFreeVar{p} \HOLSymConst{\HOLTokenObsCongr} \HOLFreeVar{q} \HOLSymConst{\HOLTokenDisj{}} \HOLFreeVar{p} \HOLSymConst{\HOLTokenObsCongr} \HOLSymConst{\ensuremath{\tau}}\HOLSymConst{..}\HOLFreeVar{q} \HOLSymConst{\HOLTokenDisj{}} \HOLSymConst{\ensuremath{\tau}}\HOLSymConst{..}\HOLFreeVar{p} \HOLSymConst{\HOLTokenObsCongr} \HOLFreeVar{q}\hfill[HENNESSY_LEMMA]
\end{alltt}
\end{theorem}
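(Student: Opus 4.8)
The statement is a biconditional, so the plan is to prove the two directions separately, with the reverse ($\Leftarrow$) direction being routine and the forward ($\Rightarrow$) direction being where the real work — and \texttt{DENG\_LEMMA} — comes in.

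For the $\Leftarrow$ direction I would do case analysis on the three disjuncts. If $p \approx^c q$, then $p \approx q$ is immediate from \texttt{OBS\_CONGR\_IMP\_WEAK\_EQUIV}. If $p \approx^c \tau.q$, then \texttt{OBS\_CONGR\_IMP\_WEAK\_EQUIV} gives $p \approx \tau.q$, and since $\tau.q \approx q$ by \texttt{TAU\_WEAK}, transitivity (\texttt{WEAK\_EQUIV\_TRANS}) yields $p \approx q$. The third case $\tau.p \approx^c q$ is symmetric, additionally using \texttt{WEAK\_EQUIV\_SYM}.

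For the $\Rightarrow$ direction, assume $p \approx q$ and apply \texttt{DENG\_LEMMA}, which splits into three cases. The clean correspondence I expect is: Deng's case $p \approx^c q$ discharges the first disjunct directly; Deng's case $\exists p'.\, p \overset{\tau}{\longrightarrow} p' \wedge p' \approx q$ should establish the second disjunct $p \approx^c \tau.q$; and the symmetric Deng case $\exists q'.\, q \overset{\tau}{\longrightarrow} q' \wedge p \approx q'$ should establish the third disjunct $\tau.p \approx^c q$. I would prove $p \approx^c \tau.q$ by unfolding the definition \texttt{OBS\_CONGR}. Transitions out of $\tau.q$ consist solely of $\tau.q \overset{\tau}{\longrightarrow} q$; to answer it I need $p \overset{\tau}{\Longrightarrow} E_1$ with $E_1 \approx q$, and the Deng witness supplies exactly this, since $p \overset{\tau}{\longrightarrow} p'$ gives a weak $\tau$-transition by \texttt{TRANS\_IMP\_WEAK\_TRANS} and $p' \approx q$. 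Conversely, for any $p \overset{u}{\longrightarrow} E_1$ I use $p \approx q$ (via \texttt{OBS\_PROPERTY\_STAR}) to obtain a matching weak response from $q$, and then prepend the leading $\tau$ of $\tau.q \overset{\tau}{\longrightarrow} q$: when $u$ is visible this is \texttt{TAU\_PREFIX\_WEAK\_TRANS}, and when $u = \tau$ the weak-equivalence response is only an EPS move $q \overset{\epsilon}{\Longrightarrow} E_2$, which combined with the initial $\tau$ becomes a genuine weak $\tau$-transition $\tau.q \overset{\tau}{\Longrightarrow} E_2$ by \texttt{TRANS\_AND\_EPS}. The third disjunct is handled symmetrically.

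The hard part will be the bookkeeping of the ``at least one $\tau$'' requirement built into \texttt{WEAK\_TRANS}. The two awkward points are (i) responding to $\tau.q$'s own initial $\tau$-move, where an arbitrary EPS response from $q$ would be too weak — here the genuine $\tau$-transition produced by Deng's lemma is essential; and (ii) lifting $q$'s silent EPS responses through the prefix so that they count as weak $\tau$-transitions from $\tau.q$. Getting these to line up is precisely why the statement needs the $\tau$-prefixed forms $p \approx^c \tau.q$ and $\tau.p \approx^c q$ rather than plain $p \approx^c q$, and why \texttt{DENG\_LEMMA} — which isolates exactly whether a distinguishing initial $\tau$ exists on one side — is the right tool to drive the case split.
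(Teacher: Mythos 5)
Your proposal is correct and matches the paper's proof in all essentials: the easy direction via \texttt{OBS\_CONGR\_IMP\_WEAK\_EQUIV}, \texttt{TAU\_WEAK}, symmetry and transitivity, and the hard direction by a three-way case split driven by Deng's Lemma, with the same key sub-proof of $p \approx^c \tau.q$ (answering $\tau.q \overset{\tau}{\longrightarrow} q$ with the Deng witness, and prepending the prefix's $\tau$ via \texttt{TAU\_PREFIX\_WEAK\_TRANS} and the EPS-composition lemmas for the converse obligations). The only cosmetic difference is that the paper performs the case analysis by excluded middle on the two existential conditions and falls back to the Deng-Lemma argument in the residual case, whereas you invoke \texttt{DENG\_LEMMA} up front to generate the same three cases.
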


\begin{proof}
The ``if'' part (from right to left) can be easily derived by applying
\texttt{OBS_CONGR_IMP_WEAK_EQUIV}, \texttt{TAU_WEAK},
\texttt{WEAK_EQUIV_SYM} and \texttt{WEAK_EQUIV_TRANS}. We'll focus on
the hard ``only if'' part (from left to right). The proof represent
here is slightly simplier than the one in \cite{Gorrieri:2015jt}, but
the idea is the same. The proof is based on creative case analysis.

If there exists an $E$ such that \HOLinline{\HOLFreeVar{p} --\HOLSymConst{\ensuremath{\tau}}-> \HOLFreeVar{E} \HOLSymConst{\HOLTokenConj{}} \HOLFreeVar{E} \HOLSymConst{\ensuremath{\approx}} \HOLFreeVar{q}} than we can prove that \HOLinline{\HOLFreeVar{p} \HOLSymConst{\HOLTokenObsCongr} \HOLSymConst{\ensuremath{\tau}}\HOLSymConst{..}\HOLFreeVar{q}} by
expanding \HOLinline{\HOLFreeVar{p} \HOLSymConst{\ensuremath{\approx}} \HOLFreeVar{q}} by \texttt{OBS_PROPERTY_STAR}. The
other needed theorems are the definition of weak transition,
\texttt{EPS_REFL}, SOS rule \texttt{PREFIX} and \texttt{TRANS_PREFIX},
\texttt{TAU_PREFIX_WEAK_TRANS} and \texttt{TRANS_IMP_WEAK_TRANS}.

If there's no $E$ such that \HOLinline{\HOLFreeVar{p} --\HOLSymConst{\ensuremath{\tau}}-> \HOLFreeVar{E} \HOLSymConst{\HOLTokenConj{}} \HOLFreeVar{E} \HOLSymConst{\ensuremath{\approx}} \HOLFreeVar{q}}, we can further check if there exist an $E$ such that
\HOLinline{\HOLFreeVar{q} --\HOLSymConst{\ensuremath{\tau}}-> \HOLFreeVar{E} \HOLSymConst{\HOLTokenConj{}} \HOLFreeVar{p} \HOLSymConst{\ensuremath{\approx}} \HOLFreeVar{E}}, and in this case we can prove
\HOLinline{\HOLSymConst{\ensuremath{\tau}}\HOLSymConst{..}\HOLFreeVar{p} \HOLSymConst{\HOLTokenObsCongr} \HOLFreeVar{q}} in the same way as the above case.

Otherwise we got exactly the same condition as in Deng Lemma
(after the initial goal reduced in the previous proof), and in this
case we can directly prove that \HOLinline{\HOLFreeVar{p} \HOLSymConst{\HOLTokenObsCongr} \HOLFreeVar{q}}.
\end{proof}

The purpose of this formal proof has basically shown that, for most
informal proofs in Concurrency Theory which doesn't depend on external
mathematics theories, the author has got the ability to
express it in HOL theorem prover.

\section{The theory of congruence}

The highlight of this project is the formal proofs for various
versions of the ``coarsest congruence contained in weak equivalence'',
\begin{proposition}{(Coarsest congruence contained in $\approx$)}
For any processes $p$ and $q$, \HOLinline{\HOLFreeVar{p} \HOLSymConst{\HOLTokenObsCongr} \HOLFreeVar{q}} if and only if
\HOLinline{\HOLSymConst{\HOLTokenForall{}}\HOLBoundVar{r}. \HOLFreeVar{p} \HOLSymConst{+} \HOLBoundVar{r} \HOLSymConst{\ensuremath{\approx}} \HOLFreeVar{q} \HOLSymConst{+} \HOLBoundVar{r}}.
\end{proposition}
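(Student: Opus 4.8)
The statement is a biconditional, and I would prove the two directions separately.

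\textbf{The forward direction} ($p \approx^c q \Rightarrow \forall r.\ p + r \approx q + r$) is the routine one, since it only exploits that observation congruence is already a congruence. Given $p \approx^c q$ and an arbitrary $r$, I would combine it with the reflexivity \texttt{OBS\_CONGR\_REFL} ($r \approx^c r$) and feed both premises into \texttt{OBS\_CONGR\_PRESD\_BY\_SUM} to obtain $p + r \approx^c q + r$; then \texttt{OBS\_CONGR\_IMP\_WEAK\_EQUIV} immediately downgrades this to $p + r \approx q + r$. No transition analysis is needed.

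\textbf{The hard direction} ($\forall r.\ p + r \approx q + r \Rightarrow p \approx^c q$) is where the real work lies. The plan is to instantiate the universally quantified hypothesis at a single, carefully chosen summand $r = a.\mathbf{0}$, where $a$ is a \emph{fresh} action whose name occurs in neither $p$ nor $q$ (nor in any of their derivatives). With this one instance $p + a.\mathbf{0} \approx q + a.\mathbf{0}$ in hand, I would verify the definition \texttt{OBS\_CONGR} directly, analysing an arbitrary transition $p \xrightarrow{u} p'$ and splitting on whether $u$ is visible or $\tau$ (the clauses starting from $q$ being completely symmetric). If $u = \mathrm{label}\ l$, then $p + a.\mathbf{0} \xrightarrow{\mathrm{label}\ l} p'$ by the sum rule, so \texttt{OBS\_PROPERTY\_STAR} supplies a weak response $q + a.\mathbf{0} \overset{\mathrm{label}\ l}{\Longrightarrow} E_2$ with $p' \approx E_2$; since the $a.\mathbf{0}$ summand can only perform $a$ and $l \neq a$ by freshness, this weak transition must be contributed entirely by $q$, giving the required $q \overset{\mathrm{label}\ l}{\Longrightarrow} E_2$. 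If $u = \tau$, then $p + a.\mathbf{0} \xrightarrow{\tau} p'$ and \texttt{OBS\_PROPERTY\_STAR} yields $q + a.\mathbf{0} \overset{\epsilon}{\Longrightarrow} E_2$ with $p' \approx E_2$, and here I must upgrade this $\epsilon$-move to a genuine (at least one step) weak transition $q \overset{\tau}{\Longrightarrow} E_2$, since that is exactly what observation congruence demands.

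\textbf{The main obstacle} is precisely ruling out the degenerate zero-step response in the $\tau$ case. If the $\epsilon$-move were empty, we would have $E_2 = q + a.\mathbf{0}$ and hence $p' \approx q + a.\mathbf{0}$. But $q + a.\mathbf{0} \xrightarrow{a} \mathbf{0}$, so weak equivalence would force $p'$ to answer with some $p' \overset{a}{\Longrightarrow} p''$, which is impossible because $a$ is fresh and therefore cannot appear in any derivative of $p$. This contradiction forces at least one $\tau$-step, closing the case. The argument thus hinges on two ingredients: a formal notion of the set of actions occurring in a process, together with a lemma that transitions never introduce new action names into derivatives; and the \emph{existence} of a fresh $a$. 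The latter is automatic when the label type is infinite but fails for a finite label set — which is exactly the assumption flagged in the introduction, and the reason a fully unconditional statement requires the ordinal-based (van Glabbeek) argument rather than this single-witness construction.
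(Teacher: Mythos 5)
Your forward direction matches the paper's easy part (\texttt{COARSEST\_CONGR\_LR}): observation congruence is preserved by summation and implies weak equivalence, so that half is fine. The problem is the hard direction. The proposition you were asked to prove is \emph{unconditional} --- it quantifies over arbitrary $p$ and $q$ in a formalization where the label type is a type variable $\beta$ that may be finite --- and your argument hinges on instantiating the hypothesis at $r = a.\mathbf{0}$ for a \emph{fresh} action $a$. Such an $a$ need not exist: the paper stresses that with the generalized CCS datatype this cardinality assumption ``is not automatically true if the set of labels were finite,'' and that is exactly why the authors do not prove the full statement this way. What you have reconstructed is the paper's \emph{conditional} theorem, the one proved under the \texttt{free\_action} hypothesis (there exists an $a$ with no weak $a$-transition from $p$, respectively from $q$); indeed your contradiction in the $\tau$ case (a zero-step response would force $p \overset{a}{\Longrightarrow} p''$) is precisely that proof. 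You acknowledge the limitation in your final sentence, but acknowledging it does not discharge it: as a proof of the stated proposition the argument has a genuine gap, and the missing piece is the entire substance of the paper's contribution. (Your route would also require formalizing the ``sort'' of a process and a sort-preservation lemma for transitions, machinery the paper never needs.)

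The paper closes the gap with van Glabbeek's refinement of the same single-witness idea: instead of $a.\mathbf{0}$ with $a$ fresh, take $a.k$ with $a$ an \emph{arbitrary} action (one always exists, since every HOL type is inhabited) and $k$ a \emph{stable process not weakly bisimilar to any weak derivative of $p$ or $q$}. In the degenerate $\tau$ case the contradiction is then derived not from ``$p$ cannot weakly perform $a$'' but from ``$E_1'$ would have to be weakly bisimilar to $k$,'' which the choice of $k$ forbids --- so no freshness of the action is needed. The real work moves to proving that such a $k$ exists (the Klop lemma): for finite-state $p, q$ it follows from the pairwise non-equivalent processes $\mathrm{KLOP}\ a\ n$ indexed by naturals plus a counting argument; in the general case it requires Klop processes indexed by \emph{ordinals}, an axiomatized infinite-sum operator, and the theorem \texttt{univ\_ord\_greater\_cardinal}. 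Without this construction (or something equivalent), the unconditional biconditional in the statement cannot be reached by your fresh-action argument.
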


But at first glance, the name of above theorem doesn't make much
sense. To see the nature of above theorem more clearly, here we
represent a rather complete theory about the congruence of CCS. It's
based on contents from \cite{vanGlabbeek:2005ur}.

To formalize the concept of congruence, we need to define ``semantic context''
first. There're multiple solutions, here we have chosen a simple
solution based on $\lambda$-calculus:
\begin{definition}{(Semantic context  of CCS)}
The semantic context (or one-hole context) of CCS is a function
$C[\cdot]$ of type ``\HOLinline{(\ensuremath{\alpha}, \ensuremath{\beta}) \HOLTyOp{CCS} -> (\ensuremath{\alpha}, \ensuremath{\beta}) \HOLTyOp{CCS}}'' recursively defined by following rules:
\begin{quote}
\begin{alltt}
\HOLConst{CONTEXT} (\HOLTokenLambda{}\HOLBoundVar{x}. \HOLBoundVar{x})
\HOLConst{CONTEXT} \HOLFreeVar{c} \HOLSymConst{\HOLTokenImp{}} \HOLConst{CONTEXT} (\HOLTokenLambda{}\HOLBoundVar{t}. \HOLFreeVar{a}\HOLSymConst{..}\HOLFreeVar{c} \HOLBoundVar{t})
\HOLConst{CONTEXT} \HOLFreeVar{c} \HOLSymConst{\HOLTokenImp{}} \HOLConst{CONTEXT} (\HOLTokenLambda{}\HOLBoundVar{t}. \HOLFreeVar{c} \HOLBoundVar{t} \HOLSymConst{+} \HOLFreeVar{x})
\HOLConst{CONTEXT} \HOLFreeVar{c} \HOLSymConst{\HOLTokenImp{}} \HOLConst{CONTEXT} (\HOLTokenLambda{}\HOLBoundVar{t}. \HOLFreeVar{x} \HOLSymConst{+} \HOLFreeVar{c} \HOLBoundVar{t})
\HOLConst{CONTEXT} \HOLFreeVar{c} \HOLSymConst{\HOLTokenImp{}} \HOLConst{CONTEXT} (\HOLTokenLambda{}\HOLBoundVar{t}. \HOLFreeVar{c} \HOLBoundVar{t} \HOLSymConst{||} \HOLFreeVar{x})
\HOLConst{CONTEXT} \HOLFreeVar{c} \HOLSymConst{\HOLTokenImp{}} \HOLConst{CONTEXT} (\HOLTokenLambda{}\HOLBoundVar{t}. \HOLFreeVar{x} \HOLSymConst{||} \HOLFreeVar{c} \HOLBoundVar{t})
\HOLConst{CONTEXT} \HOLFreeVar{c} \HOLSymConst{\HOLTokenImp{}} \HOLConst{CONTEXT} (\HOLTokenLambda{}\HOLBoundVar{t}. \HOLSymConst{\ensuremath{\nu}} \HOLFreeVar{L} (\HOLFreeVar{c} \HOLBoundVar{t}))
\HOLConst{CONTEXT} \HOLFreeVar{c} \HOLSymConst{\HOLTokenImp{}} \HOLConst{CONTEXT} (\HOLTokenLambda{}\HOLBoundVar{t}. \HOLConst{relab} (\HOLFreeVar{c} \HOLBoundVar{t}) \HOLFreeVar{rf})
\end{alltt}
\end{quote}
By repeatedly applying above rules, one can imagine that, a ``hold''
in any CCS term at any depth, can become a $\lambda$-function, and by
calling the function with another CCS term, the hold is filled by that term.
\end{definition}

The notable property of one-hole context is that, the functional
combination of two contexts is still a context:
\begin{proposition}{(The combination of one-hole contexts)}
If both $c_1$ and $c_2$ are one-hole contexts, then $c_1 \circ
c_2$\footnote{$(c_1 \circ c_2) t := c_1 (c_2 t).$} is
still a one-hole context:
\begin{alltt}
\HOLTokenTurnstile{} \HOLConst{CONTEXT} \HOLFreeVar{c\sb{\mathrm{1}}} \HOLSymConst{\HOLTokenConj{}} \HOLConst{CONTEXT} \HOLFreeVar{c\sb{\mathrm{2}}} \HOLSymConst{\HOLTokenImp{}} \HOLConst{CONTEXT} (\HOLFreeVar{c\sb{\mathrm{1}}} \HOLSymConst{\HOLTokenCompose} \HOLFreeVar{c\sb{\mathrm{2}}})
\end{alltt}
\end{proposition}
\begin{proof}
By induction on the first context $c_1$.\qed
\end{proof}

Now we're ready to define the concept of congruence (for CCS):
\begin{definition}{(Congruence of CCS)}
An equivalence relation $\approx$\footnote{The symbol $\approx$ here
  shouldn't be understood as weak equivalence.} on a specific space of
CCS processes is a \emph{congruence} iff for every $n$-ary operator
$f$, one has $g_1 \approx h_1 \wedge \cdots g_n \approx h_n \Rightarrow
f(g_1, \ldots, g_n) \approx f(h_1, \ldots, h_n$. This is the case iff
for every semantic context $C[\cdot]$ on has $g\approx h \Rightarrow
C[g] \approx C[h]$:
\begin{alltt}
\HOLTokenTurnstile{} \HOLConst{congruence} \HOLFreeVar{R} \HOLSymConst{\HOLTokenEquiv{}}
   \HOLSymConst{\HOLTokenForall{}}\HOLBoundVar{x} \HOLBoundVar{y} \HOLBoundVar{ctx}. \HOLConst{CONTEXT} \HOLBoundVar{ctx} \HOLSymConst{\HOLTokenImp{}} \HOLFreeVar{R} \HOLBoundVar{x} \HOLBoundVar{y} \HOLSymConst{\HOLTokenImp{}} \HOLFreeVar{R} (\HOLBoundVar{ctx} \HOLBoundVar{x}) (\HOLBoundVar{ctx} \HOLBoundVar{y})
\end{alltt}
\end{definition}

We can easily prove that, strong equivalence and observation
congruence is indeed a congruence following above definition, using
the substitutability and preserving properties of these relations:
\begin{theorem}
\begin{alltt}
\HOLTokenTurnstile{} \HOLConst{congruence} \HOLConst{STRONG_EQUIV}
\HOLTokenTurnstile{} \HOLConst{congruence} \HOLConst{OBS_CONGR}
\end{alltt}
\end{theorem}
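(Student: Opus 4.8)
The plan is to prove the two statements separately by unfolding the definition of \texttt{congruence} and reducing each to the substitutivity/preservation properties already established earlier in the paper. For \texttt{congruence STRONG\_EQUIV}, after introducing $x$, $y$ and $ctx$ and assuming \HOLinline{\HOLConst{CONTEXT} \HOLFreeVar{ctx}} together with \HOLinline{\HOLConst{STRONG\_EQUIV} \HOLFreeVar{x} \HOLFreeVar{y}}, the goal becomes \HOLinline{\HOLConst{STRONG\_EQUIV} (\HOLFreeVar{ctx} \HOLFreeVar{x}) (\HOLFreeVar{ctx} \HOLFreeVar{y})}. The natural move is a rule induction on the hypothesis \HOLinline{\HOLConst{CONTEXT} \HOLFreeVar{ctx}}, since \texttt{CONTEXT} is an inductively defined predicate with one base case (the identity context) and seven recursive cases (one for each CCS operator: prefix, left/right summation, left/right parallel composition, restriction, relabelling).

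In the base case $ctx = (\lambda x.\, x)$ the goal collapses to \HOLinline{\HOLConst{STRONG\_EQUIV} \HOLFreeVar{x} \HOLFreeVar{y}}, which is exactly the assumption. Each inductive case supplies an inner context $c$ with the induction hypothesis \HOLinline{\HOLConst{STRONG\_EQUIV} (\HOLFreeVar{c} \HOLFreeVar{x}) (\HOLFreeVar{c} \HOLFreeVar{y})}, and the goal is to push the outermost operator through the equivalence. Here I would discharge each case by the corresponding strong-equivalence substitutivity lemma — the strong analogues of \texttt{WEAK\_EQUIV\_SUBST\_PREFIX}, the summation-preservation law, \texttt{WEAK\_EQUIV\_PRESD\_BY\_PAR}, \texttt{WEAK\_EQUIV\_SUBST\_RESTR} and \texttt{WEAK\_EQUIV\_SUBST\_RELAB} (which are available from the previous project, since strong equivalence is a full congruence). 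So the whole proof is rule induction followed by one rewrite per case.

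The proof of \texttt{congruence OBS\_CONGR} follows the identical skeleton, but now each inductive case is discharged by an observation-congruence substitutivity result from the preceding section: \texttt{OBS\_CONGR\_SUBST\_PREFIX} for prefixing, \texttt{OBS\_CONGR\_PRESD\_BY\_SUM} for both summation cases, \texttt{OBS\_CONGR\_PRESD\_BY\_PAR} for both parallel cases, \texttt{OBS\_CONGR\_SUBST\_RESTR} for restriction and \texttt{OBS\_CONGR\_SUBST\_RELAB} for relabelling. Because those five theorems between them cover exactly the seven recursive clauses of \texttt{CONTEXT}, every case matches a lemma directly, and reflexivity (the base case) is immediate from \texttt{OBS\_CONGR\_REFL}.

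The step I expect to carry the most weight is the summation case for observation congruence. The context rules introduce summation asymmetrically (a hole on the left with a fixed $x$ on the right, and vice versa), whereas \texttt{OBS\_CONGR\_PRESD\_BY\_SUM} is the symmetric two-sided preservation law \HOLinline{\HOLFreeVar{p} \HOLSymConst{\HOLTokenObsCongr} \HOLFreeVar{q} \HOLSymConst{\HOLTokenConj{}} \HOLFreeVar{r} \HOLSymConst{\HOLTokenObsCongr} \HOLFreeVar{s} \HOLSymConst{\HOLTokenImp{}} \HOLFreeVar{p} \HOLSymConst{+} \HOLFreeVar{r} \HOLSymConst{\HOLTokenObsCongr} \HOLFreeVar{q} \HOLSymConst{+} \HOLFreeVar{s}}; applying it requires instantiating the unchanged summand with a reflexivity fact (\HOLinline{\HOLFreeVar{x} \HOLSymConst{\HOLTokenObsCongr} \HOLFreeVar{x}}) so that only one side genuinely varies. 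Unlike strong equivalence, observation congruence is \emph{not} preserved by the unguarded summation operator in the weak-equivalence setting — this is precisely the subtlety the later ``coarsest congruence'' material addresses — so one must be careful that the invoked lemma is the genuine congruence-level result and not a weaker stability-restricted variant such as \texttt{WEAK\_EQUIV\_PRESD\_BY\_SUM}. Once the right lemmas are lined up with the correct operands, both results are routine.
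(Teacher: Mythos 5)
Your proposal is correct and takes essentially the same route as the paper, which gives no explicit proof beyond remarking that both results follow ``using the substitutability and preserving properties of these relations'' --- i.e.\ exactly your rule induction on \texttt{CONTEXT} with one substitutivity/preservation lemma per clause, feeding reflexivity into the two-sided sum and parallel lemmas to handle the fixed component. The only quibble is your parenthetical for \texttt{OBS\_CONGR}: the base case (identity context) discharges to the hypothesis $x \mathrel{\approx^c} y$ itself, not to reflexivity --- \texttt{OBS\_CONGR\_REFL} is needed only for the unchanged operand in the asymmetric summation and parallel cases, as you correctly describe elsewhere.
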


For relations which is not congruence, it's possible to ``convert'' them
into congruence:
\begin{definition}{(Constructing congruences from equivalence relation)}
Given an equivalence relation $\sim$\footnote{The Symbol $\sim$ here
  shouldn't be understood as strong equivalence.}, define $\sim^c$ by:
\begin{alltt}
\HOLTokenTurnstile{} \HOLFreeVar{R}\HOLSymConst{\HOLTokenSupC{}} \HOLSymConst{=} (\HOLTokenLambda{}\HOLBoundVar{g} \HOLBoundVar{h}. \HOLSymConst{\HOLTokenForall{}}\HOLBoundVar{c}. \HOLConst{CONTEXT} \HOLBoundVar{c} \HOLSymConst{\HOLTokenImp{}} \HOLFreeVar{R} (\HOLBoundVar{c} \HOLBoundVar{g}) (\HOLBoundVar{c} \HOLBoundVar{h}))
\end{alltt}
\end{definition}

This new operator on relations has the following three properties:

\begin{proposition}
For all $R$, $R^c$ is a congruence:
\begin{alltt}
\HOLTokenTurnstile{} \HOLConst{congruence} \HOLFreeVar{R}\HOLSymConst{\HOLTokenSupC{}}
\end{alltt}
\end{proposition}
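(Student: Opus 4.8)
The plan is to unfold both the definition of \HOLinline{\HOLConst{congruence}} and the operator $R\HOLTokenSupC$, reducing the goal to a statement about composing contexts, and then to close it using the already-established fact that the composition of two one-hole contexts is again a one-hole context.

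First I would expand \HOLinline{\HOLConst{congruence}}, so that the goal becomes: for arbitrary $x$, $y$ and $ctx$ with \HOLinline{\HOLConst{CONTEXT} \HOLFreeVar{ctx}}, the hypothesis $R\HOLTokenSupC\, x\, y$ implies $R\HOLTokenSupC\, (ctx\, x)\, (ctx\, y)$. Unfolding $R\HOLTokenSupC$ in the conclusion, it then suffices to fix an arbitrary context $c$ with \HOLinline{\HOLConst{CONTEXT} \HOLFreeVar{c}} and prove $R\,(c\,(ctx\, x))\,(c\,(ctx\, y))$. Meanwhile, the hypothesis $R\HOLTokenSupC\, x\, y$, once unfolded, says precisely that $R\,(d\, x)\,(d\, y)$ holds for \emph{every} context $d$.

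The key observation is that $c\,(ctx\, x) = (c \circ ctx)\, x$ and likewise for $y$, so it is enough to instantiate this hypothesis at $d = c \circ ctx$. The only nontrivial ingredient is that $c \circ ctx$ is itself a context, which is exactly the earlier proposition on the combination of one-hole contexts, applied to the assumptions \HOLinline{\HOLConst{CONTEXT} \HOLFreeVar{c}} and \HOLinline{\HOLConst{CONTEXT} \HOLFreeVar{ctx}}. With $c \circ ctx$ known to be a context, the instantiated hypothesis yields $R\,((c \circ ctx)\, x)\,((c \circ ctx)\, y)$, which is definitionally the goal.

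I expect no genuine obstacle here: the whole argument is a short chain of unfolding followed by a single appeal to the composition lemma. The one point requiring care is syntactic rather than mathematical, namely matching the term $c\,(ctx\,\cdot)$ with the functional composition $c \circ ctx$ so that the composition proposition applies cleanly; in the mechanized proof this amounts to a rewrite with the definition of $\circ$ before the lemma can fire.
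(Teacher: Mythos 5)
Your proposal is correct and matches the paper's own argument: both unfold $R^c$ so that the goal becomes $R$ applied under an arbitrary outer context $c$, then instantiate the hypothesis $\forall d.\,R\,(d\,x)\,(d\,y)$ at the composite $c \circ ctx$, justified by the earlier proposition that the composition of two one-hole contexts is a context. The only difference is presentational — the paper phrases it as ``$C[D[\cdot]]$ is again a semantic context'' rather than explicitly rewriting with the definition of $\circ$ — so there is nothing further to add.
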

\begin{proof}
By construction, $\sim^c$ is a congruence. For if $g \sim^c h$ and
$D[\cdot]$ is a semantic context, then for every semantic context
$C[\cdot]$ also $C[D[\cdot]]$ is a semantic context, so $\forall
C[\cdot].\;(C[D[g]] \sim C[D[h]])$ and hence $D[g] \sim^c D[h]$.\qed
\end{proof}

\begin{proposition}
For all $R$, $R^c$ is finer than $R$:
\begin{alltt}
\HOLTokenTurnstile{} \HOLFreeVar{R}\HOLSymConst{\HOLTokenSupC{}} \HOLSymConst{\HOLTokenRSubset{}} \HOLFreeVar{R}
\end{alltt}
\end{proposition}
\begin{proof}
The trivial context guarantees that $g \sim^c h \Rightarrow g\sim h$,
so $\sim^c$ is finer than $\sim$.\qed
\end{proof}

\begin{proposition}
For all $R$, $R^c$ is the coarsest congruence finer than $R$, that is,
for any other congruence finer than $R$, it's finer than $R^c$:
\begin{alltt}
\HOLTokenTurnstile{} \HOLConst{congruence} \HOLFreeVar{R\sp{\prime}} \HOLSymConst{\HOLTokenConj{}} \HOLFreeVar{R\sp{\prime}} \HOLSymConst{\HOLTokenRSubset{}} \HOLFreeVar{R} \HOLSymConst{\HOLTokenImp{}} \HOLFreeVar{R\sp{\prime}} \HOLSymConst{\HOLTokenRSubset{}} \HOLFreeVar{R}\HOLSymConst{\HOLTokenSupC{}}
\end{alltt}
\end{proposition}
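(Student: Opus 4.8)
The plan is to unfold all three relevant definitions and reduce the statement to a single forward implication chain. Assume \texttt{congruence}\,$R'$ and the relational inclusion $R' \subseteq R$ (which abbreviates $\forall a\,b.\;R'\,a\,b \Rightarrow R\,a\,b$). The goal $R' \subseteq R^c$ unfolds to $\forall g\,h.\;R'\,g\,h \Rightarrow R^c\,g\,h$, and expanding the definition of $R^c$ turns this into $\forall g\,h.\;R'\,g\,h \Rightarrow (\forall c.\;\texttt{CONTEXT}\,c \Rightarrow R\,(c\,g)\,(c\,h))$. So I would fix $g$, $h$ with $R'\,g\,h$, then fix an arbitrary $c$ with $\texttt{CONTEXT}\,c$, leaving the concrete goal $R\,(c\,g)\,(c\,h)$.

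The key step is to push $R'$ through the context using the fact that $R'$ is itself a congruence. Instantiating the universally quantified $x$, $y$, $ctx$ in \texttt{congruence}\,$R'$ with $g$, $h$, and $c$ respectively, and discharging the two premises $\texttt{CONTEXT}\,c$ and $R'\,g\,h$ (both already available), yields $R'\,(c\,g)\,(c\,h)$.

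Finally I would apply the inclusion hypothesis $R' \subseteq R$ to $R'\,(c\,g)\,(c\,h)$, instantiated at the composed terms $c\,g$ and $c\,h$, to obtain $R\,(c\,g)\,(c\,h)$, which closes the goal. I expect no genuine obstacle: the entire argument is one forward chain through the two hypotheses, mirroring the informal reasoning that a coarser congruence is trapped inside $R^c$ precisely because $R^c$ collects exactly those pairs related by $R$ under \emph{every} context. The only point requiring mild care is making sure the curried inclusion is instantiated at $c\,g$ and $c\,h$ rather than at $g$ and $h$, since it is the contextualised pair that must land in $R$.
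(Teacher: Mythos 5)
Your proposal is correct and is essentially identical to the paper's proof: the paper's one-line chain $g \approx h \Rightarrow \forall C.\,(C[g] \approx C[h]) \Rightarrow \forall C.\,(C[g] \sim C[h]) \Rightarrow g \sim^c h$ is exactly your sequence of pushing $R'$ through the context via its congruence property and then applying the inclusion $R' \mathrel{\subseteq} R$ at the contextualised pair. No difference in approach worth noting.
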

\begin{proof}
If $\approx$ is any congruence finer than $\sim$, then
\begin{equation}
g \approx h \Rightarrow \forall C[\cdot].\;(C[g] \approx C[h])
\Rightarrow \forall C[\cdot].\;(C[g] \sim C[h]) \Rightarrow g \sim^c h.
\end{equation}
Thus $\approx$ is finer than $\sim^c$. (i.e. $\sim^c$ is coarser than
$\approx$, then the arbitrariness of $\approx$ implies that $\sim^c$ is
coarsest.)\qed
\end{proof}

As we know weak equivalence is not congruence, and one way to ``fix'' it,
is to use observation congruence which is based on weak equivalence
but have special treatments on the first transitions. The other way is to
build a congruence from existing weak equivalence relation, using
above approach based on one-hole contexts. Such a congruence has a new name:
\begin{definition}{(Weak bisimulation congruence)}
The coarasest congruence that is finer than weak bisimulation equivalence is called \emph{weak bisimulation
  congruence} (notation: $\sim_w^c$):
\begin{alltt}
\HOLTokenTurnstile{} \HOLConst{WEAK_CONGR} \HOLSymConst{=} (\HOLSymConst{=~})\HOLSymConst{\HOLTokenSupC{}}
or
\HOLTokenTurnstile{} \HOLConst{WEAK_CONGR} \HOLSymConst{=} (\HOLTokenLambda{}\HOLBoundVar{g} \HOLBoundVar{h}. \HOLSymConst{\HOLTokenForall{}}\HOLBoundVar{c}. \HOLConst{CONTEXT} \HOLBoundVar{c} \HOLSymConst{\HOLTokenImp{}} \HOLBoundVar{c} \HOLBoundVar{g} \HOLSymConst{\ensuremath{\approx}} \HOLBoundVar{c} \HOLBoundVar{h})
\end{alltt}
\end{definition}

So far, the weak bisimulation congruence $\sim_w^c$ defined above is irrelevant with 
rooted weak bisimulation (a.k.a. observation congruence) $\approx^c$,
which has the following standard definition also based on
weak equivalence:
\begin{alltt}
\HOLTokenTurnstile{} \HOLFreeVar{E} \HOLSymConst{\HOLTokenObsCongr} \HOLFreeVar{E\sp{\prime}} \HOLSymConst{\HOLTokenEquiv{}}
   \HOLSymConst{\HOLTokenForall{}}\HOLBoundVar{u}.
     (\HOLSymConst{\HOLTokenForall{}}\HOLBoundVar{E\sb{\mathrm{1}}}. \HOLFreeVar{E} --\HOLBoundVar{u}-> \HOLBoundVar{E\sb{\mathrm{1}}} \HOLSymConst{\HOLTokenImp{}} \HOLSymConst{\HOLTokenExists{}}\HOLBoundVar{E\sb{\mathrm{2}}}. \HOLFreeVar{E\sp{\prime}} ==\HOLBoundVar{u}=>> \HOLBoundVar{E\sb{\mathrm{2}}} \HOLSymConst{\HOLTokenConj{}} \HOLBoundVar{E\sb{\mathrm{1}}} \HOLSymConst{\ensuremath{\approx}} \HOLBoundVar{E\sb{\mathrm{2}}}) \HOLSymConst{\HOLTokenConj{}}
     \HOLSymConst{\HOLTokenForall{}}\HOLBoundVar{E\sb{\mathrm{2}}}. \HOLFreeVar{E\sp{\prime}} --\HOLBoundVar{u}-> \HOLBoundVar{E\sb{\mathrm{2}}} \HOLSymConst{\HOLTokenImp{}} \HOLSymConst{\HOLTokenExists{}}\HOLBoundVar{E\sb{\mathrm{1}}}. \HOLFreeVar{E} ==\HOLBoundVar{u}=>> \HOLBoundVar{E\sb{\mathrm{1}}} \HOLSymConst{\HOLTokenConj{}} \HOLBoundVar{E\sb{\mathrm{1}}} \HOLSymConst{\ensuremath{\approx}} \HOLBoundVar{E\sb{\mathrm{2}}}
\end{alltt}
But since obvervation congruence is congruence, it must be finer than
weak bisimulation congruence:
\begin{lemma}{(Obvervation congruence is finer than weak bisimulation
    congruence)}
\begin{alltt}
\HOLTokenTurnstile{} \HOLFreeVar{p} \HOLSymConst{\HOLTokenObsCongr} \HOLFreeVar{q} \HOLSymConst{\HOLTokenImp{}} \HOLConst{WEAK_CONGR} \HOLFreeVar{p} \HOLFreeVar{q}
\end{alltt}
\end{lemma}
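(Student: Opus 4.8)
The plan is to recognize this lemma as a direct instance of the universal property of the congruence-closure operator $(\cdot)^c$ established just above, namely that $R^c$ is the coarsest congruence finer than $R$. Nothing about the internal structure of weak equivalence or of observation congruence needs to be reopened; the result falls out from three facts already in hand.

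First I would unfold the goal at the relational level. Under the relational-inclusion reading, proving $\forall p\,q.\ p \approx^c q \Rightarrow \mathrm{WEAK\_CONGR}\ p\ q$ is exactly proving the relational inclusion $(\approx^c) \subseteq \mathrm{WEAK\_CONGR}$. Since $\mathrm{WEAK\_CONGR} = (\approx)^c$ by definition, the goal reduces to $(\approx^c) \subseteq (\approx)^c$.

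Next I would invoke the coarsest-congruence proposition with the instantiation $R := {\approx}$ and $R' := {\approx^c}$. That proposition, $\mathrm{congruence}\ R' \wedge R' \subseteq R \Rightarrow R' \subseteq R^c$, reduces the goal to two obligations: (i) $\mathrm{congruence}\ (\approx^c)$, and (ii) $(\approx^c) \subseteq (\approx)$, i.e.\ that observation congruence is finer than weak equivalence. Obligation (i) is the earlier theorem \texttt{congruence OBS\_CONGR}; obligation (ii) is precisely \texttt{OBS\_CONGR\_IMP\_WEAK\_EQUIV}, read pointwise as $\forall p\,q.\ p \approx^c q \Rightarrow p \approx q$. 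Discharging both closes the proof.

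There is essentially no hard part here: the only mild friction is bookkeeping, namely matching HOL4's curried relational-inclusion predicate (where $\subseteq$ on relations unfolds to $\forall x\,y.\ R\,x\,y \Rightarrow R'\,x\,y$) against the pointwise form in which the two supporting theorems are phrased. If one prefers to avoid the universal property altogether, an equally short direct argument works: fix an arbitrary context $c$ with $\mathrm{CONTEXT}\ c$, use that $\approx^c$ is a congruence to obtain $c\,p \approx^c c\,q$, then apply \texttt{OBS\_CONGR\_IMP\_WEAK\_EQUIV} to get $c\,p \approx c\,q$, which is exactly the unfolded definition of $\mathrm{WEAK\_CONGR}\ p\ q$. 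Either route is a few lines.
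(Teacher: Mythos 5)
Your proposal is correct and matches the paper's own (one-sentence) justification, which likewise derives the lemma from the fact that observation congruence is a congruence finer than $\approx$ together with the universal property that $(\approx)^c$ is the coarsest such congruence. Both of your routes (via the coarsest-congruence proposition, or directly by instantiating an arbitrary context and applying \texttt{OBS\_CONGR\_IMP\_WEAK\_EQUIV}) are sound and amount to the same few lines.
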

On the other side, by consider the trivial context and sum contexts in the definition
of weak bisimulation congruence, we can easily prove the following
result:
\begin{lemma}
\begin{alltt}
\HOLTokenTurnstile{} \HOLConst{WEAK_CONGR} \HOLFreeVar{p} \HOLFreeVar{q} \HOLSymConst{\HOLTokenImp{}} \HOLConst{SUM_EQUIV} \HOLFreeVar{p} \HOLFreeVar{q}
\end{alltt}
\end{lemma}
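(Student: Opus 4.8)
The plan is to unfold both relations and observe that, for each fixed process $r$, the single right-summation operation $(\lambda t.\; t + r)$ is a legitimate one-hole context. Assuming \texttt{SUM\_EQUIV}~$p$~$q$ abbreviates $\forall r.\; p + r \approx q + r$, and recalling that by definition $\texttt{WEAK\_CONGR}~p~q$ expands to $\forall c.\; \texttt{CONTEXT}~c \Rightarrow c\,p \approx c\,q$, the goal reduces to supplying, for every $r$, a suitable context instance.

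First I would strip the goal down to a fixed but arbitrary $r$, leaving $p + r \approx q + r$ to be shown. Then I would exhibit the context $c_r := (\lambda t.\; t + r)$ and prove $\texttt{CONTEXT}~c_r$. This is immediate from the inductive clauses defining \texttt{CONTEXT}: the identity $(\lambda x.\; x)$ is a context, and the right-summation clause $\texttt{CONTEXT}~c \Rightarrow \texttt{CONTEXT}~(\lambda t.\; c\,t + x)$, instantiated with $c$ the identity and $x := r$, delivers exactly $\texttt{CONTEXT}~(\lambda t.\; t + r)$ after $\beta$-reduction.

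With $\texttt{CONTEXT}~c_r$ established, I would instantiate the hypothesis $\texttt{WEAK\_CONGR}~p~q$ at $c := c_r$, discharge its \texttt{CONTEXT} side condition, and read off $c_r\,p \approx c_r\,q$. A $\beta$-reduction on both sides turns this into $p + r \approx q + r$, which closes the goal.

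I expect no genuine obstacle: the argument is essentially a single context instantiation. The only point requiring mild care is matching the shape $(\lambda t.\; t + r)$ against the right-summation clause once the identity context has been substituted, a purely syntactic check. If \texttt{SUM\_EQUIV} additionally records the plain equivalence $p \approx q$ as a conjunct, that component is supplied symmetrically by instantiating $\texttt{WEAK\_CONGR}~p~q$ at the trivial context $(\lambda x.\; x)$; this is why both the trivial and the summation contexts figure in the argument.
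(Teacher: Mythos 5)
Your proposal is correct and matches the paper's own argument, which likewise observes that the result follows ``by considering the trivial context and sum contexts in the definition of weak bisimulation congruence'' --- i.e.\ instantiating \texttt{WEAK\_CONGR} at the context $(\lambda t.\; t + r)$ obtained from the identity context via the right-summation clause. No further comment is needed.
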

Noticed that, in above theorem, the sum operator can be replaced by
any other operator in CCS, but we know sum is special because it's the
only operator in which the weak equivalence is not preserved after
substitutions.

From above two lemmas, we can easily see that, weak
equivalence is between the observation congruence and an unnamed
relation $\{ (p, q) \colon \forall\,r. p + r \approx q + r \}$ (we can
temporarily call it ``sum equivalence'', because we don't if it's a
congruence, or even if it's contained in weak equivalence). If we
could further prove that ``sum equivalence'' is finer than observation congruence:
\begin{proposition}
\begin{alltt}
\HOLTokenTurnstile{} (\HOLSymConst{\HOLTokenForall{}}\HOLBoundVar{r}. \HOLFreeVar{p} \HOLSymConst{+} \HOLBoundVar{r} \HOLSymConst{\ensuremath{\approx}} \HOLFreeVar{q} \HOLSymConst{+} \HOLBoundVar{r}) \HOLSymConst{\HOLTokenImp{}} \HOLFreeVar{p} \HOLSymConst{\HOLTokenObsCongr} \HOLFreeVar{q}
\end{alltt}
\end{proposition}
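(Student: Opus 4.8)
The plan is to unfold the goal $p \approx^c q$ by its definition \texttt{OBS\_CONGR} and to establish its two clauses. Since the hypothesis $\forall r.\,p+r \approx q+r$ is symmetric in $p$ and $q$ (apply \texttt{WEAK\_EQUIV\_SYM} to each instance), the two clauses are proved by the same argument with the roles of $p$ and $q$ exchanged, so it suffices to treat one direction. Thus I would fix an action $u$ and a derivative $E_1$ with $p \xrightarrow{u} E_1$, and aim to produce some $E_2$ with $q \overset{u}{\Longrightarrow} E_2$ and $E_1 \approx E_2$.

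The decisive step is the choice of a discriminating observer. I would pick a visible action $a$ that does not occur syntactically in $p$ or $q$ and instantiate the hypothesis at $r = a.\mathbf{0}$, obtaining $p + a.\mathbf{0} \approx q + a.\mathbf{0}$. From $p \xrightarrow{u} E_1$ the sum rule $(Sum_1)$ gives $p + a.\mathbf{0} \xrightarrow{u} E_1$, and feeding this transition to property~(*) (\texttt{OBS\_PROPERTY\_STAR}) yields a matching weak response of $q + a.\mathbf{0}$ whose residual is weakly equivalent to $E_1$.

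Next I would case split on whether $u$ is $\tau$ or a label. When $u = \mathsf{label}\,l$, property~(*) gives $q + a.\mathbf{0} \overset{l}{\Longrightarrow} E_2$ with $E_1 \approx E_2$; since $l \neq a$ (as $a$ is fresh to $p$ while $l$ labels a transition of $p$) and $a.\mathbf{0}$ can contribute only an $a$-step, the whole weak transition must be realised inside $q$, i.e.\ $q \overset{l}{\Longrightarrow} E_2$, which is exactly the required response. When $u = \tau$, property~(*) only delivers an $\mathsf{EPS}$ move $q + a.\mathbf{0} \overset{\epsilon}{\Longrightarrow} E_2$ with $E_1 \approx E_2$, whereas observation congruence demands a genuine weak $\tau$-transition. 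Here the key is to show the $\mathsf{EPS}$ is non-empty: if $E_2 = q + a.\mathbf{0}$, then $E_1 \approx q + a.\mathbf{0}$, so $E_1$ would have to match $q + a.\mathbf{0} \xrightarrow{a} \mathbf{0}$ by some $E_1 \overset{a}{\Longrightarrow} E_1'$; but $E_1$ is a derivative of $p$ and $a$ is fresh to $p$, so $E_1$ can perform no $a$-action, a contradiction. Hence the $\mathsf{EPS}$ contains at least one $\tau$-step, and since $a.\mathbf{0}$ is $\tau$-inert that first $\tau$ (and every later move) stays on the $q$ side, giving $q \overset{\tau}{\Longrightarrow} E_2$.

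The main obstacle is the very existence of the fresh action $a$. Because pure CCS generates no new names, every action reachable from $p$ or $q$ lies in the finite set of names occurring in these terms, so a fresh $a$ exists precisely when the label type $\beta$ is not exhausted by that finite set --- automatic when $\beta$ is infinite, but not in general. To obtain the statement without such a side-condition I would, following \cite{vanGlabbeek:2005ur}, replace the single observer $a.\mathbf{0}$ by an observer assembled from a guaranteed supply of distinct fresh actions through an infinite sum, the existence of that supply being secured by a cardinality argument; this is exactly where HOL's ordinal theory and the axiomatised infinite sums enter the full proof.\qed
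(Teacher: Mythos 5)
Your core argument with the observer $a.\mathbf{0}$ is sound, but it only proves the restricted version of the statement --- the paper's ``stronger version'' under the \texttt{free\_action} hypothesis --- and not the proposition as stated, which carries no assumption on $p$ and $q$ at all. The gap is in your final paragraph, and it is not a technicality: the plan to remove the freshness side-condition by securing ``a guaranteed supply of distinct fresh actions through an infinite sum\ldots by a cardinality argument'' cannot work. In this formalization the label type $\beta$ may be finite (the paper deliberately supports CCS subcalculi with finitely many actions), and $p$ and $q$ may use every action there is; no cardinality or ordinal argument can produce an action that does not exist in the type. That situation is exactly what the unconditional proposition must cover, and it is why the fresh-action proof does not generalize by any rearrangement of observers built from fresh actions.

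What van Glabbeek's proof (and the paper's formalization of it) actually does is transfer the freshness requirement from \emph{actions} to \emph{bisimilarity classes}. One keeps a single arbitrary visible action $a$ --- which always exists, since HOL types are nonempty --- and replaces $\mathbf{0}$ in the observer by a process $k$ chosen so that no weak derivative of $p$ or $q$ is weakly bisimilar to $k$; the observer is $a.k$ and the working basis is $p + a.k \approx q + a.k$. The supply of candidates for $k$ is the family of Klop processes $k_0 = \mathbf{0}$, $k_{\lambda+1} = k_\lambda + a.k_\lambda$, with sums over all predecessors at limit ordinals: the infinite-sum axiom and \texttt{ordinalTheory} are needed to build enough pairwise non-bisimilar \emph{processes}, not fresh actions, and the theorem \texttt{univ\_ord\_greater\_cardinal} then yields a $k$ not weakly bisimilar to anything weakly reachable from $p$ or $q$. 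Note also that with this observer your case analysis changes shape: since $a$ may occur in $p$, the case $u = a$ no longer vanishes; one must split on the first step of the response $q + a.k \overset{a}{\Longrightarrow} E_2$, and when that first step is the $a$ of $a.k$, stability of $k$ forces $E_2 = k$, so $E_1 \approx E_2 = k$ contradicts the choice of $k$. Throughout, the contradictions are driven by non-bisimilarity with $k$ rather than by the impossibility of an $a$-transition, which is the essential idea your sketch misses.
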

then all three congruences (observation congruence, weak equivalence
and the ``sum equivalence'' must all coincide, as illustrated in the
following figure:
\begin{displaymath}
\xymatrix{
{\textrm{Weak equivalence} (\approx)} & {} & {\textrm{Sum
    equivalence}} \ar@/^3ex/[ldd]^{\subseteq ?}\\
{} & {\textrm{Weak bisimulation congruence} (\sim_w^c)}
\ar[lu]^{\subseteq} \ar[ru]^{\subseteq} \\
{} & {\textrm{Observation congruence} (\approx^c)} \ar[u]^{\subseteq}
}
\end{displaymath}

This is why the proposition at the begining of this section is called ``coarsest congruence
contained in weak equivalence'', it's actually trying to prove the
``sum equivalence'' is finer than ``observation congruence'' therefore
makes ``weak bisimulation congruence'' ($\sim_w^c$) coincide with ``observation
congruence'' ($\approx^c$).

\section{Coarsest congruence contained in weak equivalence}

The hightlight of this project is the formal proofs of various
versions of the so-called ``coarsest congruence contained in
$\approx$'' theorem:
\begin{proposition}
\begin{alltt}
\HOLTokenTurnstile{} \HOLFreeVar{p} \HOLSymConst{\HOLTokenObsCongr} \HOLFreeVar{q} \HOLSymConst{\HOLTokenEquiv{}} \HOLSymConst{\HOLTokenForall{}}\HOLBoundVar{r}. \HOLFreeVar{p} \HOLSymConst{+} \HOLBoundVar{r} \HOLSymConst{\ensuremath{\approx}} \HOLFreeVar{q} \HOLSymConst{+} \HOLBoundVar{r}
\end{alltt}
\end{proposition}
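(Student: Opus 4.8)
The plan is to prove the biconditional by splitting it into its two directions, which are of very unequal difficulty. The left-to-right implication is routine and rests on the earlier fact that observation congruence is a genuine congruence preserved by summation. The right-to-left implication is exactly the substantial Proposition flagged at the end of the previous section, and it is where essentially all the work lies.

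For the forward direction, I would assume $p \approx^c q$ and fix an arbitrary $r$. By reflexivity of observation congruence (\texttt{OBS\_CONGR\_REFL}) we have $r \approx^c r$, so \texttt{OBS\_CONGR\_PRESD\_BY\_SUM} gives $p + r \approx^c q + r$, and then \texttt{OBS\_CONGR\_IMP\_WEAK\_EQUIV} yields $p + r \approx q + r$, as required. No case analysis is needed here.

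For the backward direction, I would assume $\forall r.\; p + r \approx q + r$ and unfold the definition of $\approx^c$. First, instantiating the hypothesis at $r = \mathbf{0}$ (the inactive process) and using the summation identity law $p + \mathbf{0} \approx p$ gives $p \approx q$. This immediately discharges the label cases: given $p \xrightarrow{\mathrm{label}\,\ell} p_1$, property (*) (\texttt{OBS\_PROPERTY\_STAR}) applied to $p \approx q$ produces $q \overset{\mathrm{label}\,\ell}{\Longrightarrow} q_2$ with $p_1 \approx q_2$, which is precisely what $\approx^c$ demands, and the moves from $q$ are symmetric. The delicate case is $u = \tau$, where the weak-equivalence response is merely an $\mathrm{EPS}$ that may be empty, whereas $\approx^c$ requires a \emph{non-empty} weak $\tau$-move. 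Here I would choose a fresh action $a$, lying outside the sort of both $p$ and $q$, and instantiate the hypothesis at $r = a.\mathbf{0}$, obtaining $p + a.\mathbf{0} \approx q + a.\mathbf{0}$. Given $p \xrightarrow{\tau} p_1$ we have $p + a.\mathbf{0} \xrightarrow{\tau} p_1$, so property (*) gives $q + a.\mathbf{0} \overset{\epsilon}{\Longrightarrow} q_2$ with $p_1 \approx q_2$. The crux is to rule out $q_2 = q + a.\mathbf{0}$: if that held, then from $q + a.\mathbf{0} \xrightarrow{a} \mathbf{0}$ and $p_1 \approx q + a.\mathbf{0}$ we would force $p_1 \overset{a}{\Longrightarrow} p_1'$, contradicting that $a$ lies outside the sort of the derivative $p_1$ of $p$. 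Hence the $\mathrm{EPS}$ takes at least one step; since $a.\mathbf{0}$ has no $\tau$-transition, that first $\tau$ must come from $q$ (and, by the summation rule, discards the $a.\mathbf{0}$ summand), so we obtain a genuine weak move $q \overset{\tau}{\Longrightarrow} q_2$ with $p_1 \approx q_2$ -- exactly the response $\approx^c$ needs. The direction from $q$ is symmetric.

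The hard part will be the \emph{existence of the fresh action} $a$: this is precisely the hidden hypothesis advertised in the introduction, which is automatic only when the label type $\beta$ is not exhausted by the finitely many actions occurring in $p$ and $q$. I therefore expect to state this version under such an assumption, supported by a sort/free-action lemma certifying that a process can never perform (even weakly) an action outside its sort and that sorts are closed under transition. To recover the statement with no hypothesis at all one cannot rely on a single fresh action; as the introduction signals, this is handled for finite-state CCS by van Glabbeek's argument and then lifted to arbitrary processes through HOL's ordinal theory together with an axiomatized infinite summation.
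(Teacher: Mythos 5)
Your forward direction is the paper's (reflexivity plus preservation of $\approx^c$ under summation, then $\approx^c\;\subseteq\;\approx$), and your detailed backward argument is sound --- but it is precisely the paper's \emph{conditional} ``stronger version'' under the \texttt{free\_action} hypothesis (Milner's Proposition~3): test with $r = a.\mathbf{0}$ for a fresh action $a$, and rule out an empty EPS response by deriving an impossible weak $a$-move from $p$. The statement you were asked to prove carries no hypothesis at all, and in this formalization the label type $\beta$ may be finite and exhausted by $p$ and $q$, so the fresh action on which your whole $\tau$-case rests need not exist. Your final sentence concedes this and gestures at van Glabbeek's argument, but that gesture defers exactly the content that constitutes the paper's proof of this proposition, so as written the proposal establishes a strictly weaker theorem.

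The missing idea is the replacement of a fresh \emph{action} by a fresh \emph{process}: take an arbitrary non-$\tau$ action $a$ (always available here) together with a stable process $k$ that is not weakly bisimilar to any one-step weak derivative of $p$ or $q$, and test with $r = a.k$ instead of $a.\mathbf{0}$. The contradiction forcing the EPS from $q + a.k$ to be non-empty is then not ``$p_1$ cannot perform $a$'' (it may well be able to) but ``$p_1' \approx k$ is impossible by the choice of $k$''; this also creates an extra subcase, absent from your sketch, when $p$ itself performs $a$ and one must use stability of $k$ to pin the offending derivative to $k$. The bulk of the paper's work then goes into producing such a $k$: the Klop processes $k_0 = \mathbf{0}$, $k_{n+1} = k_n + a.k_n$ are shown stable and pairwise non-weakly-bisimilar, a counting lemma picks one avoiding the finitely many derivatives in the finite-state case, and the general case extends the construction to ordinal indices, which in turn requires an axiomatized infinite summation operator and the fact that the ordinals of a type out-cardinal any set of that type. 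None of this machinery is recoverable from your proposal, so the backward direction of the unconditional biconditional remains unproved.
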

It's surprising hard  to prove this result, when there's no
assumptions on the processes.  We consider the following three cases with
increasing difficulities:
\begin{enumerate}
\item with classical cardinality assumptions;
\item for finite state CCS;
\item general case.
\end{enumerate}

The easy part (left $\Longrightarrow$ right) is already proven in
previous section by combining \texttt{OBS_CONGR_IMP_WEAK_CONGR} and
\texttt{WEAK_CONGR_IMP_SUM_EQUIV}, or it can be proved directly using
\texttt{OBS_CONGR_IMP_WEAK_EQUIV} and \texttt{OBS_CONGR_SUBST_SUM_R}:
\begin{theorem}(The easy part ``Coarsest congruence
  contained in $\approx$'')
\label{thm:easy-part}
\begin{alltt}
\HOLTokenTurnstile{} \HOLFreeVar{p} \HOLSymConst{\HOLTokenObsCongr} \HOLFreeVar{q} \HOLSymConst{\HOLTokenImp{}} \HOLSymConst{\HOLTokenForall{}}\HOLBoundVar{r}. \HOLFreeVar{p} \HOLSymConst{+} \HOLBoundVar{r} \HOLSymConst{\ensuremath{\approx}} \HOLFreeVar{q} \HOLSymConst{+} \HOLBoundVar{r} \hfill[COARSEST_CONGR_LR]
\end{alltt}
\end{theorem}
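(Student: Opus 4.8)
The plan is to reduce this ``easy direction'' to two ingredients that are already available: the substitutivity of observation congruence under binary summation and the implication from observation congruence to weak equivalence. First I would fix an arbitrary $r$ and invoke reflexivity of observation congruence, \texttt{OBS\_CONGR\_REFL}, to obtain $r \approx^c r$. Together with the hypothesis $p \approx^c q$, these two facts feed directly into \texttt{OBS\_CONGR\_PRESD\_BY\_SUM}, which states $p \approx^c q \wedge r \approx^c s \Rightarrow p + r \approx^c q + s$. Instantiating $s := r$ then yields $p + r \approx^c q + r$.

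The second and final step is to weaken this from observation congruence down to weak equivalence: applying \texttt{OBS\_CONGR\_IMP\_WEAK\_EQUIV} to $p + r \approx^c q + r$ gives exactly $p + r \approx q + r$. Since $r$ was arbitrary, the universally quantified conclusion $\forall r.\ p + r \approx q + r$ follows, which is the goal. (This is essentially the route the paper labels as going through \texttt{OBS\_CONGR\_IMP\_WEAK\_EQUIV} and the one-sided substitution law, except that I derive the one-sided substitution inline from \texttt{OBS\_CONGR\_PRESD\_BY\_SUM} and reflexivity so as to use only theorems explicitly stated above.)

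An alternative, matching the congruence-theoretic narrative of the preceding section, would route through the weak bisimulation congruence $\sim_w^c$: because $\approx^c$ is a congruence finer than $\approx$, the coarsest-congruence property gives $\sim_w^c\, p\, q$ (this is \texttt{OBS\_CONGR\_IMP\_WEAK\_CONGR}), and then instantiating the universally quantified context in the definition of $\sim_w^c$ with the sum context $\lambda t.\, t + r$ produces $p + r \approx q + r$ (this is \texttt{WEAK\_CONGR\_IMP\_SUM\_EQUIV}). I would prefer the first route, as it is more elementary and avoids unfolding the one-hole context machinery.

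I do not expect a genuine obstacle here, since this is by construction the easy half of the biconditional. The only substantive content is \texttt{OBS\_CONGR\_PRESD\_BY\_SUM}, whose already-completed proof is where the real work of handling $\tau$-transitions under the sum operator resides; granting that lemma, the present statement is a two-line composition. The difficulty is entirely concentrated in the converse direction --- showing that $\forall r.\ p + r \approx q + r$ implies $p \approx^c q$ --- which is precisely why the subsequent development treats it separately under cardinality assumptions, for finite-state CCS, and finally in full generality via the ordinal-based argument.
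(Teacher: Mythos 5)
Your proposal is correct and matches the paper's own (second, ``direct'') route: the paper proves this via \texttt{OBS\_CONGR\_IMP\_WEAK\_EQUIV} together with the one-sided sum-substitution law \texttt{OBS\_CONGR\_SUBST\_SUM\_R}, and your inline derivation of that law from \texttt{OBS\_CONGR\_PRESD\_BY\_SUM} and \texttt{OBS\_CONGR\_REFL} is exactly how it is obtained. The alternative route you mention through \texttt{OBS\_CONGR\_IMP\_WEAK\_CONGR} and \texttt{WEAK\_CONGR\_IMP\_SUM\_EQUIV} is likewise the paper's first stated option, so nothing is missing.
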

Thus we only focus on the hard part (right $\Longrightarrow$ left) in
the rest of this section.

\subsection{With classicial cardinality assumptions}

A classic restriction is to assume cardinality limitations on the two
processes, so that didn't use up all possible labels. Sometimes this
assumption is automatically satisfied, for example: the CCS is
finitrary and the set of all actions is infinite. But in our setting,
the CCS datatype contains twi type variables, and if the set of all possible
labels has only finite cardinalities, this assumtion may not be satisfied.

In \cite{Milner:2017tw} (Proposition 3 in Chapter 7, p. 153), Robin
Milner simply called
 this theorem the ``Proposition 3'':
\begin{proposition}{(Proposition 3 of observation congruence)}
Assume that $\mathcal{L}(P) \cup \mathcal{L}(Q) \neq
\mathcal{L}$. Then $P \approx^c Q$ iff, for all $R$, $P + R \approx Q
+ R$.
\end{proposition}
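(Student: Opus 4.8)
The plan is to reduce the statement to its hard direction. The left-to-right implication is already available as Theorem~\ref{thm:easy-part} (\texttt{COARSEST\_CONGR\_LR}), so no new work is needed there. For the converse I would assume both $P + R \approx Q + R$ for every $R$ and the cardinality hypothesis $\mathcal{L}(P)\cup\mathcal{L}(Q)\neq\mathcal{L}$, and aim to establish $P \approx^c Q$ by unfolding the definition \texttt{OBS\_CONGR}. The hypothesis furnishes a distinguished visible label $a\in\mathcal{L}$ occurring in neither $P$ nor $Q$; I would fix such an $a$ and use the concrete test process $R := a.\mathbf{0}$, so that $P + a.\mathbf{0} \approx Q + a.\mathbf{0}$. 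As a preliminary I would also record $P \approx Q$, obtained by instantiating the hypothesis at $R := \mathbf{0}$ and rewriting with the law $E + \mathbf{0} \approx E$.

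Since the hypotheses are symmetric in $P$ and $Q$, it suffices to verify one direction of observation congruence: given $P \overset{u}{\longrightarrow} P_1$, produce $Q \overset{u}{\Longrightarrow} Q_2$ with $P_1 \approx Q_2$. I would split on whether $u$ is visible or $\tau$. When $u = \mathrm{label}\ \ell$, the required weak response follows immediately from $P \approx Q$ via \texttt{OBS\_PROPERTY\_STAR}, because for visible actions observation congruence asks nothing more than weak equivalence already guarantees; no freshness is used here. The genuine difficulty is the case $u = \tau$, where observation congruence demands a \emph{non-trivial} weak move $Q \overset{\tau}{\Longrightarrow} Q_2$ (an $\mathrm{EPS}$ containing at least one $\tau$), rather than the empty $\mathrm{EPS}$ that $Q \approx P_1$ alone would sanction.

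For that case I would argue as follows. From $P \overset{\tau}{\longrightarrow} P_1$ we obtain $P + a.\mathbf{0} \overset{\tau}{\longrightarrow} P_1$, so $P + a.\mathbf{0} \approx Q + a.\mathbf{0}$ supplies some $T$ with $Q + a.\mathbf{0} \overset{\epsilon}{\Longrightarrow} T$ and $P_1 \approx T$. The crux is to show this $\mathrm{EPS}$ is non-trivial, i.e.\ $T \neq Q + a.\mathbf{0}$. Suppose otherwise; then $P_1 \approx Q + a.\mathbf{0}$, and since $Q + a.\mathbf{0} \overset{a}{\longrightarrow} \mathbf{0}$, property (*) forces $P_1 \overset{a}{\Longrightarrow} P''$ for some $P''$. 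But $P_1$ is a $\tau$-derivative of $P$, so every action it can perform, even weakly, lies in $\mathcal{L}(P)$, whereas $a \notin \mathcal{L}(P)$ by the choice of $a$ --- a contradiction. Hence $Q + a.\mathbf{0} \overset{\epsilon}{\Longrightarrow} T$ takes at least one step, and because the summand $a.\mathbf{0}$ has no $\tau$-transition, its first $\tau$ (and therefore the entire derivation) must originate in $Q$, yielding $Q \overset{\tau}{\Longrightarrow} T$ with $P_1 \approx T$, exactly the witness required.

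I expect the main obstacle to be the ``sort'' step: formally justifying that a label unused by $P$ cannot be performed by any $\tau$-derivative of $P$, even weakly. This requires a lemma bounding the labels occurring on transitions of $P$ (propagated through $\mathrm{EPS}$ and \HOLinline{\HOLConst{WEAK_TRANS}}) by the syntactic label set $\mathcal{L}(P)$, together with the extraction of a witness $a \in \mathcal{L}\setminus(\mathcal{L}(P)\cup\mathcal{L}(Q))$ from the cardinality hypothesis. It is precisely here that the assumption $\mathcal{L}(P)\cup\mathcal{L}(Q)\neq\mathcal{L}$ is indispensable: without a spare label the test process $a.\mathbf{0}$ cannot be formed, and this failure is exactly what forces the more delicate ordinal-based argument in the unrestricted case.
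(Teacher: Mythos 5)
Your proposal is correct and takes essentially the same route as the paper's proof: fix a label $a$ unused by $P$ and $Q$, use the test process $a.\mathbf{0}$ so that $P + a.\mathbf{0} \approx Q + a.\mathbf{0}$, and in the $\tau$ case show the responding EPS from $Q + a.\mathbf{0}$ must be non-trivial, since a trivial one would force $P_1 \overset{a}{\Longrightarrow} P''$ and hence $P \overset{a}{\Longrightarrow} P''$, contradicting freshness of $a$. The only minor differences are that the paper also dispatches the visible case through the sum $p + a.\mathbf{0} \approx q + a.\mathbf{0}$ (using $L \neq a$) rather than via $P \approx Q$ obtained at $R := \mathbf{0}$, and that the paper isolates exactly the property your ``sort'' lemma would establish into the weaker hypothesis \texttt{free\_action} (no weak $a$-transition from the root), proving the stronger statement of which this proposition is the corollary.
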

And in \cite{Gorrieri:2015jt} (Theorem 4.5 in Chapter 4, p. 185), Prof. Gorrieri
has called it ``Coarsest congruence contained in $\approx$'' (so did
us in this paper):
\begin{theorem}{(Coarsest congruence contained in $\approx$)}
Assume that $\mathrm{fn}(p) \cup \mathrm{fn}(q) \neq \mathscr{L}$. Then $p \approx^c q$
if and only if $p + r \approx q + r$ for all $r \in \mathscr{P}$.
\end{theorem}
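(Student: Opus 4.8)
The easy direction (left to right) is already available as Theorem~\ref{thm:easy-part} (\texttt{COARSEST\_CONGR\_LR}), so the plan is to establish only the hard direction: assuming $\forall r.\; p + r \approx q + r$, derive $p \approx^c q$. First I would extract the plain weak equivalence $p \approx q$ by instantiating the hypothesis at $r = \mathbf{0}$ and rewriting with the algebraic law $p + \mathbf{0} \approx p$ together with \texttt{WEAK\_EQUIV\_SYM} and \texttt{WEAK\_EQUIV\_TRANS}. Next, the cardinality assumption $\mathrm{fn}(p) \cup \mathrm{fn}(q) \neq \mathscr{L}$ is used exactly once, to fix a \emph{fresh} label $a \in \mathscr{L} \setminus (\mathrm{fn}(p) \cup \mathrm{fn}(q))$, which will serve as a distinguishing observation.

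I would then unfold the definition of $\approx^c$ and check its two symmetric clauses; it suffices to treat transitions emanating from $p$, since swapping $p$ and $q$ (the chosen $a$ is fresh for both) handles the other clause. Splitting on the action $u$, the visible case $u = \mathrm{label}\,\ell$ is immediate: a transition $p \xrightarrow{\ell} E_1$ is matched directly by \texttt{OBS\_PROPERTY\_STAR} applied to $p \approx q$, yielding $q \overset{\ell}{\Longrightarrow} E_2$ with $E_1 \approx E_2$, which is precisely what observation congruence demands for visible actions. The only genuinely stronger requirement of $\approx^c$ over $\approx$ concerns $\tau$-moves, where the matching weak transition must contain at least one real $\tau$-step.

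For the $\tau$-case, given $p \xrightarrow{\tau} E_1$, I would instantiate the hypothesis at $r = a.\mathbf{0}$ to obtain $p + a.\mathbf{0} \approx q + a.\mathbf{0}$, lift the transition by the sum rule to $p + a.\mathbf{0} \xrightarrow{\tau} E_1$, and apply \texttt{OBS\_PROPERTY\_STAR} to obtain an EPS response $q + a.\mathbf{0} \overset{\epsilon}{\Longrightarrow} E_2$ with $E_1 \approx E_2$. The crux is to show this response is non-trivial, i.e.\ $E_2 \neq q + a.\mathbf{0}$: otherwise $E_1 \approx q + a.\mathbf{0}$, and since $q + a.\mathbf{0} \xrightarrow{a} \mathbf{0}$, property (*) would force $E_1$ to weakly perform $a$, which is impossible because $E_1$ is reachable from $p$ and $a \notin \mathrm{fn}(p)$. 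Once $E_2 \neq q + a.\mathbf{0}$ is known, \texttt{EPS\_cases1} splits off a genuine first step $q + a.\mathbf{0} \xrightarrow{\tau} u_0 \overset{\epsilon}{\Longrightarrow} E_2$; the sum rule forces this $\tau$ to come from $q$ (the prefix $a.\mathbf{0}$ offers only $a$), so $q \xrightarrow{\tau} u_0 \overset{\epsilon}{\Longrightarrow} E_2$, i.e.\ $q \overset{\tau}{\Longrightarrow} E_2$ as a real weak $\tau$-transition with $E_1 \approx E_2$, exactly the observation-congruence witness.

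The main obstacle is the freshness argument used to rule out $E_2 = q + a.\mathbf{0}$. It rests on the lemma that a process cannot even weakly perform a label outside its free-name set: if $a \notin \mathrm{fn}(p)$, then no process reachable from $p$ admits a (strong, hence weak) $a$-transition. Formally this decomposes into two facts proved by rule induction on the transition relation, namely that $\mathrm{fn}$ is non-increasing along transitions ($p \xrightarrow{u} p'$ implies $\mathrm{fn}(p') \subseteq \mathrm{fn}(p)$) and that the label of any transition lies in the free names of its source, which then propagate through \texttt{EPS} and \texttt{WEAK\_TRANS}. Establishing this interaction between $\mathrm{fn}$ and the transition relation is the heart of the formalization, and it is precisely here that the cardinality hypothesis becomes indispensable; the surrounding case analysis is routine once it is in place.
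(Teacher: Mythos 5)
Your proposal is mathematically sound and reproduces the classical Milner/Gorrieri argument, but it takes a genuinely different route from the paper on the one point that matters. The paper does not formalize $\mathrm{fn}(\cdot)$ at all; instead it observes that the only thing the freshness hypothesis is ever used for is to rule out a weak $a$-transition emanating from the \emph{root} of $p$ (resp.\ $q$), and so it replaces $\mathrm{fn}(p)\cup\mathrm{fn}(q)\neq\mathscr{L}$ by the weaker hypothesis \texttt{free\_action}: $\exists a.\ \forall p'.\ \neg(p \overset{\mathrm{label}\,a}{\Longrightarrow} p')$. With that hypothesis the contradiction in the $\tau$-case is immediate: from $p \xrightarrow{\tau} E_1$ and the forced response $E_1 \overset{a}{\Longrightarrow} E_1'$ one composes to $p \overset{a}{\Longrightarrow} E_1'$, directly contradicting \texttt{free\_action}, with no detour through reachability. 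Your route instead requires defining $\mathrm{fn}$, proving that it is non-increasing along transitions and that every transition's label lies in the free names of its source, and propagating this through \texttt{EPS} and \texttt{WEAK\_TRANS} --- you correctly identify this as the heart of your formalization, but it is precisely the machinery the paper was designed to avoid (and which is nowhere present in the HOL development). What each buys: your version is faithful to the theorem as literally stated with the $\mathrm{fn}$ hypothesis; the paper's version is strictly more general (freshness implies \texttt{free\_action}, not conversely --- $p$ and $q$ may use every label somewhere in their graphs as long as one action is absent from their first-step weak transitions) and far cheaper to mechanize. The rest of your argument --- extracting $p\approx q$ at $r=\mathbf{0}$ for the visible case, instantiating $r=a.\mathbf{0}$, ruling out the trivial EPS response, and splitting off the first $\tau$ with \texttt{EPS\_cases1} --- matches the paper's proof step for step.
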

Both $\mathcal{L}(\cdot)$ and $\mathrm{fn}(\cdot)$ used in above
theorems mean the set of ``non-$\tau$ actions'' (i.e. labels) used in a given process.

We analysized the proof of abvoe theorem and have
found that, the assumption that the two processes didn't use up all
available labels. Instead, it can be  weakened to the following
stronger version, which assumes the following properties instead:
\begin{definition}{(Processes having free actions)}
A CCS process is said to have \emph{free actions} if there exists an
non-$\tau$ action such that it doesn't appear in any transition or
weak transition directly leading from the root of the process:
\begin{alltt}
\HOLTokenTurnstile{} \HOLConst{free_action} \HOLFreeVar{p} \HOLSymConst{\HOLTokenEquiv{}} \HOLSymConst{\HOLTokenExists{}}\HOLBoundVar{a}. \HOLSymConst{\HOLTokenForall{}}\HOLBoundVar{p\sp{\prime}}. \HOLSymConst{\HOLTokenNeg{}}(\HOLFreeVar{p} ==\HOLConst{label} \HOLBoundVar{a}=>> \HOLBoundVar{p\sp{\prime}})
\end{alltt}
\end{definition}

\begin{theorem}{(Stronger version of ``Coarsest congruence contained
    in $\approx$'', only the hard part)}
Assuming for two processes $p$ and $q$ have free actions, then $p \approx^c q$ 
if $p + r \approx q + r$ for all $r \in \mathscr{P}$:
\begin{alltt}
\HOLTokenTurnstile{} \HOLConst{free_action} \HOLFreeVar{p} \HOLSymConst{\HOLTokenConj{}} \HOLConst{free_action} \HOLFreeVar{q} \HOLSymConst{\HOLTokenImp{}} (\HOLSymConst{\HOLTokenForall{}}\HOLBoundVar{r}. \HOLFreeVar{p} \HOLSymConst{+} \HOLBoundVar{r} \HOLSymConst{\ensuremath{\approx}} \HOLFreeVar{q} \HOLSymConst{+} \HOLBoundVar{r}) \HOLSymConst{\HOLTokenImp{}} \HOLFreeVar{p} \HOLSymConst{\HOLTokenObsCongr} \HOLFreeVar{q}
\end{alltt}
\end{theorem}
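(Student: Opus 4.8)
The plan is to verify the definition of observation congruence (\texttt{OBS\_CONGR}) directly, exploiting the freedom to instantiate the universally quantified hypothesis $\forall r.\; p+r \approx q+r$ at a cleverly chosen summand. Since the two clauses of \texttt{OBS\_CONGR} are symmetric, I would prove the clause matching moves of $p$ by moves of $q$ (using \texttt{free\_action} $p$) and obtain the other clause by the identical argument with $p$ and $q$ interchanged (using \texttt{free\_action} $q$). So fix a transition $p \overset{u}{\longrightarrow} p_1$; the goal is to produce some $q_1$ with $q \overset{u}{\Longrightarrow} q_1$ and $p_1 \approx q_1$. From \texttt{free\_action} $p$ I extract a label $a$ that $p$ cannot perform even weakly, i.e. $\forall p'.\;\neg(p \overset{a}{\Longrightarrow} p')$ (here $\overset{a}{\Longrightarrow}$ is the weak transition on \emph{label} $a$), and instantiate the hypothesis at $r = a.\mathbf{0}$, obtaining $p + a.\mathbf{0} \approx q + a.\mathbf{0}$.

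Next I would split on whether $u$ is $\tau$ or a visible label. By the $(Sum_1)$ rule the move $p \overset{u}{\longrightarrow} p_1$ lifts to $p + a.\mathbf{0} \overset{u}{\longrightarrow} p_1$, and feeding this into \texttt{OBS\_PROPERTY\_STAR} for $p + a.\mathbf{0} \approx q + a.\mathbf{0}$ produces a weak response $s$ from $q + a.\mathbf{0}$ with $p_1 \approx s$. The essential observation is that the summand $a.\mathbf{0}$ is stable and can only ever perform $a$, so the response must genuinely come from $q$. For a visible $u =$ \emph{label} $l$, note $l \neq a$ (otherwise $p \overset{l}{\longrightarrow} p_1$ would give $p \overset{a}{\Longrightarrow} p_1$ by \texttt{TRANS\_IMP\_WEAK\_TRANS}, contradicting freeness), hence the weak $l$-transition of the sum factors through $q$, yielding $q \overset{l}{\Longrightarrow} s$ as required. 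For $u = \tau$ the response is an \texttt{EPS} step $q + a.\mathbf{0} \overset{\epsilon}{\Longrightarrow} s$; if this run is non-empty then, since $a.\mathbf{0}$ has no $\tau$-transition, its first step is a $\tau$ of $q$ and the whole run factors through $q$, delivering the proper weak $\tau$-transition $q \overset{\tau}{\Longrightarrow} s$ demanded by observation congruence.

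The only delicate point is the empty case of the $\tau$ split, where $s = q + a.\mathbf{0}$ and hence $p_1 \approx q + a.\mathbf{0}$; I would rule it out by a contradiction with freeness. Since $q + a.\mathbf{0} \overset{a}{\longrightarrow} \mathbf{0}$, property (*) applied to $p_1 \approx q + a.\mathbf{0}$ yields $p_1 \overset{a}{\Longrightarrow} p_1'$, and composing with $p \overset{\tau}{\longrightarrow} p_1$ via \texttt{TRANS\_TAU\_AND\_WEAK} gives $p \overset{a}{\Longrightarrow} p_1'$, contradicting the choice of $a$. This closes the $\tau$ case, so the whole first clause follows; the second clause is identical with the roles of $p$, $q$ and of \texttt{free\_action} $q$ swapped.

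I expect the main obstacle to be the formal decomposition of \texttt{EPS} and \texttt{WEAK\_TRANS} emanating from the sum $q + a.\mathbf{0}$ — precisely the claims that a non-empty $\epsilon$-run of the sum factors through $q$, and that a visible weak transition with label $\neq a$ does likewise. These are intuitively obvious but require inversion on the \texttt{SOS} rules for $+$ and prefix (\texttt{TRANS\_PREFIX}) together with \texttt{WEAK\_TRANS\_cases1} and the \texttt{EPS} ``cases'' and induction principles (\texttt{EPS\_cases1}, \texttt{EPS\_ind\_right}), since once an $\epsilon$-run leaves $q + a.\mathbf{0}$ the $a.\mathbf{0}$ summand disappears and one must track that the remaining run lives entirely among derivatives of $q$. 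Isolating a reusable lemma of the form ``every transition or weak transition of $P + a.\mathbf{0}$ other than the single $a$-move of the right summand is already a transition or weak transition of $P$'' should concentrate this bookkeeping in one place and keep the main case analysis clean.
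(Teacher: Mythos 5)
Your proposal is correct and follows essentially the same route as the paper's proof: instantiate the hypothesis at $r = a.\mathbf{0}$ for the free action $a$, split on $u = \tau$ versus visible, rule out the empty \texttt{EPS} response by deriving $p \overset{a}{\Longrightarrow} p_1'$ against freeness, show the surviving responses factor through $q$ because $a.\mathbf{0}$ can only perform $a \neq u$, and close the second clause by symmetry. The bookkeeping lemma you anticipate (decomposing weak transitions of $P + a.\mathbf{0}$) is exactly the implicit content of the paper's case analysis.
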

This new assumption is weaker because, even $p$ and $q$ may have used all possible
actions in their transition graphs, as long as there's one such free
action for their first-step weak transitions, therefore the theorem
still holds.
Also noticed that, the two processes do not have to share the same
free actions, this property focuses on single process.

\begin{proof}{(Proof of the stronger version of ``Coarsest congruence contained
    in $\approx$'')}
The kernel idea in this proof is to use that free action, say $a$, and
have $p + a.0 \approx q + a.0$ as the working basis. Then for any
transition from $p + a.0$, say $p + a.0 \overset{u}{\Longrightarrow}
E_1$, there must be a weak transition of the same action $u$ (or EPS
when $u = \tau$) coming
from $q + a.0$ as the response. We're going to use the free-action
assumptions to conclude that, when $u = \tau$, that EPS must contain
at least one $\tau$ (thus satisfied the definition of observation
congruence):
\begin{displaymath}
\xymatrix{
{p+a.0} \ar@{.}[r]^{\approx} \ar[d]^{u=\tau} & {q + a.0}
\ar@{=>}[d]^{\epsilon} \\
{E_1} \ar@{.}[r]^{\approx} & {E_2}
}
\end{displaymath}
Indeed, if the EPS leading from $q+a.0$ actually contains no
$\tau$-transition, that is, $q+a.0 = E_2$, then $E_1$ and $E_2$ cannot
be weak equivalence: for any $a$-transition from $q+a.0$, $E1$ must
response with a weak $a$-transition as $E_1
\overset{a}{\Longrightarrow} E_1'$, but this means $p
\overset{a}{\Longrightarrow} E_1'$, which is impossible by free-action
assumption on $p$:
\begin{displaymath}
\xymatrix{
{p} \ar[rd]^{\tau} \ar[rdd]^{a} & {p + a.0} \ar@{.}[r]^{\approx} \ar[d]^{\tau}
 & {q+a.0 = E_2} \ar[d]^{a} \\
{} & {E_1} \ar@{.}[ru]^{\approx} \ar@{=>}[d]^{a} & {0} \\
{} & {E_1'} \ar@{.}[ru]^{\approx}
}
\end{displaymath}
Once we have $q + a.0 \overset{\tau}{\Longrightarrow} E2$, the first
$\tau$-transition must comes from $q$, then it's obvious to see that
$E_2$ is a valid response required by observation congruence of $p$
and $q$ in this case.

When $p \overset{L}{\longrightarrow} E_1$, we have $p+a.0
\overset{L}{\longrightarrow} E_1$, then there's an $E_2$ such that
$q+a.0\overset{L}{\Longrightarrow} E_2$. We can further conclude that
$q\overset{L}{\Longrightarrow}E_2$ because by free-action assumption
$L\neq a$. This finishes the first half of the proof, the second half
(for all transition coming from $q$) is completely symmetric. \qed
\end{proof}

Combining the easy and hard parts, the following theorem is proved:
\begin{theorem}{(Coarsest congruence contained in $\approx$)}
\begin{alltt}
\HOLTokenTurnstile{} \HOLConst{free_action} \HOLFreeVar{p} \HOLSymConst{\HOLTokenConj{}} \HOLConst{free_action} \HOLFreeVar{q} \HOLSymConst{\HOLTokenImp{}} (\HOLFreeVar{p} \HOLSymConst{\HOLTokenObsCongr} \HOLFreeVar{q} \HOLSymConst{\HOLTokenEquiv{}} \HOLSymConst{\HOLTokenForall{}}\HOLBoundVar{r}. \HOLFreeVar{p} \HOLSymConst{+} \HOLBoundVar{r} \HOLSymConst{\ensuremath{\approx}} \HOLFreeVar{q} \HOLSymConst{+} \HOLBoundVar{r})
\end{alltt}
\end{theorem}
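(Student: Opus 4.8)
The plan is to observe that this final statement is nothing more than the conjunction of the two implications already established above, so the proof reduces to splitting the biconditional with \texttt{EQ\_TAC} and discharging each direction separately under the shared hypotheses \texttt{free\_action} $p$ and \texttt{free\_action} $q$. First I would strip the two \texttt{free\_action} assumptions into the context and then introduce the biconditional, leaving two subgoals to close with the lemmas already in hand.

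For the forward direction, $p \approx^c q \Rightarrow \forall r.\ p + r \approx q + r$, I would appeal directly to Theorem~\ref{thm:easy-part} (\texttt{COARSEST\_CONGR\_LR}). That theorem holds with no side conditions at all, so here the two \texttt{free\_action} hypotheses are superfluous and can simply be ignored: after assuming $p \approx^c q$ the goal is closed by a single application of \texttt{COARSEST\_CONGR\_LR}. For the backward direction, $(\forall r.\ p + r \approx q + r) \Rightarrow p \approx^c q$, I would invoke the stronger hard-part theorem proved immediately above, whose hypotheses are exactly \texttt{free\_action} $p$ and \texttt{free\_action} $q$ together with the sum-equivalence assumption. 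Since both \texttt{free\_action} facts are already present in the ambient context, once I assume $\forall r.\ p + r \approx q + r$ the conclusion $p \approx^c q$ follows by modus ponens on that lemma.

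The main obstacle at this level is essentially nonexistent: all of the genuine mathematical content resides in the hard-part theorem, whose proof uses the free action $a$ to force the EPS response from $q + a.0$ to contain at least one real $\tau$-transition (and symmetrically), thereby matching the definition of observation congruence. Once that lemma and the assumption-free easy part are available, the final theorem is a purely logical packaging step. The only points demanding care are recognising that the forward implication consumes none of the \texttt{free\_action} hypotheses while the backward one consumes both, and ensuring the hypotheses are presented to the hard-part lemma in the order its statement expects.
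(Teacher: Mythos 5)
Your proposal is correct and matches the paper exactly: the paper proves this theorem with the single remark ``Combining the easy and hard parts,'' i.e.\ by splitting the biconditional and discharging the forward direction with the assumption-free \texttt{COARSEST\_CONGR\_LR} and the backward direction with the stronger free-action version of the hard part. Your additional observation that the \texttt{free\_action} hypotheses are consumed only by the backward direction is also accurate.
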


\subsection{Without cardinality assumptions}

In 2005, Rob J. van Glabbeek published a paper
\cite{vanGlabbeek:2005ur} showing that ``the weak bisimulation
congruence can be characterised as rooted weak bisimulation
equivalence, even without making assumptions on the cardinality of the
sets of states or actions of the process under consideration''. That
is to say, above ``Coarsest congruence contained in $\approx$''
theorem holds even for two arbitrary processes! The idea is actually
from Jan Willem Klop back to the 80s, but it's not published until
that 2005 paper. This proof is not known to Robin Milner in \cite{Milner:2017tw}.
\footnote{We carefully
investigated this paper and focused on the formalization of the proof
contained in the paper, with all remain plans of this ``tirocinio''
project cancelled.}

The main result is the following version of the hard part of
``Coarsest congruence contained in $\approx$'' theorem under new
assumptions:
\begin{theorem}{(Coarsest congruence contained in $\approx$, new
    assumptions)}
\label{thm:new-assum}
For any two CCS processes $p$ and $q$, if there exists another stable
(i.e. first-step transitions are never $\tau$-transition) process
$k$ which is not weak bisimlar with any sub-process follows from $p$
and $q$ by \emph{one-step} weak transitions, then $p \approx^c q$
if $p + r \approx q + r$ for all $r \in \mathscr{P}$.
\begin{alltt}
\HOLTokenTurnstile{} (\HOLSymConst{\HOLTokenExists{}}\HOLBoundVar{k}.
      \HOLConst{STABLE} \HOLBoundVar{k} \HOLSymConst{\HOLTokenConj{}} (\HOLSymConst{\HOLTokenForall{}}\HOLBoundVar{p\sp{\prime}} \HOLBoundVar{u}. \HOLFreeVar{p} ==\HOLBoundVar{u}=>> \HOLBoundVar{p\sp{\prime}} \HOLSymConst{\HOLTokenImp{}} \HOLSymConst{\HOLTokenNeg{}}(\HOLBoundVar{p\sp{\prime}} \HOLSymConst{\ensuremath{\approx}} \HOLBoundVar{k})) \HOLSymConst{\HOLTokenConj{}}
      \HOLSymConst{\HOLTokenForall{}}\HOLBoundVar{q\sp{\prime}} \HOLBoundVar{u}. \HOLFreeVar{q} ==\HOLBoundVar{u}=>> \HOLBoundVar{q\sp{\prime}} \HOLSymConst{\HOLTokenImp{}} \HOLSymConst{\HOLTokenNeg{}}(\HOLBoundVar{q\sp{\prime}} \HOLSymConst{\ensuremath{\approx}} \HOLBoundVar{k})) \HOLSymConst{\HOLTokenImp{}}
   (\HOLSymConst{\HOLTokenForall{}}\HOLBoundVar{r}. \HOLFreeVar{p} \HOLSymConst{+} \HOLBoundVar{r} \HOLSymConst{\ensuremath{\approx}} \HOLFreeVar{q} \HOLSymConst{+} \HOLBoundVar{r}) \HOLSymConst{\HOLTokenImp{}}
   \HOLFreeVar{p} \HOLSymConst{\HOLTokenObsCongr} \HOLFreeVar{q}
\end{alltt}
\end{theorem}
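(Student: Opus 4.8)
The plan is to verify the definition of observation congruence (\texttt{OBS\_CONGR}) directly, mirroring the structure of the \texttt{free\_action} proof above but replacing the auxiliary summand $a.0$ by $a.k$, where $k$ is the stable process supplied by the hypothesis and $a$ is any label (one always exists, since every HOL type is inhabited). First I would extract $k$ together with its two ``not weakly bisimilar to any one-step derivative'' properties, fix a label $a$, and instantiate the hypothesis $\forall r.\,p+r\approx q+r$ at $r=a.k$ to obtain the working identity $p+a.k\approx q+a.k$. The goal then splits into two symmetric halves; I describe the half handling transitions out of $p$.

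Given a transition $p\overset{u}{\longrightarrow}E_1$, the sum rule lifts it to $p+a.k\overset{u}{\longrightarrow}E_1$, and \texttt{OBS\_PROPERTY\_STAR} applied to $p+a.k\approx q+a.k$ yields a response: a weak transition $q+a.k\overset{u}{\Longrightarrow}E_2$ when $u$ is a label, or an \texttt{EPS} step $q+a.k\overset{\epsilon}{\Longrightarrow}E_2$ when $u=\tau$, with $E_1\approx E_2$ in either case. For observation congruence the delicate point is the $\tau$ case: I must show the responding \texttt{EPS} contains at least one real $\tau$. Supposing otherwise gives $q+a.k=E_2$, hence $E_1\approx q+a.k$; feeding the summand transition $q+a.k\overset{a}{\longrightarrow}k$ back through \texttt{OBS\_PROPERTY\_STAR} produces $E_1\overset{a}{\Longrightarrow}E_1'$ with $E_1'\approx k$. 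Composing the $\tau$-step $p\overset{\tau}{\longrightarrow}E_1$ with this weak $a$-transition via \texttt{TRANS\_TAU\_AND\_WEAK} gives $p\overset{a}{\Longrightarrow}E_1'$, a one-step weak transition of $p$ into a process weakly bisimilar to $k$ --- contradicting the hypothesis on $p$. Therefore the \texttt{EPS} is genuine, and since the $a.k$ summand offers no $\tau$, its first $\tau$ must come from $q$, yielding the required genuine weak transition $q\overset{\tau}{\Longrightarrow}E_2$.

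For the label case $u=L$ I would analyse the responding weak transition $q+a.k\overset{L}{\Longrightarrow}E_2$ by asking whether the $L$-step is contributed by $q$ or by the $a.k$ summand. If it comes from $q$ (or from any derivative reached after a first $\tau$ from $q$), then $q\overset{L}{\Longrightarrow}E_2$ directly. The only other possibility is that the leading \texttt{EPS} is empty and the $L$-step is $a.k\overset{a}{\longrightarrow}k$, forcing $L=a$ and $E_2$ reachable from $k$ by \texttt{EPS}; but \texttt{STABLE}~$k$ makes that \texttt{EPS} trivial, so $E_2=k$ and hence $E_1\approx k$. Since $p\overset{L}{\longrightarrow}E_1$ gives $p\overset{L}{\Longrightarrow}E_1$ by \texttt{TRANS\_IMP\_WEAK\_TRANS}, this again contradicts the hypothesis on $p$. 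Thus only the good branch survives and $q\overset{L}{\Longrightarrow}E_2$ with $E_1\approx E_2$.

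The half for transitions out of $q$ is literally symmetric, using the assumption on $q$ and the converse direction of \texttt{OBS\_PROPERTY\_STAR}. I expect the main obstacle to be the formal case analysis on the composite transitions out of $p+a.k$ and $q+a.k$: reliably separating the contribution of the $q$ (resp. $p$) component from that of the $a.k$ summand --- via the sum cases theorem and the \texttt{EPS}/\texttt{WEAK\_TRANS} cases lemmas --- and then invoking \texttt{STABLE}~$k$ to collapse the spurious branch to $E_2=k$. Threading the weak-transition composition lemmas (\texttt{TRANS\_TAU\_AND\_WEAK}, \texttt{EPS\_AND\_WEAK}, \texttt{TRANS\_AND\_EPS}) so that the manufactured derivative is recognised as a genuine one-step weak transition of $p$ (resp. $q$), and thereby triggers the free-from-$k$ hypothesis, is where the bookkeeping is most error-prone.
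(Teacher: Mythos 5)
Your proposal is correct and follows essentially the same route as the paper's own proof: instantiate the hypothesis at $r = a.k$, use property~(*) on $p + a.k \approx q + a.k$, rule out the empty \texttt{EPS} in the $\tau$ case by manufacturing a weak $a$-transition of $p$ into a process bisimilar to $k$, and use \texttt{STABLE}~$k$ to kill the branch where the label step comes from the $a.k$ summand. The only cosmetic difference is that you merge the paper's two label cases ($u = a$ and $u = L \neq a$) into one uniform analysis of where the $L$-step originates, which is sound since a step from the summand forces $L = a$.
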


\begin{proof}
Assuming the existence of that special process $k$, and take an
arbitrary non-$\tau$ action, say $a$ (this is always possible in our
setting, because in higher order logic any valid type must contain at
least one value), we'll use the fact that $p + a.k \approx q + a.k$ as
our working basis. For all transitions from $p$, say $p
\overset{u}{\longrightarrow} E_1$, we're going to prove that, there
must be a corresponding weak transition such that $q
\overset{u}{\Longrightarrow} E_2$, and $E_1 \approx E_2$ (thus $p
\approx^c q$. There're three cases to consider:

\begin{enumerate}
\item $\tau$-transitions: $p
\overset{\tau}{\longrightarrow} E_1$. By SOS rule ($\mathrm{Sum}_1$),
we have $p + a.k \overset{\tau}{\longrightarrow} E_1$, now by $p + a.k
\approx q + a.k$ and the property (*) of weak equivalence, there
exists an $E_2$ such that $q + a.k \overset{\epsilon}{\Longrightarrow}
E_2$. We can
use the property of $k$ to assert that, such an EPS transition must
contains at least one $\tau$-transition. Because if it's not, then $q
+a.k = E_2$, and since $E_1 \approx E_2$, for transition $q + a.k
\overset{a}{\longrightarrow} k$, $E_1$ must make a response by $E_1
\overset{a}{\Longrightarrow} E_1'$, and as the result we have $p
\overset{a}{\Longrightarrow} E_1'$ and $E_1' \approx k$, which is
impossible by the special choice of $k$:
\begin{displaymath}
\xymatrix{
{p} \ar[dr]^{\tau} \ar@{=>}[ddr]^{a} & {p+a.k} \ar@{.}[r]^{\approx}
\ar[d]^{\tau} & {q+a.k = E_2} \ar[d]^{a} \\
{} & {E_1} \ar@{.}[ru]^{\approx} \ar@{=>}[d]^{a} & {k} \\
{} & {E_1'} \ar@{.}[ru]^{\not\approx}
}
\end{displaymath}
\item If there's a $a$-transition coming from $p$ (means that the
  arbitrary chosen action $a$ is normally used by processes $p$ and
  $q$), that is, $p \overset{a}{\longrightarrow} E_1$, also
  $p+a.k\overset{a}{\longrightarrow} E_1$,
by property (*) of weak equivalence, there exists $E_2$ such that $q +
a.k \overset{a}{\Longrightarrow} E_2$:
\begin{displaymath}
\xymatrix{
{p} \ar[dr]^{a} & {p+a.k} \ar@{.}[r]^{\approx} \ar[d]^{a} & {q+a.k}
\ar@{=>}[d]^{a} \\
{} & {\forall E_1} \ar@{.}[r]^{\approx} & {\exists E_2}
}
\end{displaymath}
We must further divide this weak transition into two cases based on its first step:
\begin{enumerate}
\item If the first step is a $\tau$-transition, then for sure this
  entire weak transition must come from $q$ (otherwise the first step
  would be an $a$-transition from $a.k$). And in this case we can
  easily conclude $q \overset{a}{\Longrightarrow} E_2$ without using
 the property of $k$:
\begin{displaymath}
\xymatrix{
{p} \ar[dr]^{a} \ar@/^5ex/[rrr]^{\approx^c} & {p+a.k} \ar@{.}[r]^{\approx} \ar[d]^{a} & {q+a.k}
\ar[d]^{\tau} & {q} \ar[ld]^{\tau} \ar@{=>}[ldd]^{a} \\
{} & {\forall E_1} \ar@{.}[rd]^{\approx} & {\exists E'} \ar@{=>}[d]^{a} \\
{} & {} & {\exists E_2}
}
\end{displaymath}
\item If the first step is an $a$-transition, we can prove that, this
  $a$-transition must come from $h$ (then the proof finishes for the
  entire $a$-transition case). Because if it's from the $a.k$, since
  $k$ is stable, then there's no other coice but $E_2 = k$ and $E_1
  \approx E_2$. This is again impossible for the special choice of
  $k$:
\begin{displaymath}
\xymatrix{
{p} \ar[dr]^{a} & {p+a.k} \ar@{.}[r]^{\approx} \ar[d]^{a} & {q+a.k}
\ar[d]^{a} \\
{} & {\forall E_1} \ar@{.}[r]^{\not\approx} & {E2 = k}
}
\end{displaymath}
\end{enumerate}

\item For other $L$-transitions coming from $p$, where $L \neq a$ and
  $L \neq \tau$. As a response to $p + a.k \overset{L}{\longrightarrow}
  E_1$, we have $q + a.k \overset{L}{\Longrightarrow} E_2$ and $E_1
  \approx E_2$. It's obvious that $q \overset{L}{\Longrightarrow} E_2$
  in this case, no matter what the first step is (it can only be $\tau$
  and $L$) and this satisfies the requirement of observation
  congruence natually:
\begin{displaymath}
\xymatrix{
{p} \ar[dr]^{\forall L} \ar@/^5ex/[rrr]^{\approx^c} & {p+a.k} \ar@{.}[r]^{\approx} \ar[d]^{L} & {q+a.k}
\ar@{=>}[d]^{L} & {q} \ar@{=>}[ld]^{L} \\
{} & {\forall E_1} \ar@{.}[r]^{\approx} & {\exists E2}
}
\end{displaymath}
\end{enumerate}

The other direction (for all transitions coming from $q$) is
completely symmetric. Combining all the cases, we have $p \approx^c q$.\qed
\end{proof}
Now it remains to prove the existence of the special process mentioned in the assumption of above theorem.

\subsection{Arbitrary many non-bisimilar processes}

Strong equivalence, weak equivalence, observation congruence, they're
all equivalence relations on CCS process space. General speaking, each
equivalence relation must have \emph{partitioned} all processes into
several disjoint equivalence classes: processes in the same
equivalence class are equivalent, and processes in different
equivalence class are not equivalent.

The assumption in previous Theorem \ref{thm:new-assum} requires the
existence of a special CCS process, which is not weak equivalence to
any sub-process leading from the two root processes by weak
transitions. On worst cases, there may be infinite such
sub-processes\footnote{Even the CCS is finite branching, that's
  because after a weak transition, the end process may have an
  infinite $\tau$-chain, and with each $\tau$-transition added into
  the weak transition, the new end process is still a valid weak
  transition, thus lead to infinite number of weak transitions.} Thus
there's no essential differences to consider all states in the
process group instead.

Then it's natural to ask if there are infinite
equivalence classes of CCS processes. If so, then it should be 
possible to choose one which is not equivalent with all the (finite) states in the
graphs of the two given processes. It turns out that, after Jan Willem
Klop, it's possible to construct such processes, in which each of them forms a new
equivalence class, we call them ``Klop processes'' in this paper:
\begin{definition}{(Klop processes)}
For each ordinal $\lambda$, and an arbitrary chosen non-$\tau$ action $a$,
define a CCS process $k_\lambda$ as follows:
\begin{enumerate}
\item $k_0 = 0$,
\item $k_{\lambda+1} = k_\lambda + a.k_\lambda$ and
\item for $\lambda$ a limit ordinal, $k_\lambda = \sum_{\mu < \lambda}
  k_\mu$, meaning that $k_\lambda$ is constructed from all graphs
  $k_\mu$ for $\mu < \lambda$ by identifying their root.
\end{enumerate}
\end{definition}

Unfortunately, it's impossible to express infinite sums in our CCS
datatype settings\footnote{And such infinite sums seems to go beyond the
ability of the HOL's Datatype package} without intruducing new axioms.
Therefore we have followed a two-step approach in this project: first
we consider only the finite-state CCS (no need for axioms), then we turn
to the general case.

\subsection{Finite-state CCS}

If both processes $p$ and $q$ are finite-state CCS processes, that is,
the number of reachable states from $p$ and $q$ are both finite. And
in this case, the following limited version of Klop processes can be
defined as a recursive function (on natural numbers) in HOL4:
\begin{definition}{(Klop processes as recursive function on natural numbers)}
\begin{alltt}
\HOLConst{KLOP} \HOLFreeVar{a} \HOLNumLit{0} \HOLSymConst{=} \HOLConst{nil}
\HOLConst{KLOP} \HOLFreeVar{a} (\HOLConst{SUC} \HOLFreeVar{n}) \HOLSymConst{=} \HOLConst{KLOP} \HOLFreeVar{a} \HOLFreeVar{n} \HOLSymConst{+} \HOLConst{label} \HOLFreeVar{a}\HOLSymConst{..}\HOLConst{KLOP} \HOLFreeVar{a} \HOLFreeVar{n}\hfill[KLOP_def]
\end{alltt}
\end{definition}

By induction on the definition of Klop processes and SOS inference
rules ($\mathrm{Sum}_1$) and ($\mathrm{Sum}_2$), we can easily prove
the following properties of Klop functions:
\begin{proposition}{(Properties of Klop functions and processes)}
\begin{enumerate}
\item All Klop processes are stable:
\begin{alltt}
\HOLTokenTurnstile{} \HOLConst{STABLE} (\HOLConst{KLOP} \HOLFreeVar{a} \HOLFreeVar{n})\hfill[KLOP_PROP0]
\end{alltt}
\item All transitions of a Klop process must lead to another smaller Klop
  process, and any smaller Klop process must be a possible transition
  of a larger Klop process:
\begin{alltt}
\HOLTokenTurnstile{} \HOLConst{KLOP} \HOLFreeVar{a} \HOLFreeVar{n} --\HOLConst{label} \HOLFreeVar{a}-> \HOLFreeVar{E} \HOLSymConst{\HOLTokenEquiv{}} \HOLSymConst{\HOLTokenExists{}}\HOLBoundVar{m}. \HOLBoundVar{m} \HOLSymConst{\HOLTokenLt{}} \HOLFreeVar{n} \HOLSymConst{\HOLTokenConj{}} \HOLFreeVar{E} \HOLSymConst{=} \HOLConst{KLOP} \HOLFreeVar{a} \HOLBoundVar{m}\hfill[KLOP_PROP1]
\end{alltt}
\item The weak transition version of above property:
\begin{alltt}
\HOLTokenTurnstile{} \HOLConst{KLOP} \HOLFreeVar{a} \HOLFreeVar{n} ==\HOLConst{label} \HOLFreeVar{a}=>> \HOLFreeVar{E} \HOLSymConst{\HOLTokenEquiv{}} \HOLSymConst{\HOLTokenExists{}}\HOLBoundVar{m}. \HOLBoundVar{m} \HOLSymConst{\HOLTokenLt{}} \HOLFreeVar{n} \HOLSymConst{\HOLTokenConj{}} \HOLFreeVar{E} \HOLSymConst{=} \HOLConst{KLOP} \HOLFreeVar{a} \HOLBoundVar{m}\hfill[KLOP_PROP1']
\end{alltt}
\item All Klop processes are distinct according to strong equivalence:
\begin{alltt}
\HOLTokenTurnstile{} \HOLFreeVar{m} \HOLSymConst{\HOLTokenLt{}} \HOLFreeVar{n} \HOLSymConst{\HOLTokenImp{}} \HOLSymConst{\HOLTokenNeg{}}(\HOLConst{KLOP} \HOLFreeVar{a} \HOLFreeVar{m} \HOLSymConst{\ensuremath{\sim}} \HOLConst{KLOP} \HOLFreeVar{a} \HOLFreeVar{n})\hfill[KLOP_PROP2]
\end{alltt}
\item All Klop processes are distinct according to weak equivalence:
\begin{alltt}
\HOLTokenTurnstile{} \HOLFreeVar{m} \HOLSymConst{\HOLTokenLt{}} \HOLFreeVar{n} \HOLSymConst{\HOLTokenImp{}} \HOLSymConst{\HOLTokenNeg{}}(\HOLConst{KLOP} \HOLFreeVar{a} \HOLFreeVar{m} \HOLSymConst{\ensuremath{\approx}} \HOLConst{KLOP} \HOLFreeVar{a} \HOLFreeVar{n})\hfill[KLOP_PROP2']
\end{alltt}
\item Klop functions are one-one:
\begin{alltt}
\HOLTokenTurnstile{} \HOLConst{ONE_ONE} (\HOLConst{KLOP} \HOLFreeVar{a})\hfill{KLOP_ONE_ONE}
\end{alltt}
\end{enumerate}
\end{proposition}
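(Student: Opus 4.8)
The plan is to prove the six clauses in the order listed, since each later item leans on the earlier ones, and to treat the distinctness clauses as the real content. First I would dispatch the stability claim \texttt{KLOP\_PROP0} by induction on $n$: the base case $\mathrm{KLOP}\,a\,0 = \texttt{nil}$ has no root transition at all, and for $\mathrm{KLOP}\,a\,(n{+}1) = \mathrm{KLOP}\,a\,n + (\mathrm{label}\,a).\mathrm{KLOP}\,a\,n$ every root transition comes, by inverting the SOS rules for summation, either from the left summand (non-$\tau$ by the induction hypothesis) or from the prefix $(\mathrm{label}\,a).(\cdots)$, whose only transition carries the visible label $a$; hence no $\tau$ leaves the root.

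Next I would prove the transition characterisation \texttt{KLOP\_PROP1} by induction on $n$, inverting the sum rules $(\mathrm{Sum}_1)$, $(\mathrm{Sum}_2)$ together with the prefix rule \texttt{TRANS\_PREFIX}. In the forward direction an $a$-transition of $\mathrm{KLOP}\,a\,(n{+}1)$ either comes from the left copy, yielding some $m < n$ by the induction hypothesis, or from the prefix, yielding the target $\mathrm{KLOP}\,a\,n$ with $m = n < n{+}1$; the backward direction rebuilds the transition via $(\mathrm{Sum}_1)$ or $(\mathrm{Sum}_2)$. The weak version \texttt{KLOP\_PROP1'} then follows from \texttt{KLOP\_PROP1}, \texttt{KLOP\_PROP0} and \texttt{TRANS\_IMP\_WEAK\_TRANS}: since every Klop process is stable, its root emits no $\tau$, so in any factorisation $E \;\mathrm{EPS}\; E_1 \xrightarrow{a} E_2 \;\mathrm{EPS}\; E'$ the leading $\mathrm{EPS}$ must be reflexive, and because the reached $E_2$ is again a Klop process, hence stable, the trailing $\mathrm{EPS}$ is reflexive too, collapsing the weak transition to the strong one.

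The substantive step, and the main obstacle, is \texttt{KLOP\_PROP2}: distinct Klop processes are not strongly equivalent. The hard part will be to set up the right induction; I would prove by complete induction on $n$ the statement $\forall m.\; m < n \Rightarrow \lnot(\mathrm{KLOP}\,a\,m \sim \mathrm{KLOP}\,a\,n)$. Fixing $m < n$ and assuming for contradiction $\mathrm{KLOP}\,a\,m \sim \mathrm{KLOP}\,a\,n$, the inequality $m < n$ and \texttt{KLOP\_PROP1} supply the transition $\mathrm{KLOP}\,a\,n \xrightarrow{a} \mathrm{KLOP}\,a\,m$; by the defining bisimulation property of $\sim$ (the strong property (*), \texttt{PROPERTY\_STAR}), $\mathrm{KLOP}\,a\,m$ must answer with some $\mathrm{KLOP}\,a\,m \xrightarrow{a} E$ where $E \sim \mathrm{KLOP}\,a\,m$, and \texttt{KLOP\_PROP1} forces $E = \mathrm{KLOP}\,a\,m'$ for some $m' < m$. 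This yields $\mathrm{KLOP}\,a\,m' \sim \mathrm{KLOP}\,a\,m$ with $m' < m < n$, contradicting the induction hypothesis instantiated at $m$. The whole force of the argument is that each $\mathrm{KLOP}\,a\,n$ reaches its smaller equivalent twin in one step, and the only possible matching response forces a strictly smaller counterexample, an impossible infinite descent.

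Finally, \texttt{KLOP\_PROP2'} is obtained by the identical descent with the strong transition and property (*) replaced by their weak counterparts \texttt{KLOP\_PROP1'} and \texttt{OBS\_PROPERTY\_STAR}; alternatively it drops out of \texttt{KLOP\_PROP2} once one notes that on these everywhere-stable processes weak and strong equivalence coincide. The target clause \texttt{KLOP\_ONE\_ONE} is then immediate: if $\mathrm{KLOP}\,a\,m = \mathrm{KLOP}\,a\,n$ with $m \neq n$, say $m < n$, the two processes are syntactically equal and hence strongly equivalent by reflexivity, contradicting \texttt{KLOP\_PROP2}; therefore $m = n$ and $\mathrm{KLOP}\,a$ is injective.
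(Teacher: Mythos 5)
Your proposal is correct and follows essentially the same route as the paper, whose entire proof is the remark that the properties follow ``by induction on the definition of Klop processes and SOS inference rules $(\mathrm{Sum}_1)$ and $(\mathrm{Sum}_2)$.'' The details you supply---rule inversion for stability and for the transition characterisation, stability collapsing both \texttt{EPS} halves of a weak transition, and the complete-induction descent via property~(*) for the two distinctness clauses, from which injectivity is immediate---are sound and are precisely what that one-line sketch leaves implicit.
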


Once we have a recursive function defined on all natural numbers $0, 1,
\ldots$, we can map them into a set containing all these Klop processes,
and the set is countable infinite. On the other side, the number of
all states coming from
two finite-state CCS processes $p$ and $q$ is finite. Choosing from an
infinite set for an element distinct with any subprocess leading from
$p$ and $q$, is always possible.  This result is purely mathematical,
completely falling into basic set theory:
\begin{lemma}
Given an equivalence relation $R$ defined on a type, and two sets $A, B$
of elements in this type, $A$ is finite, $B$ is infinite, and all elements
in $B$ are not equivalent, then there exists an element $k$ in $B$
which is not equivalent with any element in $A$:
\begin{alltt}
\HOLTokenTurnstile{} \HOLConst{equivalence} \HOLFreeVar{R} \HOLSymConst{\HOLTokenImp{}}
   \HOLConst{FINITE} \HOLFreeVar{A} \HOLSymConst{\HOLTokenConj{}} \HOLConst{INFINITE} \HOLFreeVar{B} \HOLSymConst{\HOLTokenConj{}}
   (\HOLSymConst{\HOLTokenForall{}}\HOLBoundVar{x} \HOLBoundVar{y}. \HOLBoundVar{x} \HOLSymConst{\HOLTokenIn{}} \HOLFreeVar{B} \HOLSymConst{\HOLTokenConj{}} \HOLBoundVar{y} \HOLSymConst{\HOLTokenIn{}} \HOLFreeVar{B} \HOLSymConst{\HOLTokenConj{}} \HOLBoundVar{x} \HOLSymConst{\HOLTokenNotEqual{}} \HOLBoundVar{y} \HOLSymConst{\HOLTokenImp{}} \HOLSymConst{\HOLTokenNeg{}}\HOLFreeVar{R} \HOLBoundVar{x} \HOLBoundVar{y}) \HOLSymConst{\HOLTokenImp{}}
   \HOLSymConst{\HOLTokenExists{}}\HOLBoundVar{k}. \HOLBoundVar{k} \HOLSymConst{\HOLTokenIn{}} \HOLFreeVar{B} \HOLSymConst{\HOLTokenConj{}} \HOLSymConst{\HOLTokenForall{}}\HOLBoundVar{n}. \HOLBoundVar{n} \HOLSymConst{\HOLTokenIn{}} \HOLFreeVar{A} \HOLSymConst{\HOLTokenImp{}} \HOLSymConst{\HOLTokenNeg{}}\HOLFreeVar{R} \HOLBoundVar{n} \HOLBoundVar{k}
\end{alltt}
\end{lemma}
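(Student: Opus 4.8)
The plan is to argue by counting equivalence classes. The hypothesis that the elements of $B$ are pairwise non-equivalent means each element of $B$ occupies its own $R$-class, so $B$ meets infinitely many distinct classes, whereas the finite set $A$ can only ``block'' finitely many of them. Concretely, I would isolate the \emph{bad} part of $B$, namely $B_0 = \{\, b \mid b \in B \land \exists a.\ a \in A \land R\,a\,b \,\}$, consisting of those elements of $B$ that are $R$-equivalent to some element of $A$. The goal then reduces to exhibiting a witness in $B \setminus B_0$.

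The crucial step is to prove $\mathtt{FINITE}\ B_0$. For this I would build a function $f$ that sends each $b \in B_0$ to some $a \in A$ with $R\,a\,b$ (using Hilbert choice, since such an $a$ exists by the definition of $B_0$), and show that $f$ is an injection of $B_0$ into $A$. Injectivity is exactly where the equivalence hypothesis and the pairwise-distinctness of $B$ combine: if $f(b_1) = f(b_2) = a$ with $b_1, b_2 \in B_0 \subseteq B$, then $R\,a\,b_1$ and $R\,a\,b_2$, hence $R\,b_1\,b_2$ by the symmetry and transitivity contained in $\mathtt{equivalence}\ R$; but two distinct elements of $B$ are never $R$-related, so $b_1 = b_2$. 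An injection into the finite set $A$ forces $B_0$ to be finite (e.g.\ by $\mathtt{FINITE\_INJ}$).

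Having $\mathtt{INFINITE}\ B$ and $\mathtt{FINITE}\ B_0$ with $B_0 \subseteq B$, I would apply the set-theoretic principle $\mathtt{IN\_INFINITE\_NOT\_FINITE}$ to obtain some $k \in B$ with $k \notin B_0$. Unfolding $k \notin B_0$ together with $k \in B$ yields $\neg(\exists a.\ a \in A \land R\,a\,k)$, i.e.\ $\forall n.\ n \in A \Rightarrow \neg R\,n\,k$, which is precisely the desired conclusion; note that no further appeal to symmetry is needed at this last stage, because $B_0$ was deliberately defined with the arguments of $R$ in the order matching the goal.

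The main obstacle I expect is the finiteness argument rather than the final extraction: setting up the choice function $f$ and proving its injectivity formally, and matching the exact \texttt{pred\_setTheory} lemma that turns ``injection into a finite set'' into finiteness of the domain. Everything else is a routine combination of standard facts about finite and infinite sets, so the mathematical content really rests on the observation that each class of $R$ can absorb at most one element of $B$.
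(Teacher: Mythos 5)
Your proposal is correct, and it takes the mirror image of the paper's route. The paper builds a map in the opposite direction: a function $f$ from the \emph{finite} set $A$ into $B$, sending each $x \in A$ to some element of $B$ equivalent to it if one exists and to an arbitrary element of $B$ otherwise; the image $B_0 = f(A)$ is then finite simply because it is the image of a finite set (no injectivity argument needed), and a witness $k$ is picked from $B \setminus B_0$. The price the paper pays is at the extraction step: since its $B_0$ is not literally ``the elements of $B$ equivalent to something in $A$,'' verifying that $k$ works requires a two-case analysis on whether $f(x)$ was a genuine equivalent of $x$ or an arbitrary default, using transitivity in the first case and the existence of $k$ itself to refute the second. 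You instead define $B_0$ directly as the bad set $\{b \in B \mid \exists a \in A.\ R\,a\,b\}$, push the work into proving that the choice map $B_0 \to A$ is injective (which is where symmetry, transitivity, and the pairwise non-equivalence of $B$ are consumed), and then the final step is immediate because $k \notin B_0$ \emph{is} the goal. The paper's footnote explicitly acknowledges your direction as an alternative (phrased there as a pigeonhole contradiction from non-injectivity of a map $B \to A$) and motivates its own choice by wanting to avoid the Axiom of Choice; in HOL that distinction is largely cosmetic since Hilbert choice is available either way, and indeed the paper's own $f$ also selects equivalents nonconstructively. Both proofs are sound; yours trades a trivial finiteness step for a trivial extraction step, and the equivalence hypotheses do the same work in either version, just relocated.
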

\begin{proof}
  We built an explicit mapping $f$ from $A$ to $B$\footnote{There're
    multiple ways to prove this lemma, a simpler proof is to make a
    reverse mapping from $B$ to the power set of $A$ (or further use
    the Axiom of Choice (AC) to make a mapping from $B$ to $A$), then
    the non-injectivity of this mapping will contradict the fact that
    all elements in the infinite set are distinct. Our proof doesn't
    need AC, and it relies on very simple truths about sets.}, for all
  $x \in A$, $y = f(x)$ if $y \in B$ and $y$ is equivalent with
  $x$. But it's possible that no element in $B$ is equivalent with
  $x$, and in this case we just choose an arbitrary element as
  $f(x)$. Such a mapping is to make sure the range of $f$ always fall
  into $B$.

  Now we can map $A$ to a subset of $B$, say $B_0$, and the
  cardinality of $B_0$ must be equal or smaller than the cardinality
  of $A$, thus finite. Now we choose an element $k$ from the rest part
  of $B$, this element is the desire one, because for any element
  $x \in A$, if it's equivalent with $k$, consider two cases for
  $y = f(x) \in B_0$:
  \begin{enumerate}
  \item $y$ is equivalent with $x$. In this case by transitivity of
    $R$, we have two distinct elements $y$ and $k$, one in $B_0$, the
    other in $B\setminus B_0$, they're equivalent. This violates the
    assumption that all elements in $B$ are distinct.
  \item $y$ is arbitrary chosen because there's no equivalent element
    for $x$ in $B$. But we already know one: $k$.
  \end{enumerate}
  Thus there's no element $x$ (in $A$) which is equivalent with $k$.\qed
\end{proof}

To reason about finite-state CCS, we also need to define the concept
of ``finite-state'':
\begin{definition}{(Definitions related to finite-state CCS)}
\begin{enumerate}
\item Define \emph{reachable} as the RTC of a relation, which
  indicates the existence of a transition between two processes:
\begin{alltt}
\HOLTokenTurnstile{} \HOLConst{Reachable} \HOLSymConst{=} (\HOLTokenLambda{}\HOLBoundVar{E} \HOLBoundVar{E\sp{\prime}}. \HOLSymConst{\HOLTokenExists{}}\HOLBoundVar{u}. \HOLBoundVar{E} --\HOLBoundVar{u}-> \HOLBoundVar{E\sp{\prime}})\HOLSymConst{\HOLTokenSupStar{}}
\end{alltt}
\item The ``nodes'' of a process is the set of all processes reachable
  from it:
\begin{alltt}
\HOLTokenTurnstile{} \HOLConst{NODES} \HOLFreeVar{p} \HOLSymConst{=} \HOLTokenLeftbrace{}\HOLBoundVar{q} \HOLTokenBar{} \HOLConst{Reachable} \HOLFreeVar{p} \HOLBoundVar{q}\HOLTokenRightbrace{}
\end{alltt}
\item A process is finite-state if the set of nodes is finite:
\begin{alltt}
\HOLTokenTurnstile{} \HOLConst{FINITE_STATE} \HOLFreeVar{p} \HOLSymConst{\HOLTokenEquiv{}} \HOLConst{FINITE} (\HOLConst{NODES} \HOLFreeVar{p})
\end{alltt}
\end{enumerate}
\end{definition}
Among many properties of above definitions, we mainly rely on the
following ``obvious'' property on weak transitions:
\begin{proposition}
If $p$ weakly transit to $q$, then $q$ must be in the node set of $p$:
\begin{alltt}
\HOLTokenTurnstile{} \HOLFreeVar{p} ==\HOLFreeVar{u}=>> \HOLFreeVar{q} \HOLSymConst{\HOLTokenImp{}} \HOLFreeVar{q} \HOLSymConst{\HOLTokenIn{}} \HOLConst{NODES} \HOLFreeVar{p}\hfill[WEAK_TRANS_IN_NODES]
\end{alltt}
\end{proposition}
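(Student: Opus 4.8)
The plan is to reduce the statement to the reflexive--transitive--closure (RTC) machinery already available in HOL's \texttt{relationTheory}, exploiting that both \HOLinline{\HOLConst{EPS}} and \HOLinline{\HOLConst{Reachable}} are defined as RTCs of one-step relations. First I would unfold \HOLinline{\HOLConst{WEAK_TRANS}}: from $p \overset{u}{\Longrightarrow} q$ one obtains intermediate processes $E_1$ and $E_2$ with $p \overset{\epsilon}{\Longrightarrow} E_1$, $E_1 \overset{u}{\longrightarrow} E_2$, and $E_2 \overset{\epsilon}{\Longrightarrow} q$. Since by definition of \HOLinline{\HOLConst{NODES}} the goal $q \in \text{NODES}\ p$ is just $\text{Reachable}\ p\ q$, it suffices to assemble this reachability fact from the three pieces above.

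The key observation is that the one-step relation underlying \HOLinline{\HOLConst{EPS}}, namely $E \overset{\tau}{\longrightarrow} E'$, is a special case of the one-step relation underlying \HOLinline{\HOLConst{Reachable}}, namely $\exists u.\; E \overset{u}{\longrightarrow} E'$ (instantiate $u := \tau$). Because forming an RTC is monotone in its base relation, this at once yields a helper lemma $\text{EPS}\ E\ E' \Rightarrow \text{Reachable}\ E\ E'$. Should a ready-made monotonicity lemma be awkward to apply, the same helper follows by a forward induction on \HOLinline{\HOLConst{EPS}} via \texttt{EPS_ind}: the base case is reflexivity of the RTC \HOLinline{\HOLConst{Reachable}}, and the inductive step prepends a single $\tau$-transition --- which is one \HOLinline{\HOLConst{Reachable}} step --- to the reachability given by the induction hypothesis.

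With this helper the remainder is a short chain. From $p \overset{\epsilon}{\Longrightarrow} E_1$ I get $\text{Reachable}\ p\ E_1$; the single transition $E_1 \overset{u}{\longrightarrow} E_2$ is one base step of the RTC, so $\text{Reachable}\ E_1\ E_2$; and from $E_2 \overset{\epsilon}{\Longrightarrow} q$ I get $\text{Reachable}\ E_2\ q$. Transitivity of the RTC then composes these into $\text{Reachable}\ p\ q$, which is exactly $q \in \text{NODES}\ p$. I do not expect any genuine obstacle here; the only step needing a moment's thought is the containment of \HOLinline{\HOLConst{EPS}} in \HOLinline{\HOLConst{Reachable}}, and even that is a one-line consequence of RTC monotonicity, so once the helper lemma is in place the whole argument should be dispatched by a handful of rewriting and resolution steps.
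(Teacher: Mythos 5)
Your proposal is correct and matches the intended argument: the paper treats this property as ``obvious'' and supplies no explicit proof, but given that both $\mathit{EPS}$ and $\mathit{Reachable}$ are defined as reflexive transitive closures, your decomposition --- unfold the weak transition, embed $\mathit{EPS}$ into $\mathit{Reachable}$ by RTC monotonicity (or by induction with \texttt{EPS\_ind}), treat the middle transition as a single base step, and compose by RTC transitivity --- is exactly the natural route the formalization takes. No gaps.
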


Using all above results, now we can easily prove the following finite
version of ``Klop lemma'':
\begin{lemma}{Klop lemma, the finite version}
\label{lem:klop-lemma-finite}
For any two finite-state CCS $p$ and $q$, there exists another process $k$, which
is not weak equivalent with any sub-process weakly transited from $p$
and $q$:
\begin{alltt}
\HOLTokenTurnstile{} \HOLSymConst{\HOLTokenForall{}}\HOLBoundVar{p} \HOLBoundVar{q}.
     \HOLConst{FINITE_STATE} \HOLBoundVar{p} \HOLSymConst{\HOLTokenConj{}} \HOLConst{FINITE_STATE} \HOLBoundVar{q} \HOLSymConst{\HOLTokenImp{}}
     \HOLSymConst{\HOLTokenExists{}}\HOLBoundVar{k}.
       \HOLConst{STABLE} \HOLBoundVar{k} \HOLSymConst{\HOLTokenConj{}} (\HOLSymConst{\HOLTokenForall{}}\HOLBoundVar{p\sp{\prime}} \HOLBoundVar{u}. \HOLBoundVar{p} ==\HOLBoundVar{u}=>> \HOLBoundVar{p\sp{\prime}} \HOLSymConst{\HOLTokenImp{}} \HOLSymConst{\HOLTokenNeg{}}(\HOLBoundVar{p\sp{\prime}} \HOLSymConst{\ensuremath{\approx}} \HOLBoundVar{k})) \HOLSymConst{\HOLTokenConj{}}
       \HOLSymConst{\HOLTokenForall{}}\HOLBoundVar{q\sp{\prime}} \HOLBoundVar{u}. \HOLBoundVar{q} ==\HOLBoundVar{u}=>> \HOLBoundVar{q\sp{\prime}} \HOLSymConst{\HOLTokenImp{}} \HOLSymConst{\HOLTokenNeg{}}(\HOLBoundVar{q\sp{\prime}} \HOLSymConst{\ensuremath{\approx}} \HOLBoundVar{k})\hfill[KLOP_LEMMA_FINITE]
\end{alltt}
\end{lemma}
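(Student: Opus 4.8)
The plan is to instantiate the purely set-theoretic lemma stated just above with weak equivalence as the equivalence relation $R$, the (finite) collection of states reachable from $p$ and $q$ as the finite set $A$, and the (infinite) family of Klop processes as the infinite set $B$ of pairwise non-equivalent elements. First I would fix an arbitrary non-$\tau$ action $a$; this is always possible because in HOL every type is inhabited, so the label type supplies at least one value. With $a$ fixed, the recursive function \texttt{KLOP}~$a$ and all of its proved properties become available.

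Next I would assemble the three hypotheses of the set lemma. Take $A := \texttt{NODES}\ p \cup \texttt{NODES}\ q$; this is finite, since \texttt{FINITE\_STATE}~$p$ and \texttt{FINITE\_STATE}~$q$ each make the corresponding \texttt{NODES} finite and a union of finite sets is finite. Take $B := \{\, \texttt{KLOP}\ a\ n \mid n \in \mathbb{N} \,\}$, the image of the Klop function; because \texttt{KLOP}~$a$ is one-one (\texttt{KLOP\_ONE\_ONE}) this image is in bijection with $\mathbb{N}$ and hence infinite. That the elements of $B$ are pairwise non-weak-equivalent is exactly \texttt{KLOP\_PROP2'}. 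Finally, weak equivalence is an equivalence relation by \texttt{WEAK\_EQUIV\_equivalence}. Feeding these four facts to the lemma yields a process $k \in B$ such that no element of $A$ is weakly equivalent to $k$.

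It then remains to discharge the three conjuncts of the goal for this $k$. Since $k$ lies in $B$, it equals $\texttt{KLOP}\ a\ m$ for some $m$, so $k$ is stable by \texttt{KLOP\_PROP0}. For the second conjunct, suppose $p \overset{u}{\Longrightarrow} p'$; by \texttt{WEAK\_TRANS\_IN\_NODES} we get $p' \in \texttt{NODES}\ p \subseteq A$, so the lemma's conclusion gives $\neg(p' \approx k)$ directly (the orientation $\neg R\,p'\,k$ matches, and the symmetry of $\approx$ would resolve it otherwise). The third conjunct, for weak transitions out of $q$, is identical with $q$ in place of $p$.

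I expect the only real work to be the routine bookkeeping of these reductions—showing that the image of an injective function on $\mathbb{N}$ is infinite and that $A$ is finite—since the genuine mathematical content (an infinite supply of pairwise inequivalent stable processes, together with the counting argument that lets us dodge the finitely many reachable states) has already been packaged into \texttt{KLOP\_PROP0}, \texttt{KLOP\_PROP2'}, \texttt{KLOP\_ONE\_ONE} and the preceding set-theoretic lemma. This lemma is therefore essentially an assembly step rather than a new argument; the single point needing care is aligning the universally quantified variable of the set lemma with the subprocesses arising from the weak transitions of $p$ and $q$, which \texttt{WEAK\_TRANS\_IN\_NODES} supplies.
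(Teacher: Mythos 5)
Your proposal is correct and matches the paper's intended argument exactly: the paper gives no explicit proof beyond ``using all above results,'' and those results are precisely the ones you assemble---the set-theoretic selection lemma instantiated with $R = \,\approx$, $A = \texttt{NODES}\ p \cup \texttt{NODES}\ q$, and $B$ the image of \texttt{KLOP}~$a$, discharged via \texttt{KLOP\_ONE\_ONE}, \texttt{KLOP\_PROP2'}, \texttt{KLOP\_PROP0}, and \texttt{WEAK\_TRANS\_IN\_NODES}. The only detail you gloss over (deriving pairwise inequivalence on $B$ from the ordered statement of \texttt{KLOP\_PROP2'} requires trichotomy on $\mathbb{N}$ plus symmetry of $\approx$) is routine bookkeeping.
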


Combining above lemma, Theorem \ref{thm:new-assum} and Theorem
\ref{thm:easy-part}, we can easily prove the following theorem for finite-state CCS:
\begin{theorem}{(Coarsest congruence contained in $\approx$ for
    finite-state CCS)}
\begin{alltt}
\HOLTokenTurnstile{} \HOLConst{FINITE_STATE} \HOLFreeVar{p} \HOLSymConst{\HOLTokenConj{}} \HOLConst{FINITE_STATE} \HOLFreeVar{q} \HOLSymConst{\HOLTokenImp{}}
   (\HOLFreeVar{p} \HOLSymConst{\HOLTokenObsCongr} \HOLFreeVar{q} \HOLSymConst{\HOLTokenEquiv{}} \HOLSymConst{\HOLTokenForall{}}\HOLBoundVar{r}. \HOLFreeVar{p} \HOLSymConst{+} \HOLBoundVar{r} \HOLSymConst{\ensuremath{\approx}} \HOLFreeVar{q} \HOLSymConst{+} \HOLBoundVar{r})
\end{alltt}
\end{theorem}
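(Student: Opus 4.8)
The plan is to prove the biconditional by treating its two directions separately, since the two halves rely on completely different machinery and only one of them needs the finite-state hypothesis at all.

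For the forward direction, $p \approx^c q \Rightarrow (\forall r.\; p + r \approx q + r)$, I would simply appeal to Theorem~\ref{thm:easy-part} (\texttt{COARSEST\_CONGR\_LR}), which already establishes exactly this implication with no assumptions whatsoever. The \texttt{FINITE\_STATE} premises on $p$ and $q$ play no role here and can be discarded at once.

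For the backward direction, $(\forall r.\; p + r \approx q + r) \Rightarrow p \approx^c q$, the strategy is to discharge the existential hypothesis of Theorem~\ref{thm:new-assum} using the finite-state assumption. Concretely, from \texttt{FINITE\_STATE}~$p$ and \texttt{FINITE\_STATE}~$q$ I would invoke Lemma~\ref{lem:klop-lemma-finite} (\texttt{KLOP\_LEMMA\_FINITE}) to obtain a stable process $k$ that is not weakly equivalent to any subprocess reachable from $p$ or from $q$ by a one-step weak transition. This $k$ is precisely the witness demanded by the antecedent of Theorem~\ref{thm:new-assum}: the three conjuncts \texttt{STABLE}~$k$, then $\forall p'\,u.\; p \overset{u}{\Longrightarrow} p' \Rightarrow \neg(p' \approx k)$, and $\forall q'\,u.\; q \overset{u}{\Longrightarrow} q' \Rightarrow \neg(q' \approx k)$, match verbatim. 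Feeding this witness together with the working hypothesis $\forall r.\; p + r \approx q + r$ into Theorem~\ref{thm:new-assum} yields $p \approx^c q$, which closes the goal.

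At this level the composition is essentially mechanical, so there is no genuine obstacle here; the real difficulty has already been absorbed into the two results being combined (the construction of arbitrarily many pairwise non-bisimilar Klop processes behind \texttt{KLOP\_LEMMA\_FINITE}, and the delicate three-way case analysis on transitions behind Theorem~\ref{thm:new-assum}). The only point I expect to require care is confirming that the logical shape of the existential produced by \texttt{KLOP\_LEMMA\_FINITE} aligns exactly with the hypothesis of Theorem~\ref{thm:new-assum} --- in particular that both quantify over the weak-transition action $u$ and both restrict attention to one-step weak transitions rather than to arbitrary reachability. Once that alignment is checked, the final assembly is a single forward chaining of Theorem~\ref{thm:easy-part}, Lemma~\ref{lem:klop-lemma-finite}, and Theorem~\ref{thm:new-assum}.
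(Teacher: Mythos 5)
Your proposal matches the paper's own proof exactly: the paper states that the theorem follows by ``combining above lemma [\texttt{KLOP\_LEMMA\_FINITE}], Theorem~\ref{thm:new-assum} and Theorem~\ref{thm:easy-part},'' which is precisely your decomposition into the assumption-free easy direction and the hard direction obtained by discharging the existential hypothesis of Theorem~\ref{thm:new-assum} with the Klop witness. The shapes of the existential in \texttt{KLOP\_LEMMA\_FINITE} and the antecedent of Theorem~\ref{thm:new-assum} do align verbatim, so the assembly goes through as you describe.
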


\subsection{General case}

Now we turn to the general case.\footnote{This part in the project is
  not commited into HOL official repository, because of the possibly
  wrong use of \texttt{new_axiom}, full project code can be found at \url{https://github.com/binghe/informatica-public/tree/master/CCS2}}. The number of nodes in the graph of a
CCS process may be infinite, and in worst case such an ``infinite''
may be uncountable or even larger. In such cases, it's not guaranteed to
find a Klop process $K_n, n \in \mathbb{N}$ which is not weak
equivalence with any node (sub-process) in the graph. To formalize
such a proof, we have to use ordinals instead of natural numbers in the
definition of Klop processes.

Unfortunately, due to limitations in higher order logic, the CCS datatype has no way to express infinite sum
of CCS processes, e.g. an constructor
\HOLinline{(\HOLConst{summ} :((\ensuremath{\alpha}, \ensuremath{\beta}) \HOLTyOp{CCS} -> \HOLTyOp{bool}) -> (\ensuremath{\alpha}, \ensuremath{\beta}) \HOLTyOp{CCS})}.\footnote{Michael Norrish, HOL maintainer,
  explains the reason: ``You can’t define a type that recurses under
  the set `constructor' (your summ constructor has (CCS set) as an
  argument).  Ignoring the num set argument, you would then have an
  injective function (the summ constructor itself) from sets of CCS
  values into single CCS values. This ultimately falls foul of
  Cantor’s proof that the power set is strictly larger than the set.''
Michael further asserts that, in theory it's possible to have an constructor of type ``\HOLinline{\HOLTyOp{num} -> (\ensuremath{\alpha}, \ensuremath{\beta}) \HOLTyOp{CCS}}'', in which the sub-type \HOLinline{\HOLTyOp{num}} can be replaced to
\HOLinline{\ensuremath{\gamma} \HOLTyOp{ordinal}} to support injection from ordinals to a set of CCS
processes, then the type variable of ordinals becomes part of the CCS
datatype, e.g. \texttt{('a, 'b, 'c) CCS}. However, due to limitations
in HOL's ordinal theory, we can't further use the CCS type to prove
the existence of an ordinal which is larger than the cardinality of
the set of all CCS processes.}  As a result, we had no choice but to
introduce a new ``axiom'' for reasoning about infinite sums of CCS:
\begin{proposition}{(Infinite sum axiom for CCS)}
\begin{alltt}
\HOLTokenTurnstile{} \HOLSymConst{\HOLTokenExists{}}\HOLBoundVar{f}. \HOLSymConst{\HOLTokenForall{}}\HOLBoundVar{rs} \HOLBoundVar{u} \HOLBoundVar{E}. \HOLBoundVar{f} \HOLBoundVar{rs} --\HOLBoundVar{u}-> \HOLBoundVar{E} \HOLSymConst{\HOLTokenEquiv{}} \HOLSymConst{\HOLTokenExists{}}\HOLBoundVar{r}. \HOLBoundVar{r} \HOLSymConst{\HOLTokenIn{}} \HOLBoundVar{rs} \HOLSymConst{\HOLTokenConj{}} \HOLBoundVar{r} --\HOLBoundVar{u}-> \HOLBoundVar{E}
\end{alltt}
\end{proposition}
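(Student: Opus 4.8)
The plan is to produce the witness $f$ explicitly, as the transition behaviour of an infinitary sum. Within the present CCS datatype no term can serve as $f(rs)$ for an arbitrary $rs$: every term built from the existing constructors is finitely branching, whereas the set $\{(u,E) : \exists r \in rs.\ r \overset{u}{\longrightarrow} E\}$ may be infinite. The witness must therefore come from an enlarged datatype carrying a new constructor $\mathrm{summ} : ((\alpha,\beta)\,\mathrm{CCS} \to \mathrm{bool}) \to (\alpha,\beta)\,\mathrm{CCS}$ that absorbs a whole set of processes, together with one extra SOS clause adjoined to the inductive definition of $\overset{u}{\longrightarrow}$:
\begin{equation*}
\frac{r \in rs \qquad r \overset{u}{\longrightarrow} E}{\mathrm{summ}\,rs \overset{u}{\longrightarrow} E}
\end{equation*}
With this clause available, the required function is simply $f := \mathrm{summ}$, and the goal becomes essentially definitional.

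Concretely I would proceed as follows. First, add the $\mathrm{summ}$ clause to the \texttt{Datatype} declaration and re-issue the SOS definition of $\overset{u}{\longrightarrow}$ with the rule above adjoined, collecting the package-generated \emph{rules}, \emph{cases} (inversion) and strong-induction theorems. Second, specialise the inversion theorem at a $\mathrm{summ}$ term; since the only introduction rule whose left-hand side is headed by $\mathrm{summ}$ is the new one, constructor distinctness collapses the disjunction and yields
\begin{equation*}
\mathrm{summ}\,rs \overset{u}{\longrightarrow} E \iff \exists r.\ r \in rs \wedge r \overset{u}{\longrightarrow} E
\end{equation*}
the right-to-left direction being the new introduction rule and the left-to-right direction being inversion. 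Third, instantiate the existential witness with $\mathrm{summ}$ and rewrite the goal with this equivalence; the universally quantified $rs$, $u$ and $E$ are then discharged uniformly, with no case split.

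The genuine obstacle is the very first step, admitting $\mathrm{summ}$ into the datatype. As Norrish notes in the footnote, a constructor of type $((\alpha,\beta)\,\mathrm{CCS} \to \mathrm{bool}) \to (\alpha,\beta)\,\mathrm{CCS}$ is an injection from the power set of $\mathrm{CCS}$ into $\mathrm{CCS}$ itself, which Cantor's theorem forbids, so the \texttt{Datatype} package cannot build this type. One is tempted to dodge the wall by indexing the sum over a fixed type $I$ and setting $f(rs) = \mathrm{summ}\,g_{rs}$ for some enumeration $g_{rs} : I \to \mathrm{CCS}$ of $rs$; but the goal ranges over \emph{all} $rs$, and any index type satisfies $|I| < |I \to \mathrm{CCS}| \le |\mathrm{CCS}|$, so $I$ can never enumerate the subset $rs = \mathrm{UNIV}$. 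Hence no genuine witness $f$ lives inside a single HOL type, and this is exactly why the biconditional is postulated through \texttt{new\_axiom} rather than derived: the axiom records precisely the transition behaviour that the idealised set-indexed sum above would enjoy, while sidestepping the impossible type it would otherwise require.
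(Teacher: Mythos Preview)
Your analysis is correct and matches the paper's treatment: this statement is not proved but postulated via \texttt{new\_axiom}, precisely because the required \texttt{summ} constructor would embed the power set of CCS into CCS and is therefore rejected by the datatype package. The paper makes exactly this point (citing Norrish's Cantor argument in the footnote) and offers no derivation, so there is nothing further to compare.
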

Above axiom simply asserts the existence of an infinite sum of CCS
processes, and all its transitions come from the transition of any
process in the set. With above axiom, we can then define the infinite
summ operator (through its behavior):
\begin{definition}
\begin{alltt}
\HOLTokenTurnstile{} \HOLConst{summ} \HOLFreeVar{rs} --\HOLFreeVar{u}-> \HOLFreeVar{E} \HOLSymConst{\HOLTokenEquiv{}} \HOLSymConst{\HOLTokenExists{}}\HOLBoundVar{r}. \HOLBoundVar{r} \HOLSymConst{\HOLTokenIn{}} \HOLFreeVar{rs} \HOLSymConst{\HOLTokenConj{}} \HOLBoundVar{r} --\HOLFreeVar{u}-> \HOLFreeVar{E}
\end{alltt}
\end{definition}

Now we can define the full version of Klop function based on ordinals:
\begin{definition}{(Klop function in HOL4, full version)}
\begin{quote}
\begin{alltt}
\HOLConst{Klop} \HOLFreeVar{a} \HOLNumLit{0o} \HOLSymConst{=} \HOLConst{nil}
\HOLConst{Klop} \HOLFreeVar{a} \HOLFreeVar{\ensuremath{\alpha}}\HOLSymConst{\HOLTokenSupPlus{}} \HOLSymConst{=} \HOLConst{Klop} \HOLFreeVar{a} \HOLFreeVar{\ensuremath{\alpha}} \HOLSymConst{+} \HOLConst{label} \HOLFreeVar{a}\HOLSymConst{..}\HOLConst{Klop} \HOLFreeVar{a} \HOLFreeVar{\ensuremath{\alpha}}
\HOLNumLit{0o} \HOLSymConst{\HOLTokenLt{}} \HOLFreeVar{\ensuremath{\alpha}} \HOLSymConst{\HOLTokenConj{}} \HOLConst{islimit} \HOLFreeVar{\ensuremath{\alpha}} \HOLSymConst{\HOLTokenImp{}}
\HOLConst{Klop} \HOLFreeVar{a} \HOLFreeVar{\ensuremath{\alpha}} \HOLSymConst{=} \HOLConst{summ} (\HOLConst{IMAGE} (\HOLConst{Klop} \HOLFreeVar{a}) (\HOLConst{preds} \HOLFreeVar{\ensuremath{\alpha}}))
\end{alltt}
\end{quote}
\end{definition}
Using above definition, we can further prove the following ``cases''
theorem for possible transitions of infinite sums:
\begin{proposition}{(``cases'' theorems for transitions of Klop processes)}
\begin{alltt}
\HOLTokenTurnstile{} (\HOLSymConst{\HOLTokenForall{}}\HOLBoundVar{a}. \HOLConst{Klop} \HOLBoundVar{a} \HOLNumLit{0o} \HOLSymConst{=} \HOLConst{nil}) \HOLSymConst{\HOLTokenConj{}}
   (\HOLSymConst{\HOLTokenForall{}}\HOLBoundVar{a} \HOLBoundVar{n} \HOLBoundVar{u} \HOLBoundVar{E}.
      \HOLConst{Klop} \HOLBoundVar{a} \HOLBoundVar{n}\HOLSymConst{\HOLTokenSupPlus{}} --\HOLBoundVar{u}-> \HOLBoundVar{E} \HOLSymConst{\HOLTokenEquiv{}}
      \HOLBoundVar{u} \HOLSymConst{=} \HOLConst{label} \HOLBoundVar{a} \HOLSymConst{\HOLTokenConj{}} \HOLBoundVar{E} \HOLSymConst{=} \HOLConst{Klop} \HOLBoundVar{a} \HOLBoundVar{n} \HOLSymConst{\HOLTokenDisj{}} \HOLConst{Klop} \HOLBoundVar{a} \HOLBoundVar{n} --\HOLBoundVar{u}-> \HOLBoundVar{E}) \HOLSymConst{\HOLTokenConj{}}
   \HOLSymConst{\HOLTokenForall{}}\HOLBoundVar{a} \HOLBoundVar{n} \HOLBoundVar{u} \HOLBoundVar{E}.
     \HOLNumLit{0o} \HOLSymConst{\HOLTokenLt{}} \HOLBoundVar{n} \HOLSymConst{\HOLTokenConj{}} \HOLConst{islimit} \HOLBoundVar{n} \HOLSymConst{\HOLTokenImp{}}
     (\HOLConst{Klop} \HOLBoundVar{a} \HOLBoundVar{n} --\HOLBoundVar{u}-> \HOLBoundVar{E} \HOLSymConst{\HOLTokenEquiv{}} \HOLSymConst{\HOLTokenExists{}}\HOLBoundVar{m}. \HOLBoundVar{m} \HOLSymConst{\HOLTokenLt{}} \HOLBoundVar{n} \HOLSymConst{\HOLTokenConj{}} \HOLConst{Klop} \HOLBoundVar{a} \HOLBoundVar{m} --\HOLBoundVar{u}-> \HOLBoundVar{E})
\end{alltt}
\end{proposition}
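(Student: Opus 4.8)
The plan is to prove the three conjuncts separately, matching each to one of the three defining equations of the Klop function. The first conjunct, $\mathrm{Klop}\ a\ 0_o = \mathrm{nil}$ for every $a$, is immediate, since it is exactly the base equation of the definition; a single rewrite discharges it, with no case analysis required.

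For the second conjunct (the successor case) I would first rewrite the subject $\mathrm{Klop}\ a\ n^{+}$ with the successor equation, turning it into the binary summation of $\mathrm{Klop}\ a\ n$ with the prefixed process $\mathrm{label}\,a\,.\,\mathrm{Klop}\ a\ n$. Applying the standard transition characterization of a binary sum (the converse-included form of the SOS rules $(\mathrm{Sum}_1)$ and $(\mathrm{Sum}_2)$) splits $\mathrm{Klop}\ a\ n^{+} \overset{u}{\longrightarrow} E$ into the disjunction $\mathrm{Klop}\ a\ n \overset{u}{\longrightarrow} E$ or $\mathrm{label}\,a\,.\,\mathrm{Klop}\ a\ n \overset{u}{\longrightarrow} E$. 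The first disjunct is already the desired right-hand clause. For the second I apply \texttt{TRANS_PREFIX}, which collapses the prefix transition to $u = \mathrm{label}\ a \wedge E = \mathrm{Klop}\ a\ n$; after reordering the disjuncts this is precisely the stated equivalence.

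For the third conjunct (the limit case) the antecedents $0_o < n$ and $\mathrm{islimit}\ n$ license rewriting $\mathrm{Klop}\ a\ n$ with the limit equation, yielding $\mathrm{summ}\,(\mathrm{IMAGE}\,(\mathrm{Klop}\ a)\,(\mathrm{preds}\ n))$. I then apply the behavioural law for \texttt{summ} extracted from the infinite-sum axiom, namely $\mathrm{summ}\ rs \overset{u}{\longrightarrow} E \Leftrightarrow \exists r.\ r \in rs \wedge r \overset{u}{\longrightarrow} E$. It remains only to unfold the set membership: a member of $\mathrm{IMAGE}\,(\mathrm{Klop}\ a)\,(\mathrm{preds}\ n)$ has the form $\mathrm{Klop}\ a\ m$ for some $m \in \mathrm{preds}\ n$, and $m \in \mathrm{preds}\ n$ is definitionally $m < n$. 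Substituting these equivalences collapses the existential over $rs$ into the target existential $\exists m.\ m < n \wedge \mathrm{Klop}\ a\ m \overset{u}{\longrightarrow} E$.

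I expect the main obstacle to be organisational rather than conceptual: assembling the supporting facts — the binary-sum transition cases theorem, \texttt{TRANS_PREFIX}, the \texttt{summ} behavioural law, and the \texttt{IMAGE}/\texttt{preds} unfoldings — and applying them in the correct order. The only genuinely delicate point is the limit case, as it rests on the axiomatised \texttt{summ} operator; I would treat the Klop function as already well-defined (its three equations being produced by the transfinite recursion used to introduce it over HOL's ordinal type), so that each equation is available as a plain rewrite rule. Granted that, no induction on the ordinal is needed anywhere: every conjunct reduces by a direct unfolding against a single defining equation.
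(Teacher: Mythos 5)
Your proposal is correct and matches the derivation the paper intends: the paper states this proposition without an explicit proof, presenting it as an immediate consequence of the three defining equations of \texttt{Klop}, the standard transition-inversion lemmas for binary sum and prefix (\texttt{TRANS\_PREFIX}), and the behavioural law for the axiomatised \texttt{summ} operator together with the \texttt{IMAGE}/\texttt{preds} unfoldings. Your case-by-case unfolding, with no ordinal induction, is exactly the right route.
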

We can also converted them into the following inference rules for transitions
which are easier for use:\footnote{But these rules alone did not completely capture all the
behaviors of Klop processes, because they only talked about the valid
transitions and said nothing about invalid transitions}
\begin{proposition}{(``rules'' theorems for transitions of Klop processes)}
\begin{alltt}
\HOLTokenTurnstile{} (\HOLSymConst{\HOLTokenForall{}}\HOLBoundVar{a} \HOLBoundVar{n}. \HOLConst{Klop} \HOLBoundVar{a} \HOLBoundVar{n}\HOLSymConst{\HOLTokenSupPlus{}} --\HOLConst{label} \HOLBoundVar{a}-> \HOLConst{Klop} \HOLBoundVar{a} \HOLBoundVar{n}) \HOLSymConst{\HOLTokenConj{}}
   \HOLSymConst{\HOLTokenForall{}}\HOLBoundVar{a} \HOLBoundVar{n} \HOLBoundVar{m} \HOLBoundVar{u} \HOLBoundVar{E}.
     \HOLNumLit{0o} \HOLSymConst{\HOLTokenLt{}} \HOLBoundVar{n} \HOLSymConst{\HOLTokenConj{}} \HOLConst{islimit} \HOLBoundVar{n} \HOLSymConst{\HOLTokenConj{}} \HOLBoundVar{m} \HOLSymConst{\HOLTokenLt{}} \HOLBoundVar{n} \HOLSymConst{\HOLTokenConj{}} \HOLConst{Klop} \HOLBoundVar{a} \HOLBoundVar{m} --\HOLBoundVar{u}-> \HOLBoundVar{E} \HOLSymConst{\HOLTokenImp{}}
     \HOLConst{Klop} \HOLBoundVar{a} \HOLBoundVar{n} --\HOLBoundVar{u}-> \HOLBoundVar{E}
\end{alltt}
\end{proposition}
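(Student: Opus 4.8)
The plan is to obtain both inference rules as one-directional specializations of the already-established ``cases'' theorem for transitions of Klop processes. That theorem carries all the genuine content (it was proved from the infinite-sum axiom together with the transfinite definition of $\mathrm{Klop}$), so here it is used purely as a case-splitting rewrite, and each rule is extracted by selecting the appropriate disjunct or witness.

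For the first conjunct I must show $\mathrm{Klop}\,a\,n^{+} \overset{\mathrm{label}\,a}{\longrightarrow} \mathrm{Klop}\,a\,n$ for arbitrary $a$ and $n$. I would invoke the successor clause of the ``cases'' theorem, namely that $\mathrm{Klop}\,a\,n^{+} \overset{u}{\longrightarrow} E$ holds iff $(u = \mathrm{label}\,a \wedge E = \mathrm{Klop}\,a\,n) \vee \mathrm{Klop}\,a\,n \overset{u}{\longrightarrow} E$, instantiated at $u := \mathrm{label}\,a$ and $E := \mathrm{Klop}\,a\,n$. The goal then reduces to the disjunction on the right, whose left disjunct holds by reflexivity. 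Equivalently, one may unfold the defining equation $\mathrm{Klop}\,a\,n^{+} = \mathrm{Klop}\,a\,n + (\mathrm{label}\,a).\mathrm{Klop}\,a\,n$ and apply the SOS rule for prefixing, which gives $(\mathrm{label}\,a).\mathrm{Klop}\,a\,n \overset{\mathrm{label}\,a}{\longrightarrow} \mathrm{Klop}\,a\,n$, followed by $(\mathrm{Sum}_2)$.

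For the second conjunct I assume the four hypotheses $0 < n$, $\mathrm{islimit}\,n$, $m < n$, and $\mathrm{Klop}\,a\,m \overset{u}{\longrightarrow} E$, and must derive $\mathrm{Klop}\,a\,n \overset{u}{\longrightarrow} E$. The limit clause of the ``cases'' theorem is guarded precisely by the side conditions $0 < n$ and $\mathrm{islimit}\,n$, which are available as hypotheses; under them it rewrites the goal to $\exists m'.\; m' < n \wedge \mathrm{Klop}\,a\,m' \overset{u}{\longrightarrow} E$. I would discharge this existential with the witness $m' := m$, whose two conjuncts are exactly the remaining two hypotheses.

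I expect no substantive obstacle: each rule is merely the ``easy'' direction of the corresponding biconditional case, so the proof is a short sequence of instantiation and disjunction-or-witness selection. The single point requiring attention is the bookkeeping of the limit side conditions $0 < n$ and $\mathrm{islimit}\,n$ when specializing the limit case, but these are supplied verbatim by the antecedent of the rule being proved, so the match is immediate.
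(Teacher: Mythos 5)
Your proposal is correct and matches the paper's own (implicit) derivation: the paper presents these rules as direct conversions of the preceding ``cases'' theorem, which is exactly your strategy of instantiating the successor biconditional at $u = \mathrm{label}\,a$, $E = \mathrm{Klop}\,a\,n$ and discharging the limit case's existential with the witness $m$ under the side conditions $0 < n$ and $\mathrm{islimit}\,n$. Nothing further is needed.
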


Using transfinite induction, we can prove the following properties of
the new Klop processes based on ordinals, which is the same with the
finite version of Klop processes:
\begin{alltt}
\HOLTokenTurnstile{} \HOLConst{STABLE} (\HOLConst{Klop} \HOLFreeVar{a} \HOLFreeVar{n})\hfill[Klop_PROP0]
\HOLTokenTurnstile{} \HOLConst{Klop} \HOLFreeVar{a} \HOLFreeVar{n} --\HOLConst{label} \HOLFreeVar{a}-> \HOLFreeVar{E} \HOLSymConst{\HOLTokenEquiv{}} \HOLSymConst{\HOLTokenExists{}}\HOLBoundVar{m}. \HOLBoundVar{m} \HOLSymConst{\HOLTokenLt{}} \HOLFreeVar{n} \HOLSymConst{\HOLTokenConj{}} \HOLFreeVar{E} \HOLSymConst{=} \HOLConst{Klop} \HOLFreeVar{a} \HOLBoundVar{m}\hfill[Klop_PROP1]
\HOLTokenTurnstile{} \HOLConst{Klop} \HOLFreeVar{a} \HOLFreeVar{n} ==\HOLConst{label} \HOLFreeVar{a}=>> \HOLFreeVar{E} \HOLSymConst{\HOLTokenEquiv{}} \HOLSymConst{\HOLTokenExists{}}\HOLBoundVar{m}. \HOLBoundVar{m} \HOLSymConst{\HOLTokenLt{}} \HOLFreeVar{n} \HOLSymConst{\HOLTokenConj{}} \HOLFreeVar{E} \HOLSymConst{=} \HOLConst{Klop} \HOLFreeVar{a} \HOLBoundVar{m}\hfill[Klop_PROP1']
\HOLTokenTurnstile{} \HOLFreeVar{m} \HOLSymConst{\HOLTokenLt{}} \HOLFreeVar{n} \HOLSymConst{\HOLTokenImp{}} \HOLSymConst{\HOLTokenNeg{}}(\HOLConst{Klop} \HOLFreeVar{a} \HOLFreeVar{m} \HOLSymConst{\ensuremath{\sim}} \HOLConst{Klop} \HOLFreeVar{a} \HOLFreeVar{n})\hfill[Klop_PROP2]
\HOLTokenTurnstile{} \HOLFreeVar{m} \HOLSymConst{\HOLTokenLt{}} \HOLFreeVar{n} \HOLSymConst{\HOLTokenImp{}} \HOLSymConst{\HOLTokenNeg{}}(\HOLConst{Klop} \HOLFreeVar{a} \HOLFreeVar{m} \HOLSymConst{\ensuremath{\approx}} \HOLConst{Klop} \HOLFreeVar{a} \HOLFreeVar{n})\hfill[Klop_PROP2']
\HOLTokenTurnstile{} \HOLConst{ONE_ONE} (\HOLConst{Klop} \HOLFreeVar{a})\hfill[Klop_ONE_ONE]
\end{alltt}
The transfinite induction principles we have used here, is the following
two theorems in HOL's \texttt{ordinalTheory}:
\begin{alltt}
\HOLTokenTurnstile{} (\HOLSymConst{\HOLTokenForall{}}\HOLBoundVar{min}. (\HOLSymConst{\HOLTokenForall{}}\HOLBoundVar{b}. \HOLBoundVar{b} \HOLSymConst{\HOLTokenLt{}} \HOLBoundVar{min} \HOLSymConst{\HOLTokenImp{}} \HOLFreeVar{P} \HOLBoundVar{b}) \HOLSymConst{\HOLTokenImp{}} \HOLFreeVar{P} \HOLBoundVar{min}) \HOLSymConst{\HOLTokenImp{}} \HOLSymConst{\HOLTokenForall{}}\HOLBoundVar{\ensuremath{\alpha}}. \HOLFreeVar{P} \HOLBoundVar{\ensuremath{\alpha}}\hfill[ord_inductition]
\HOLTokenTurnstile{} \HOLFreeVar{P} \HOLNumLit{0} \HOLSymConst{\HOLTokenConj{}} (\HOLSymConst{\HOLTokenForall{}}\HOLBoundVar{\ensuremath{\alpha}}. \HOLFreeVar{P} \HOLBoundVar{\ensuremath{\alpha}} \HOLSymConst{\HOLTokenImp{}} \HOLFreeVar{P} \HOLBoundVar{\ensuremath{\alpha}}\HOLSymConst{\HOLTokenSupPlus{}}) \HOLSymConst{\HOLTokenConj{}}
   (\HOLSymConst{\HOLTokenForall{}}\HOLBoundVar{\ensuremath{\alpha}}. \HOLConst{islimit} \HOLBoundVar{\ensuremath{\alpha}} \HOLSymConst{\HOLTokenConj{}} \HOLNumLit{0} \HOLSymConst{\HOLTokenLt{}} \HOLBoundVar{\ensuremath{\alpha}} \HOLSymConst{\HOLTokenConj{}} (\HOLSymConst{\HOLTokenForall{}}\HOLBoundVar{\ensuremath{\beta}}. \HOLBoundVar{\ensuremath{\beta}} \HOLSymConst{\HOLTokenLt{}} \HOLBoundVar{\ensuremath{\alpha}} \HOLSymConst{\HOLTokenImp{}} \HOLFreeVar{P} \HOLBoundVar{\ensuremath{\beta}}) \HOLSymConst{\HOLTokenImp{}} \HOLFreeVar{P} \HOLBoundVar{\ensuremath{\alpha}}) \HOLSymConst{\HOLTokenImp{}}
   \HOLSymConst{\HOLTokenForall{}}\HOLBoundVar{\ensuremath{\alpha}}. \HOLFreeVar{P} \HOLBoundVar{\ensuremath{\alpha}}\hfill[simple_ord_induction]
\end{alltt}
During the proofs of above properties, many basic results on ordinals
were also used, here we omit the proof details.

The next step is to prove the following important result:
\begin{theorem}
For any arbitrary set of CCS processes, it's always possible to
find a Klop process which is not weakly bisimilar with any process in
the set:
\begin{alltt}
\HOLTokenTurnstile{} \HOLSymConst{\HOLTokenForall{}}\HOLBoundVar{a} \HOLBoundVar{A}. \HOLSymConst{\HOLTokenExists{}}\HOLBoundVar{n}. \HOLSymConst{\HOLTokenForall{}}\HOLBoundVar{x}. \HOLBoundVar{x} \HOLSymConst{\HOLTokenIn{}} \HOLBoundVar{A} \HOLSymConst{\HOLTokenImp{}} \HOLSymConst{\HOLTokenNeg{}}(\HOLBoundVar{x} \HOLSymConst{\ensuremath{\approx}} \HOLConst{Klop} \HOLBoundVar{a} \HOLBoundVar{n})
\end{alltt}
\end{theorem}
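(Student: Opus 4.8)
The plan is to prove the statement by a cardinality argument: the family $\mathrm{Klop}\,a\,n$ indexed by ordinals consists of pairwise weakly-inequivalent processes, so it realises an arbitrarily large set of distinct $\approx$-equivalence classes, whereas the fixed set $A$ can meet only boundedly many of them. First I would argue by contraposition: suppose that for \emph{every} ordinal $n$ there is some $x \in A$ with $x \approx \mathrm{Klop}\,a\,n$. Using Hilbert choice in HOL, I would select for each ordinal $n$ a witness $x_n \in A$ satisfying $x_n \approx \mathrm{Klop}\,a\,n$, thereby defining a function $f$ from the ordinals into $A$.

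The heart of the argument is that this map $f$ is \emph{injective}. Suppose $f(m) = f(n)$ for two ordinals $m \neq n$; by trichotomy of ordinals assume without loss of generality $m < n$. From $x_m \approx \mathrm{Klop}\,a\,m$ and $x_n \approx \mathrm{Klop}\,a\,n$ together with $x_m = x_n$, the symmetry and transitivity of weak equivalence (\texttt{WEAK\_EQUIV\_SYM}, \texttt{WEAK\_EQUIV\_TRANS}) give $\mathrm{Klop}\,a\,m \approx \mathrm{Klop}\,a\,n$. But this directly contradicts \texttt{Klop\_PROP2'}, which states $m < n \Rightarrow \neg(\mathrm{Klop}\,a\,m \approx \mathrm{Klop}\,a\,n)$. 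Hence $f$ is injective, and in fact \texttt{Klop\_ONE\_ONE} guarantees the underlying Klop family itself is genuinely indexed without collapse.

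An injection from the ordinals into the fixed set $A$ is impossible: by a Hartogs-style result the ordinals out-size any set, so once the ordinal index type is chosen large enough (larger than the cardinality of the CCS process type in which $A$ lives), no injection into $A$ can exist. This yields the required contradiction, and the contrapositive delivers the desired witness $n$. I expect the main obstacle to be precisely this final cardinality step: unlike the injectivity argument, which is a routine chase through \texttt{Klop\_PROP2'} and the equivalence laws, it requires importing the correct out-sizing theorem from HOL's \texttt{ordinalTheory} and matching the type variable of the ordinals against the type of CCS processes so that the ordinals genuinely exceed $|A|$. Managing this typing discipline — ensuring the Klop family is indexed by ordinals of a sufficiently large type — is the delicate part of the formalisation, and is also what forces the use of the infinite-sum axiom introduced earlier to make the limit-ordinal cases of $\mathrm{Klop}$ well defined.
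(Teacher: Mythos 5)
Your proposal is correct and follows essentially the same route as the paper: both arguments combine the pairwise weak-inequivalence of the Klop processes (\texttt{Klop\_PROP2'}) with symmetry and transitivity of $\approx$ and the out-sizing theorem \texttt{univ\_ord\_greater\_cardinal}, with the delicate point being exactly the one you flag, namely instantiating the ordinal type variable against the CCS process type. The only divergence is bookkeeping: you use Hilbert choice to build an injection from ordinals directly into $A$ and refute its existence, whereas the paper maps each ordinal $n \geq \omega$ to the class $\{y \in A : y \approx \mathrm{Klop}_n\}$ inside $\mathbb{N} \cup \mathcal{P}(A)$ and derives non-injectivity first --- your formulation is in fact closer to the ``pure set-theory'' existence lemma the paper states immediately after its proof.
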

\begin{proof}
Our formal proof depends on the following theorem in HOL's
\texttt{ordinalTheory}:
\begin{alltt}
\HOLTokenTurnstile{} \ensuremath{\cal{U}}(:\ensuremath{\alpha} \HOLTyOp{inf}) \HOLSymConst{\ensuremath{\prec}} \ensuremath{\cal{U}}(:\ensuremath{\alpha} \HOLTyOp{ordinal})\hfill[univ_ord_greater_cardinal]
\end{alltt}
which basically says the existence of ordinals larger than the
cadinality of any set, which is true in set theory. The HOL type
\HOLinline{\ensuremath{\alpha} \HOLTyOp{inf}} means the sum type of \HOLinline{\HOLTyOp{num}} and \HOLinline{\ensuremath{\alpha}}.

Here we must explain that, our formal proofs of mathematics theorems
is not based on \emph{Zermelo--Fraenkel (ZF)}
or \emph{von Neumann--Bernays--G\"{o}del (NBG)} set theory but a special
set theory in higher-order logic. It's know that, the typed logic
implemented in the various HOL systems (including
Isabelle/HOL) is not strong enough to define a type for all possible ordinal values (a
proper class in a set theory like NBG). Instead, there's a type
variable $\alpha$ in ordinals, and to apply above theorem, this type
variable must be connected with CCS datatype. Here is the sketch of our formal proof:

We define a mapping $f$ from ordinals to the union of natural numbers
and the \emph{power set} of $A$ which actually represents all CCS
processes in the graphs of two rooted processes $p$ and $q$:
\begin{equation}
f(n) = \begin{cases}
n & \text{if $n < \omega$,} \\
\{ y \colon y \in B \wedge y \approx Klop_n \} & \text{if $n \geqslant \omega$}.
\end{cases}
\end{equation}
Suppose the proposition is not true, that is, for each process $p$ in $A$,
there's at least one Klop process $k$ which is weakly bisimlar with
$p$. Then above mapping will never map any ordinal to empty set. And
the part for $n < \omega$ is obvious a bijection. And we know the rest
part of mapping is one-one.

Now the theorem \texttt{univ_ord_greater_cardinal} says there's no
injections from ordinals to set $A$, then there must be at least one non-empty
subset of $A$, and the process in it is weakly bisimilar with two
distinct Klop processes. By transitivity of weak equivalence, the two
Klop processes must also be weak equivalent, but this violates the
property 2 (weak version) of Klop processes.\qed
\end{proof}

A pure set-theory theorem sharing the same proof idea but with all
concurrency-theory stuff removed, is the following ``existence'' theorem:
\begin{theorem}
Assuming an arbitrary set $A$ of type $\alpha$, and a one-one mapping
$f$ from ordinals to type $\alpha$. There always exists an ordinal $n$
such that $f(n) \notin A$.
\begin{alltt}
\HOLTokenTurnstile{} \HOLSymConst{\HOLTokenForall{}}(\HOLBoundVar{A} :\ensuremath{\alpha} -> \HOLTyOp{bool}) (\HOLBoundVar{f} :\ensuremath{\alpha} \HOLTyOp{ordinal} -> \ensuremath{\alpha}).
     \HOLConst{ONE_ONE} \HOLBoundVar{f} \HOLSymConst{\HOLTokenImp{}} \HOLSymConst{\HOLTokenExists{}}(\HOLBoundVar{n} :\ensuremath{\alpha} \HOLTyOp{ordinal}). \HOLBoundVar{f} \HOLBoundVar{n} \HOLSymConst{\HOLTokenNotIn{}} \HOLBoundVar{A}
\end{alltt}
\end{theorem}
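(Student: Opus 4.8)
The plan is to reduce the statement to the single cardinality fact \texttt{univ\_ord\_greater\_cardinal}, namely $\mathcal{U}(:\alpha\ \mathrm{inf}) \prec \mathcal{U}(:\alpha\ \mathrm{ordinal})$. The key observation is that the hypothesis \texttt{ONE\_ONE}~$f$ is extremely strong: since $f$ is total and injective on the \emph{entire} type $\alpha\ \mathrm{ordinal}$, it already provides an injection of $\alpha\ \mathrm{ordinal}$ into $\alpha$ before $A$ is even mentioned. First I would negate the goal and assume $f(n) \in A$ for every ordinal $n$; this places the range of $f$ inside $A$, hence inside the universe $\mathcal{U}(:\alpha)$, so that $f$ witnesses $\mathcal{U}(:\alpha\ \mathrm{ordinal}) \preceq \mathcal{U}(:\alpha)$. (In fact the containment in $A$ is inessential: totality and injectivity of $f$ already give this inequality, so \texttt{ONE\_ONE}~$f$ is on its own unsatisfiable and the theorem is vacuously true. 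I would keep the negated-goal phrasing only to mirror the structure of the preceding concurrency proof.)

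Next I would bridge $\alpha$ and $\alpha\ \mathrm{inf}$. Since $\alpha\ \mathrm{inf}$ is by definition the sum type $\mathrm{num} + \alpha$, the constructor $\mathrm{INR} : \alpha \to \alpha\ \mathrm{inf}$ is one-one, giving $\mathcal{U}(:\alpha) \preceq \mathcal{U}(:\alpha\ \mathrm{inf})$. Transitivity of the cardinal preorder then yields $\mathcal{U}(:\alpha\ \mathrm{ordinal}) \preceq \mathcal{U}(:\alpha\ \mathrm{inf})$.

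This stands in direct contradiction with \texttt{univ\_ord\_greater\_cardinal}, which asserts the strict inequality $\mathcal{U}(:\alpha\ \mathrm{inf}) \prec \mathcal{U}(:\alpha\ \mathrm{ordinal})$: one cannot have both $X \preceq Y$ and $Y \prec X$, since the strict relation $\prec$ is by definition $\preceq$ together with the failure of the reverse $\preceq$ (equivalently, by Cantor--Schr\"oder--Bernstein). The resulting contradiction discharges the negated goal and produces the required ordinal $n$ with $f(n) \notin A$.

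I expect the only real difficulty to be bookkeeping inside HOL's cardinal library rather than anything mathematical. Specifically, the obstacle will be to (i) convert \texttt{ONE\_ONE}~$f$ into the library's $\preceq$ relation between the two universe sets with the type variable $\alpha$ threaded consistently, (ii) locate or re-prove that $\mathrm{num} + \alpha$ admits the expected injection from $\alpha$, and (iii) find the exact lemmas giving transitivity of $\preceq$ and the incompatibility of $\preceq$ with the reverse $\prec$, so that \texttt{univ\_ord\_greater\_cardinal} can be instantiated at the same $\alpha$ and fired to close the goal.
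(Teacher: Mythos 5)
Your proposal is correct and takes essentially the same route the paper intends: the paper gives no explicit proof for this theorem, but it presents it as ``sharing the same proof idea'' as the preceding Klop-process theorem, namely a reduction by contradiction to \texttt{univ\_ord\_greater\_cardinal}, and its closing remark that in the problematic case ``the mapping $f$ can't be one-one'' confirms exactly the vacuous-truth reading you identify (an injection $\alpha\ \mathrm{ordinal} \to \alpha$ composed with $\mathrm{INR} : \alpha \to \mathrm{num} + \alpha$ would contradict $\mathcal{U}(:\alpha\ \mathrm{inf}) \prec \mathcal{U}(:\alpha\ \mathrm{ordinal})$). Your observation that the membership in $A$ is inessential and that \texttt{ONE\_ONE}~$f$ is unsatisfiable on its own is a slight sharpening of the paper's remark, not a different argument.
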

This result is elegant but unusual, because that ``arbitrary set''
can simply be the universe of all values of type $\alpha$, how can
there be another value (of the same type) not in it? Our answer is, in
such cases the mapping $f$ can't be one-one, and a false assumption
will lead to any conclusion in a theorem.\footnote{Above
theorem also indicates that, no matter how ``complicated'' a CCS process is, it's impossible
for it to contain all possible equivalence classes of CCS processes as
its sub-processes after certain transitions.}

Now we're ready to prove the following full version of ``Klop lemma'':
\begin{lemma}{(Klop lemma, the full version)}
\label{lem:klop-lemma-full}
For any two CCS processes $g$ and $h$, there exists another process
$k$ which is not weakly equivalent with any sub-process weakly transited from $g$
and $h$:
\begin{alltt}
\HOLTokenTurnstile{} \HOLSymConst{\HOLTokenForall{}}\HOLBoundVar{p} \HOLBoundVar{q}.
     \HOLSymConst{\HOLTokenExists{}}\HOLBoundVar{k}.
       \HOLConst{STABLE} \HOLBoundVar{k} \HOLSymConst{\HOLTokenConj{}} (\HOLSymConst{\HOLTokenForall{}}\HOLBoundVar{p\sp{\prime}} \HOLBoundVar{u}. \HOLBoundVar{p} ==\HOLBoundVar{u}=>> \HOLBoundVar{p\sp{\prime}} \HOLSymConst{\HOLTokenImp{}} \HOLSymConst{\HOLTokenNeg{}}(\HOLBoundVar{p\sp{\prime}} \HOLSymConst{\ensuremath{\approx}} \HOLBoundVar{k})) \HOLSymConst{\HOLTokenConj{}}
       \HOLSymConst{\HOLTokenForall{}}\HOLBoundVar{q\sp{\prime}} \HOLBoundVar{u}. \HOLBoundVar{q} ==\HOLBoundVar{u}=>> \HOLBoundVar{q\sp{\prime}} \HOLSymConst{\HOLTokenImp{}} \HOLSymConst{\HOLTokenNeg{}}(\HOLBoundVar{q\sp{\prime}} \HOLSymConst{\ensuremath{\approx}} \HOLBoundVar{k})\hfill[KLOP_LEMMA]
\end{alltt}
\end{lemma}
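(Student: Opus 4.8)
The plan is to derive this full version of Klop's lemma as an almost immediate corollary of the preceding existence theorem --- the one asserting that for every action $a$ and every set $A$ of processes there is an ordinal $n$ with $\forall x \in A.\ \neg(x \approx \mathrm{Klop}\ a\ n)$ --- exactly mirroring the way the finite version (Lemma~\ref{lem:klop-lemma-finite}) followed from the purely set-theoretic selection lemma. All the genuine mathematical content has already been absorbed into that ordinal-cardinality theorem, so what remains here is essentially a matter of packaging the result in the shape demanded by the goal.

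First I would fix an arbitrary non-$\tau$ action, say $a$; this is always possible in our setting because every HOL type is inhabited, so at least one label (hence one action $\mathrm{label}\ a$) exists. Next I would collect all one-step weak descendants of the two given processes into a single set
\[
A \;=\; \{\, r \mid \exists u.\ p \overset{u}{\Longrightarrow} r \,\}\ \cup\ \{\, r \mid \exists u.\ q \overset{u}{\Longrightarrow} r \,\}.
\]
Applying the preceding existence theorem to this $a$ and this $A$ yields an ordinal $n$ such that no member of $A$ is weakly equivalent to $\mathrm{Klop}\ a\ n$. I would then take the required witness to be $k := \mathrm{Klop}\ a\ n$.

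It only remains to discharge the three conjuncts of the goal for this $k$. Stability, $\mathrm{STABLE}\ k$, is immediate from property \texttt{Klop\_PROP0}. For the second conjunct, any $p'$ and $u$ with $p \overset{u}{\Longrightarrow} p'$ satisfy $p' \in A$ by the very construction of $A$, whence $\neg(p' \approx k)$ follows directly from the defining property of $n$; the third conjunct, concerning the weak descendants of $q$, is completely symmetric. Combining the three gives the full statement.

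The step I expect to be least routine is not in this lemma at all but in the existence theorem on which it rests, which is where the ordinal machinery (\texttt{univ\_ord\_greater\_cardinal}) together with the one-one-ness of $\mathrm{Klop}\ a$ and property~2 (weak version) of Klop processes do the real work. Granting that theorem, the only care needed here is to define the descendant set $A$ so that its membership predicate matches the hypotheses ``$p \overset{u}{\Longrightarrow} p'$'' and ``$q \overset{u}{\Longrightarrow} q'$'' of the goal verbatim, ensuring that the final instantiation goes through without any reshuffling of the quantifiers.
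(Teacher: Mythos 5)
Your proposal is correct, but it is organized differently from the paper's proof. The paper first forms the union of all nodes (reachable sub-processes) of $p$ and $q$ and then case-splits on whether that union is finite: in the finite case it falls back on the finite version of the lemma (Lemma~\ref{lem:klop-lemma-finite}), which uses the natural-number-indexed \texttt{KLOP} processes and needs no axiom, and only in the infinite case does it invoke the ordinal-indexed \texttt{Klop} processes and the cardinality argument. You instead apply the ordinal existence theorem uniformly to the set of one-step weak descendants, with no finiteness case split. Since that theorem is stated for an arbitrary set $A$, your route goes through, and it is arguably cleaner: your descendant set is tailored so that its membership predicate matches the goal's hypotheses verbatim, whereas the paper passes through \texttt{NODES} and \texttt{WEAK\_TRANS\_IN\_NODES}. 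What the paper's case split buys is that the finite branch reuses already-proved, axiom-free machinery, keeping the axiomatized infinite-sum construction out of the picture whenever the descendant set happens to be finite; your uniform argument leans on the axiomatized \texttt{Klop} definition (through its limit-ordinal clause) in every case. Both arguments establish the same statement.
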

\begin{proof}
We consider the union \texttt{nodes} of all nodes (sub-processes) from $g$ and $h$. If the union is
finite, we use previous finite version of this lemma (and the finite
version of Klop processes which is well defined in HOL) to get the
conclusion.  If the union is infinite, we turn to use the full version
of Klop process defined (as axiom) on ordinals, and use the previous
theorems on ordinals to assert the existence of an ordinal $n$ such
that $Klop_n$ is not weakly bisimiar with any node in \texttt{nodes}. \qed
\end{proof}

And finally, with \emph{all above lemmas, theorems, definitions, plus
  one axiomatized definition of infinite Klop process on ordinals}, we
have successfully proved the following elegant result without any assumption:
\begin{theorem}{(Coarsest congruence contained in $\approx$, the final
    version)}
For any processes $p$ and $q$, \HOLinline{\HOLFreeVar{p} \HOLSymConst{\HOLTokenObsCongr} \HOLFreeVar{q}} if and only if
\HOLinline{\HOLFreeVar{p} \HOLSymConst{+} \HOLFreeVar{r} \HOLSymConst{\ensuremath{\approx}} \HOLFreeVar{q} \HOLSymConst{+} \HOLFreeVar{r}} for all processes $r$:
\begin{alltt}
\HOLTokenTurnstile{} \HOLSymConst{\HOLTokenForall{}}\HOLBoundVar{p} \HOLBoundVar{q}. \HOLBoundVar{p} \HOLSymConst{\HOLTokenObsCongr} \HOLBoundVar{q} \HOLSymConst{\HOLTokenEquiv{}} \HOLSymConst{\HOLTokenForall{}}\HOLBoundVar{r}. \HOLBoundVar{p} \HOLSymConst{+} \HOLBoundVar{r} \HOLSymConst{\ensuremath{\approx}} \HOLBoundVar{q} \HOLSymConst{+} \HOLBoundVar{r}\hfill[COARSEST_CONGR_FULL]
\end{alltt}
\end{theorem}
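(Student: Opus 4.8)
The plan is to prove the biconditional by establishing its two directions separately, reducing each to a result already in hand rather than re-examining transitions directly. So I would first fix arbitrary processes $p$ and $q$ and split the goal $p \approx^c q \iff (\forall r.\ p + r \approx q + r)$ into its forward and backward halves.

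For the forward direction, $p \approx^c q \implies \forall r.\ p + r \approx q + r$, there is nothing new to do: this is exactly Theorem~\ref{thm:easy-part} (\texttt{COARSEST\_CONGR\_LR}), obtained earlier by combining \texttt{OBS\_CONGR\_IMP\_WEAK\_EQUIV} with \texttt{OBS\_CONGR\_SUBST\_SUM\_R} (or, alternatively, \texttt{OBS\_CONGR\_IMP\_WEAK\_CONGR} with \texttt{WEAK\_CONGR\_IMP\_SUM\_EQUIV}). I would simply discharge the forward implication by that theorem.

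For the backward direction, $(\forall r.\ p + r \approx q + r) \implies p \approx^c q$, my strategy is to supply the existential witness demanded by Theorem~\ref{thm:new-assum}. That theorem proves precisely this implication under the hypothesis that there is a stable process $k$ that is not weakly bisimilar to any one-step weak-transition descendant of either $p$ or $q$. The full Klop lemma (Lemma~\ref{lem:klop-lemma-full}, \texttt{KLOP\_LEMMA}) asserts, for arbitrary $p$ and $q$, the existence of exactly such a $k$. Hence I would instantiate \texttt{KLOP\_LEMMA} at the given $p$ and $q$ to obtain the witness $k$ together with its three properties (stability, and non-equivalence to the weak descendants of $p$ and of $q$), then feed these directly into Theorem~\ref{thm:new-assum} to conclude $p \approx^c q$ from the summation hypothesis. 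This mirrors the assembly of the finite-state version from \texttt{KLOP\_LEMMA\_FINITE}, Theorem~\ref{thm:new-assum}, and Theorem~\ref{thm:easy-part}.

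The main obstacle is not in this final assembly: the conclusion of \texttt{KLOP\_LEMMA} is stated in the exact shape consumed by the antecedent of Theorem~\ref{thm:new-assum} (same quantifier structure over $p'$, $q'$, $u$, and the same stability clause), so the two compose without any intermediate reformulation, and the theorem follows by a direct match of hypotheses. The real difficulty, already absorbed into the lemmas I am assuming, lives in the full Klop lemma itself: constructing a distinguishing process $k$ for processes whose reachable state space may be uncountable. That is where the ordinal-indexed Klop processes, the cardinality result \texttt{univ\_ord\_greater\_cardinal}, and the axiomatized infinite-sum operator \texttt{summ} are essential. At the top level, therefore, the only thing I would need to verify carefully is that these two previously proved statements align syntactically, which they do, so that the capstone theorem is genuinely a one-line combination of the deep machinery developed beforehand.
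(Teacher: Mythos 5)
Your proposal is correct and matches the paper's own (implicit) assembly exactly: the forward direction is discharged by \texttt{COARSEST\_CONGR\_LR}, and the backward direction instantiates \texttt{KLOP\_LEMMA} to obtain the witness $k$ consumed verbatim by the antecedent of the theorem with the new assumptions, just as the paper does for the finite-state case with \texttt{KLOP\_LEMMA\_FINITE}. No gaps; the deep work indeed lives entirely in the previously established lemmas.
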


Going back to the congruence theory presented in previous section. Now
we can conclude that, the three relations (observation congruence,
weak bisimulation congruence, and the temporarily defined ``sum
equivalence'') coincide:
\begin{theorem}{(The equivalence of three relations)}
\begin{alltt}
\HOLTokenTurnstile{} \HOLConst{OBS_CONGR} \HOLSymConst{=} \HOLConst{SUM_EQUIV}
\HOLTokenTurnstile{} \HOLConst{OBS_CONGR} \HOLSymConst{=} \HOLConst{WEAK_CONGR}
\end{alltt}
\end{theorem}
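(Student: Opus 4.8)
The plan is to derive both equalities directly from \texttt{COARSEST_CONGR_FULL} together with the two inclusion lemmas established in the congruence-theory section, with no new concurrency-theoretic work required. Because relations here carry the type \HOLinline{\ensuremath{\alpha} -> \ensuremath{\alpha} -> \HOLTyOp{bool}}, an equality \texttt{R1 = R2} between relations reduces, by functional extensionality applied twice, to proving \HOLinline{\HOLFreeVar{R\sb{\mathrm{1}}} \HOLFreeVar{p} \HOLFreeVar{q} \HOLSymConst{\HOLTokenEquiv{}} \HOLFreeVar{R\sb{\mathrm{2}}} \HOLFreeVar{p} \HOLFreeVar{q}} for arbitrary \HOLinline{\HOLFreeVar{p}}, \HOLinline{\HOLFreeVar{q}}. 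This is the shape into which I would cast both goals.

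First I would dispatch \texttt{OBS_CONGR = SUM_EQUIV}. Unfolding the definition of \texttt{SUM_EQUIV}, the pointwise goal becomes \HOLinline{\HOLFreeVar{p} \HOLSymConst{\HOLTokenObsCongr} \HOLFreeVar{q} \HOLSymConst{\HOLTokenEquiv{}} \HOLSymConst{\HOLTokenForall{}}\HOLBoundVar{r}. \HOLFreeVar{p} \HOLSymConst{+} \HOLBoundVar{r} \HOLSymConst{\ensuremath{\approx}} \HOLFreeVar{q} \HOLSymConst{+} \HOLBoundVar{r}}, which is verbatim the statement of \texttt{COARSEST_CONGR_FULL}. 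So this first equality is merely a restatement of the final coarsest-congruence theorem once \texttt{SUM_EQUIV} is unfolded and relation extensionality is applied.

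For \texttt{OBS_CONGR = WEAK_CONGR}, I would chase a cycle of inclusions. Observation congruence is finer than weak bisimulation congruence by \texttt{OBS_CONGR_IMP_WEAK_CONGR}; weak bisimulation congruence is finer than sum equivalence by \texttt{WEAK_CONGR_IMP_SUM_EQUIV}; and sum equivalence is finer than observation congruence by the (hard) right-to-left direction of \texttt{COARSEST_CONGR_FULL}, equivalently by the \texttt{OBS_CONGR = SUM_EQUIV} just proved. Since this chain of inclusions begins and ends at \texttt{OBS_CONGR}, every relation appearing in it must coincide. In particular, \texttt{OBS_CONGR} is finer than \texttt{WEAK_CONGR} and, via \texttt{WEAK_CONGR} $\subseteq$ \texttt{SUM_EQUIV} $\subseteq$ \texttt{OBS_CONGR}, also coarser, so the two are equal.

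The point worth stressing is that there is no real obstacle left at this stage: all the mathematical difficulty has already been absorbed into \texttt{COARSEST_CONGR_FULL}, whose proof required the Klop-process construction and the ordinal argument of the previous subsections. What remains is pure bookkeeping — unfolding \texttt{SUM_EQUIV}, invoking extensionality in the right form, and orienting the three inclusion lemmas so that the cycle closes. I therefore expect both equalities to follow from a short sequence of rewrites and applications of the cited lemmas.
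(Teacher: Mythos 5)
Your proposal is correct and matches the paper's (only sketched) argument exactly: the paper likewise obtains \texttt{OBS\_CONGR = SUM\_EQUIV} by combining the easy direction with the hard direction of \texttt{COARSEST\_CONGR\_FULL}, and closes the inclusion cycle \texttt{OBS\_CONGR} $\subseteq$ \texttt{WEAK\_CONGR} $\subseteq$ \texttt{SUM\_EQUIV} $\subseteq$ \texttt{OBS\_CONGR} to get \texttt{OBS\_CONGR = WEAK\_CONGR}. All the substance is indeed already in \texttt{COARSEST\_CONGR\_FULL} and the two inclusion lemmas, so nothing further is needed.
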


\section{Conclusions}

In this project, we have done a further formalization of the process
algebra CCS in HOL4. Most results on strong equivalence, weak
equivalence and observation congruence were all formally proved.
A rather complete theory of congruence (for CCS) is also presented in
this project.

The project began with an old formalization of
CCS in Hol88 by Monica Nesi, then it's extended with formal proofs of
deep lemmas and theorems, including Hennessy Lemma, Deng Lemma, and
the ``coarsest congruence containing in weak equivalence'' theorem.
We believe these work have shown the possibility to use this project
as a research basis for discovering new theorems about CCS.

For the last theorem, we deeply investigated various versions of the
theorem and their proofs in original papers.
But for the most general
case, an infinite sum of CCS processes must be used during the proof,
and we had to add an axiom to assert the existence of infinite
sums. Without this axiom, the best result we could get is only for
finite-state CCS. The consistency of HOL logic after adding this axiom
is yet to be checked. On the other side, complete removing of this
axiom seems impossible in scope of higher order logic.

We have extensively used HOL's rich theories to simplify the
development efforts in this project, notable ones include:
\texttt{relationTheory} (for RTC) and \texttt{ordinalTheory} (for Klop
function defined on ordinals). Now we use HOL's
built-in co-inductive relation support to define strong and weak
equivalence, as the result many intermediate results were not needed
thus removed from the old scripts.

Some missing pieces include: the decision procedures for bisimilarity
checking (strong, weak and rooted week), HML and example models. For
these missing pieces, a further project is already in plan.

Thanks to Prof. Roberto Gorrieri, who taught CCS and LTS theory to the
author, and his supports on continuing this HOL-CCS project as exam
project of his course.

Thanks to Prof. Monica Nesi for finding and sending the old HOL88
proof scripts to the author.

Thanks to Prof. Andrea Asperti, who taught the interactive theorem proving
techniques to the author, although it's in another different theorem
prover (Matita).

Thanks to people from HOL community (Thomas Tuerk, Michael Norrish,
Ramana Kumar and many others) for resolving issues and doubts that the
author met when using HOL theorem prover.

The paper is written in \LaTeX and LNCS template, with theorems
generated automatically by HOL's \TeX
exporting module (\texttt{EmitTex}) from the proof scripts.

\bibliography{hol-ccs2}{}
\bibliographystyle{splncs}
\end{document}